\documentclass[USenglish, a4paper, thm-restate,numberwithinsect, cleveref]{lipics-v2021} 
\usepackage[dvipsnames,svgnames,x11names]{xcolor}
\usepackage[utf8]{inputenc}
\usepackage{tabularx}
\usepackage{amsthm}
\usepackage{amsmath}
\usepackage{amssymb}
\usepackage{amsfonts}
\usepackage{mathtools}
\usepackage{enumerate}
\usepackage[sort,numbers]{natbib}
\usepackage{tikz}
\usepackage{xspace}
\usepackage{todonotes}
\usepackage{comment}

\usepackage{multirow}
\usepackage{booktabs}
\usepackage{makecell}
\usepackage{caption}

\usepackage{tikz}

\usetikzlibrary{arrows,decorations.pathmorphing,decorations.pathreplacing,backgrounds,positioning,fit,matrix}
\usetikzlibrary{shapes,calc,patterns,arrows.meta}
\tikzset{
	vert/.style={circle,inner sep=1.5,fill=white,draw=black,minimum size=.3cm},
	vert2/.style={inner sep=1.5,fill=white,draw=black,minimum size=.3cm},
    dummy/.style={circle,fill=black,draw=black,inner sep=2.5},
	edge/.style={color=black, line width=1pt},
	diredge/.style={->,>={Stealth[width=8pt,length=8pt]},color=black, line width=1pt},
	timelabel/.style={circle,inner sep=1.2,fill=white,font=\footnotesize, text centered},
	timelabel2/.style={circle,inner sep=1.2,fill=lightgray!40!white,font=\footnotesize, text centered},
	wave/.style={decorate,decoration={coil,aspect=0}},
	dirwave/.style={->, >={Stealth[width=8pt,length=8pt]},decorate,decoration={coil,aspect=0}},
	diredge2/.style={->,>={Stealth[width=8pt,length=8pt]}}
}

\theoremstyle{definition}

\usepackage{soul}

\crefname{figure}{Figure}{Figures}

\newcommand{\commentout}[1]{}

\newcommand{\problemdef}[3]{
\noindent ~~\fbox{\parbox{12cm}{
      #1\medskip\\
      \textrm{Input:}     #2 \\
      \textrm{Question:}  #3
    }
  }\smallskip
}

\newcommand{\G}{\ensuremath{\mathcal{G}}\xspace}
\newcommand{\TC}{\textsf{TC}\xspace}
\newcommand{\MTS}{\textsc{Minimum Temporal Spanner}\xspace}
\DeclareMathOperator*{\argmax}{arg\,max}

\pdfoutput=1
\hideLIPIcs
\nolinenumbers

\title{Minimum Temporal Spanners in Happy Graphs}

\author{Arnaud Casteigts}{Department of Computer Science, University of Geneva, Switzerland}{arnaud.casteigts@unige.ch}{https://orcid.org/0000-0002-7819-7013}{Supported by the French ANR project TEMPOGRAL (ANR-22-CE48-0001) and Swiss NSF project RECAPT (200021-236640).}

\author{Hendrik Molter}{Department of Computer Science, Ben-Gurion~University~of~the~Negev, Beer-Sheva, Israel}{molterh@post.bgu.ac.il}{https://orcid.org/0000-0002-4590-798X}{Supported by the Israel Science Foundation, grant nr.~1470/24, by the European Union's Horizon Europe research and innovation programme under grant agreement 949707, and by the European Research Council, grant nr.~101039913 (PARAPATH).}

\author{Meirav Zehavi}{Department of Computer Science, Ben-Gurion~University~of~the~Negev, Beer-Sheva, Israel}{meiravze@bgu.ac.il}{https://orcid.org/0000-0002-3636-5322}{Supported by the Israel Science Foundation, grant nr.~1470/24, and by the European Research Council, grant nr.~101039913 (PARAPATH).}

\authorrunning{Arnaud Casteigts, Hendrik Molter, Meirav Zehavi}
\Copyright{Arnaud Casteigts, Hendrik Molter, Meirav Zehavi}

\ccsdesc[500]{Theory of computation~Graph algorithms analysis}
\ccsdesc[500]{Mathematics of computing~Discrete mathematics}

\keywords{Temporal graphs, temporal spanners, happy temporal graphs, NP-hardness, parameterized complexity, vertex cover}

\begin{document}
\maketitle

% \author{Arnaud~Casteigts}
% \author{Hendrik~Molter\thanks{Supported by the ERC, grant number 949707.}}
% \author{Meirav~Zehavi}

% \affil{\small Department of Computer Science, University of Geneva, Switzerland\\ \texttt{arnaud.casteigts@unige.ch}\\
% \small Department of Computer Science, Ben-Gurion~University~of~the~Negev, 
% Beer-Sheva, 
% Israel\\ \texttt{molterh@post.bgu.ac.il, meiravze@bgu.ac.il}}

% \date{}

\begin{abstract}
Temporal graphs have edge sets that change over discrete time steps. Such graphs are temporally connected (TC) if all pairs of vertices can reach each other using paths that traverse the edges in a time-respecting way (temporal paths). Given a TC temporal graph it, a natural question is to find a minimum spanning subgraph of it that preserves temporal connectivity. These structures, known as temporal spanners, are fundamental and their properties (especially size) have been studied thoroughly in the past decade. In particular, the problem of minimizing the size of a temporal spanner is known to be hard. However, the existing results establish hardness for several incomparable settings and versions of the problem.

In this article, we unify and strengthen these results by showing that this problem is NP-hard even on temporal graphs that are simple and proper (also known as ``happy''), i.e., where every edge appears only one time, and a vertex cannot be incident to several edges simultaneously. Proving hardness in this extremely restricted setting implies, at once, that the problem is NP-hard for all the previously considered settings and versions of the problem, resolving Open Question 4 in [Casteigts et al.~TCS, 2024]. We also initiate the parameterized study of this problem, showing that in the happy setting, the problem can be solved in polynomial time if the underlying graph has a constant-size vertex cover, this result being actually the first positive result on temporal spanners in general. We also show that in the non-happy setting, the problem is W[1]-hard when parameterized by the feedback vertex number of the underlying graph.
\end{abstract}

\section{Introduction}

Minimum connected spanning subgraphs, also known as \emph{spanning trees}, are one of the most fundamental primitives in algorithmic graph theory. The development of algorithms to compute these structures goes back to the early 20th century and a long sequence of improvements followed until the year 2000. Computing spanning trees serve as an important subrouting for a multitude of problems from various application areas.

\emph{Temporal graphs} are a canonical generalization of (static) graphs, that capture changes over time in their edge sets. More specifically, their vertex set remains unchanged while for every (discrete) time step in some finite lifetime, the edge set may be different.
This adds a new layer of complexity to connectivity-related problems: paths in a temporal graph need to respect time, that is, they have to use edges at non-decreasing (or increasing) time steps. Reachability based on temporal paths is neither symmetric nor transitive, a major difference to the static setting. For this reason, a large number of reachability-related problems have been studied in the temporal setting, where they are typically much harder computationally than their static counterparts.

This work is dedicated to the problem of computing so-called \emph{minimum temporal spanners}, which can be seen as the temporal analog of minimum spanning trees. A temporal graph is \emph{temporally connected} (\TC) if all pairs of vertices can reach each other via temporal paths. The goal is to compute, given a \TC\ temporal graph, a minimum-size spanning subgraph that remains \TC. This problem (\MTS) is known to be NP-hard in diverse settings~\cite{AkridaGMS17,AF16,CC23}, and a considerable amount research has been dedicated to finding strutural lower and upper bounds for the size of minimum temporal spanners~\cite{AF16,CasteigtsPS21,KKK02,AkridaGMS17,angrick2024linear,CSS23,BalevSS24,CCC25} (more details below). Observe that research in this area mostly considers unweighted graphs and the minimization criteria only concerns the \emph{size} of the spanner, either in terms of the number of underlying edges\footnote{The \emph{underlying graph} of a temporal graph is a static graph that contains all the edges that appear at least once.} (\textsc{Min-Edge} version) or in terms of the number of edge appearances (\textsc{Min-Label} version). The reason for this focus is that the minimum size of a temporal spanner is not universal, unlike for static graphs, where spanning trees of size $n-1$ always exist when the graph is connected.

\subparagraph*{Known Results.}
\citet{AkridaGMS17} and \citet{AF16} independently showed that \MTS is NP-hard (indeed, APX-hard). A close examination of the corresponding reductions shows that both results are actually incomparable. 
More specifically, \citet{AF16} consider the \emph{non-strict} setting, where temporal paths are allowed to use edges along non-decreasing time steps. Moreover, the temporal graph in their case is \emph{simple}, which means that every edge appears only in a single time step, implying that their result holds for both the \textsc{Min-Edge} and \textsc{Min-Label} versions (which coincide in simple temporal graphs). In contrast, 
% and the total number of time steps is upper-bounded by a constant.
\citet{AkridaGMS17} consider the \emph{strict} setting, where temporal paths must use edges along increasing time steps. The temporal graphs used in their reduction are not simple and their result holds only for the \textsc{Min-Label} version. These results are incomparable because the strict and the non-strict settings are themselves incomparable in terms of expressivity~\cite{CasteigtsCS24}. Subsequently, \citet{CC23} showed that deciding if a given non-simple \TC graph admits a temporal spanner whose underlying graph is a tree is also NP-hard. This result implies (again) that the \textsc{Min-Edge} version of \MTS is NP-hard. However, this time, the result holds for both strict and non-strict temporal paths, due to a reduction that considers \emph{proper} temporal graphs, where the vertices are incident to at most one edge at a time (so the distinction between strict and non-strict paths does not exist). Interestingly, each of these three results establishes NP-hardness in cases that are not covered by the two others, making the three results incomparable. In addition, the combination of \textsc{Min-Edge} \MTS in the strict setting remains open.

\subparagraph*{Contributions.}
Our main contribution is to present a new hardness result that unifies and strengthen all the above NP-hardness results, establishing \emph{at once} that \MTS is NP-hard for both the \textsc{Min-Edge} and the \textsc{Min-Label} versions, and so, in both the strict and the non-strict settings. This result, presented in Section~\ref{sec:hardness}, is obtained by means of a reduction to temporal graphs that are simultaneously simple and proper, also known as \emph{happy} temporal graphs, where both settings and both versions of the problem coincide. Doing so, we close a long series of investigation on the hardness of \MTS and resolve Open Question~4 in~\cite{CasteigtsCS24}, asking whether such a unification is possible. The situation is summarized in \cref{fig:table}.

% \begin{figure}[h]
%   \centering
% \begin{tabular}{|c|c|c|c|}
%   \hline
%   Article & Setting & Status & Min edges/labels \\\hline
%   \cite{AF16}&  ``simple \& non-proper \& non-strict'' & APX-hard & Both\\
%   \cite{AkridaGMS17}& ``non-simple \& non-proper \& strict'' & APX-hard & Labels \\
%   \cite{CC23}& ``non-simple \& proper'' & NP-hard & Edges \\
% %  Here& ``simple \& non-proper \& strict'' & NP-hard & Both \\
%   \textbf{Here}& \textbf{``simple \& proper'' (happy)} & \textbf{NP-hard} & \textbf{Both}\\
%   \hline
% \end{tabular}
% \caption{\label{fig:table}Existing and new results on the hardness of \MTS.}
% \end{figure}

\begin{figure}[h]
  \centering
\begin{tabular}{|c|c||c|c|}
  \hline
  Article & Reduction to & Setting & \textsc{Min-Edge}/\textsc{Min-Label} \\\hline
  \cite{AF16}& simple \& not proper & non-strict & both\\
  \cite{AkridaGMS17}& not simple \& not proper& strict  & \textsc{Min-Label} \\
  \cite{CC23}& not simple \& proper & both  & \textsc{Min-Edge} \\\hline
%  Here& simple \& not proper &strict & both \\
  This paper& simple \& proper (happy) & \textbf{both} & \textbf{both}\\
  \hline
\end{tabular}
\caption{\label{fig:table}Existing and new hardness results for \MTS.}
\end{figure}

Surprisingly, as a side-product of our result, we show that it is even NP-hard to find a minimum $2$-source spanner in happy temporal graphs, namely, a minimum size subgraph that preserves reachability from two given vertices to all the vertices. We believe that this result is of independent interest.

Beyond these hardness results, we initiate the parameterized study of \MTS, showing that in happy temporal graphs, this problem can be solved in polynomial time if the underlying graph has a constant-size vertex cover (namely, it is in XP when parameterized by the vertex cover number). This algorithm, presented in Section~\ref{sec:xp-vc}, is the first nontrivial \emph{positive} result for the problem, apart from the very special case that the input graph itself is a tree~\cite{AF16}. To obtain the algorithm, we reveal insights into the structure of temporal spanners (the VC-tree lemma), which we believe to be interesting for various connectivity-related temporal graph problems.
To complement this algorithmic result, we also show (in Section~\ref{sec:W1-fvs}) that in non-happy (but still proper) graphs, the problem is W[1]-hard when parameterized by the feedback vertex number of the underlying graph, which leaves interesting open questions as to how this multi-dimensional gap could be filled.

% This result strengthen a number of existing results on the hardness of \textsc{Minimum Temporal Spanner}. Namely, the problem was already known to be hard in the ``simple \& non-proper \& non-strict'' setting (\citet{AF16}), in the ``non-simple \& non-proper \& strict'' setting (\citet{AkridaGMS17}, even if every edge has at most two labels), and more recently in the ``non-simple \& proper'' setting (\citet{CC23}, even in the particular case of asking for a spanning tree). The question was still open for the two most restrictive settings, namely, the strict and simple setting, and the happy setting. In a sense, showing hardness in happy graphs is strongest possible, as this setting can be seen as a special case of all the others (and both the min-edge and the min-labels versions of the problem coincide in this setting). As such, our results unifies in a single reduction the hardness of both versions of the problem in all the settings, thereby closing a long series of investigation.
% by \citet{CasteigtsCS24} (Open question 4).\todo{Does not appear elsewhere, but I'm unsure we should keep a self-referential open question.}

\subparagraph*{Further Related Work.} First of all, let us stress that the results from~\citet{AkridaGMS17} and \citet{AF16} are APX-hardness results. As such, our results do not replace these results entirely, they only replace what these results imply for the NP-hardness of the problem. In addition, 
\citet{AF16} give approximation algorithms and further approximation lower bounds for an edge-weighted version of the problem. They also give a polynomial-time algorithm for the weighted problem when the underlying graph is a tree (mentioned in the previous paragraph). 
\citet{CC23} further study the problem of finding a spanner that connects all vertex pairs by so-called bi-paths (a pair of a temporal path from $u$ to $v$ and a temporal path from $v$ to $u$ that use the same underlying edges) and show that this problem can be solved in polynomial time without restriction on the size, but finding a bi-spanner with at most $k$ edges is NP-hard.

On the structural side, \citet{KKK02} observed that \TC graphs do not always admit temporal spanners whose underlying graph is a tree. They also showed that a classical construction from gossip theory (a specific labeling of hypercubes) implies that there are some \TC graphs on $\Theta(n \log n)$ edges, all of which are necessary for preserving \TC. \citet{AF16} further strengthened this to showing that there are \TC graphs on $\Theta(n^2)$ edges, all of which are also necessary. Thus, small-size spanners do not exist unconditionally in temporal graphs. However, they are guaranteed if the underlying graph is a complete graph~\cite{CasteigtsPS21} or if the temporal graph is a temporal analog of Erdös-Renyi graphs~\cite{CasteigtsRRZ21}. Beyond the size, \citet{BiloDG0R22} study a version of the problem where the temporal spanners should preserve original distances, obtaining a general tradeoff between the number of edges and the stretch in the case of complete temporal graphs. Finally, various constructions of temporal trees that preserve only reachability from a given vertex to all others can be found in~\cite{KKK02,XFJ03,huang2015minimum}.

There are several more loosely related problems that have been investigated. For example, the problem of finding a labeling for a given static graph, such that the resulting temporal graph is \TC has been studied in various settings~\cite{gobel1991label,MertziosMS19,KlobasMMS22}. This can be interpreted as finding a spanner for a temporal graph where every edge appears at every time step. Similarly, \citet{bellitto2025temporal} consider the problem of augmenting a partial temporal graph with a minimum set of additional time edges from a list of candidates edges in order to make the graph \TC (which is equivalent to the minimum spanner problem if the partial graph is empty and the list contains all the edges). Furthermore, in the static setting, computing a spanning tree or a feedback edge set are essentially equivalent problems. In the temporal setting, the problem of computing a temporal feedback edge set behaves quite different than the problem of computing minimum spanners~\cite{HaagMNR22}. To the best of our knowledge, it is not known whether there are any interesting relations between the two problems.

% There has also been work on the problem of finding a minimum temporal subgraph where a given vertex can reach all other vertices. \citet{KKK02} showed that this problem can be solved in polynomial time. \citet{AF16} and \citet{huang2015minimum} showed that the edge-weighted version of the problem is NP-hard and hard to approximate. \citet{AF16} further give an XP-algorithm for the treewidth of the underlying graph as a parameter.\citet{huang2015minimum} showed that a minimum spanner can be computed in polynomial time (for the unweighted version) that preserves earliest arrival times.

\section{Preliminaries and Problem Definition}\label{sec:prelims}

A \emph{temporal graph} can be represented as a triple $\mathcal{G}=(V,\mathcal{E},T)$ consisting of a set of vertices $V$, a set of \emph{time edges} $\mathcal{E}\subseteq\binom{V}{2}\times [T]$, and a maximum time label $T$.
The \emph{underlying graph} of $\mathcal{G}$ is the static graph $G_U=(V,E_U)$ with $E_U = \{\{u,v\}\mid \exists t\in[T] \text{ s.t.\ }(\{u,v\},t)\in\mathcal{E}\}$. It is also common to think of the temporal graph as a labeled graph, where the edges of the underlying graph $G_U$ are labeled with their presence times.

A temporal graph is \emph{simple} if for each edge $e$ of the underlying graph, there is at most one time label $t$ such that $(e,t)\in\mathcal{E}$. We say that a temporal graph is \emph{proper}, if for each pair of time edges $(e,t),(e',t')$ that share an endpoint, that is $e\cap e'=\{v\}$, we have that $t\neq t'$. A temporal graph that is both simple and proper is called \emph{happy}.

A \emph{temporal path} is a sequence $\langle(e_i, t_i)\rangle$ such that
$\langle e_i \rangle$ is a path in $G$, $\langle t_i \rangle$ is
non-decreasing, and $(e_i,t_i) \in \mathcal{E}$ for all $i$ in the
sequence. If $\langle t_i \rangle$ is increasing, the temporal path is said to be strict. Thus strict temporal paths are a special case of non-strict temporal paths.
A temporal graph \G is (strictly) temporally connected (in class (\textsf{strict}-)\TC) if there exists at least one (strict) temporal path between every ordered pair of vertices. Finally, a temporal spanner of a (\textsf{strict}-)\TC temporal $\G=(V,\mathcal{E},T)$ is a temporal graph $\G'=(V,\mathcal{E}',T')$ such that $\mathcal{E}'\subseteq \mathcal{E}$, $T'\le T$, and $\G' \in (\textsf{strict}$-$)\TC$.

In this paper, we mainly consider the following problem:
\medskip

% A \emph{temporal $(s,z)$-path} (or \emph{temporal path} from $s$ to $z$) of length~$k$ from vertex $s=v_0$ to vertex $z=v_k$ is a sequence $P = \left(\left(v_{i-1},v_i,t_i\right)\right)_{i=1}^k$ of \emph{transitions} such that for all $i\in[k]$ we have that $(\{v_{i-1},v_i\},t_i)\in \mathcal{E}$ and for all $i\in [k-1]$ we have that $t_i \le t_{i+1}$ or $t_i < t_{i+1}$.
% In the latter case, the temporal $(s,z)$-path is called \emph{strict}, in the former case it is \emph{non-strict}. 
% For $s,z\in V$ we say that $s$ is \emph{(strictly) connected} to $z$ in $\mathcal{G}$ if there is a (strict) temporal $(s,z)$-path in $\mathcal{G}$. We say that $\mathcal{G}$ is \emph{(strictly) temporally connected} if for all $u,v\in V$ we have that $u$ is (strictly) connected to $v$.

\problemdef{\textsc{Minimum Temporal Spanner}}{A temporal graph $\mathcal{G}=(V,\mathcal{E},T) \in ($\textsf{strict}-$)\TC$ and an integer $k$.}{Is there a temporal spanner $\G'=(V,\mathcal{E}',T')$ of $\G$ such that $|\mathcal{E}'|\le k$}
  % and 
%for all $u,v\in V$ it holds that if $u$ is (strictly) connected to $v$ in $\mathcal{G}$, then $u$ is (strictly) connected to $v$ in 
% $\mathcal{G}'=(V,\mathcal{E}',T)$ is (strictly) temporally connected?

As previously explained, whenever the temporal graph is proper (and a fortiori, if it is happy), the distinction between strict and non-strict temporal paths disappears, which simplifies the definitions. Similarly, whenever the temporal graph is simple, we may refer to edges and time edges interchangeably. This will be the case in most of the paper, although not everywhere.
Finally, if the underlying graph $G_U$ of a temporal graph \G is a tree, we call $\mathcal{G}$ a \emph{temporal tree}. If this tree preserves reachability from a vertex to all others, we call it a temporal out-tree (rooted in that vertex). It is well known that if a temporal graph is \TC, then it contains a temporal out-tree of size $n-1$ rooted in each vertex (see, e.g.~\cite{XFJ03}).

%\newpage
\section{NP-hardness for Happy Temporal Graphs}
\label{sec:hardness}
%\subsection{NP-Hardness Reductions}
In this section, we prove that \textsc{Minimum Temporal Spanner} is NP-hard even if the input graph is happy. 
Surprisingly, a simple adaptation of our reduction implies that it is even NP-hard to find a minimum $2$-source spanner, namely, a minimum size subgraph that preserves reachability from two given vertices to all the vertices.

%\paragraph{Spanners in happy temporal graphs.}
%We now show how to modify the reduction presented to prove \cref{thm:nphardsimplestrict} such that the constructed temporal graph is happy. To this end, we mainly have to modify how the vertices corresponding to clauses or ``dummy clauses'' are connected to the variable gadgets. 
%Formally, we show the following.
\begin{theorem}\label{thm:nphardhappy}
    \textsc{Minimum Temporal Spanner} is NP-hard on happy temporal graphs.
\end{theorem}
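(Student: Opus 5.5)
We plan a polynomial-time reduction from a variant of 3-SAT (e.g., 3-SAT where each variable occurs a bounded number of times, or \textsc{Exact-3-SAT}) to \MTS restricted to happy temporal graphs. The construction has three layers. First, a small set of hub vertices with a time-ordered ``backbone'' guarantees that almost all pairs of vertices are connected through a few mandatory edges, i.e., edges that are the unique edge incident to some leaf-like vertex and hence must belong to every spanner. Second, for each variable $x_i$ there is a variable gadget: a cycle-like structure with two alternative ways to route reachability, one corresponding to $x_i=\text{true}$ and one to $x_i=\text{false}$. The labels are chosen so that each alternative costs exactly the same number of edges, and any spanner that uses both alternatives (or neither in a smart way) costs strictly more. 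Third, for each clause $c_j$ there is a clause vertex. It is attached to the three literal gadgets by single edges whose labels are chosen so that the clause vertex reaches, and is reached by, the rest of the graph cheaply only when one of the incident literal routes is already selected. Otherwise it needs at least one extra edge.

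Concretely, we set a budget $k = M + \sum_i a_i + 2m$, where $M$ counts the forced edges, $a_i$ is the cost of the cheaper routing in gadget $i$, and $m$ is the number of clauses (two edges per clause, one for in-reachability and one for out-reachability). The correctness proof then has two directions.

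\emph{Forward direction.} Given a satisfying assignment, select in each gadget the routing of its value. For each clause, choose one edge towards a satisfied literal and one edge back. Then verify temporal connectivity by case analysis over vertex-pair types: hub--hub, hub--gadget, gadget--gadget, and clause--anything.

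\emph{Backward direction.} Use counting. The forced edges are unavoidable. Each gadget needs at least $a_i$ non-forced edges. Each clause vertex has degree $\ge 2$ in any spanner, since it must both reach and be reached, and in a proper graph one edge cannot serve both directions without extra support. So the budget is tight, and it forces each gadget to use exactly one pure routing and each clause to use exactly two edges. Those two edges only work if they hit a selected literal route, which yields a satisfying assignment.

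To make the graph happy, every underlying edge carries a single label and no two edges sharing an endpoint share a label. We achieve this by assigning distinct global time slots to gadgets and clauses: gadget $i$ uses labels in its own window, and clause edges use interleaved labels chosen between the ``entering'' and ``leaving'' phases of the literal routes. All remaining labels are perturbed to be pairwise distinct, which preserves every strict order relation we rely on.

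\emph{Main obstacle.} The hardest step is the clause attachment under simplicity. Since each edge appears only once, a clause vertex must reach the gadget through an edge whose label is late enough to then propagate everywhere, and be reached through one early enough. The two roles cannot be merged on one edge, and they must depend on the literal choice. My first attempt is to give each literal route an ``early'' vertex (reached from everyone before time $t_1$) and a ``late'' vertex (reaching everyone after $t_2$). Clause edges then run to the early vertex with a label in $(t_1,t_2)$ and from the late vertex with a later label. The backbone must be designed so that these vertices have the claimed universal reachability only when their literal route is selected. If this breaks the counting, I would fall back to reducing from a structured SAT variant in which each literal appears at most twice, to keep the label windows consistent.
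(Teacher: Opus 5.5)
Your proposal is a template rather than a proof. No gadget, label or budget is actually fixed, and the step you yourself call the main obstacle is exactly the missing content, so neither direction can be checked.

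The backward counting also rests on local lower bounds that do not hold automatically in temporal graphs. ``Each clause vertex has degree $\ge 2$'' does not follow from properness: a single time edge $(\{c,y\},t)$ serves both directions whenever every vertex reaches $y$ strictly before $t$ and $y$ reaches every vertex strictly after $t$. ``Each gadget needs at least $a_i$ non-forced edges'' ignores that reachability can be routed through hubs, other gadgets or clause vertices. Your perturbation remark only shows that old strict paths survive. It does not show that the new paths created when ties are broken cannot give a cheaper spanner; the paper phrases every lower-bound argument in terms of $t$-labels so that it holds for arbitrary tie-breaking.

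Your concrete attachment already breaks its own invariants. A clause vertex whose edges all have labels in $(t_1,t_2)$ or later has no edge below $t_1$. So it is not among the vertices that reach an early vertex before $t_1$. Also, the clause whose late edge carries the largest label cannot reach any other clause through that edge, since all other clause edges are earlier. Clause-to-clause reachability must therefore run through in-edges inside the window, or through other clauses' late edges used in reverse. Your two-edges-per-clause accounting does not control either of these.

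The idea that removes the obstacle in the paper is to let only one direction depend on the assignment. The paper adds a vertex $u$ joined by a label-$1$ edge to every vertex except $v,w,c^\star$, shows all these edges are forced, and adds $(\{u,v\},2)$. Then almost every vertex reaches everything via $u$ and $v$. Temporal connectivity essentially reduces to $v$ reaching every vertex from time $3$ on, plus $w$ reaching every vertex, which is what makes the remaining gadget edges critical.

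The assignment is encoded by which of the two label-$4$ edges $\{x_1,x_T\},\{x_1,x_F\}$ is kept, i.e., which literal vertex $v$ reaches early; the other is reached only too late to use the label-$6$ edges towards the clauses. This gives the following constraints:
\begin{itemize}
\item A clause vertex $c$ is reachable from $v$ only through some $c_i$ fed by a label-$6$ edge from an early-reached literal vertex.
\item Every other $c_{i'}$ still needs one extra edge, such as $(\{c_{i'},v\},8)$.
\item The dummy clauses $c_x$ force at least one label-$4$ edge per variable.
\end{itemize}
After showing that $\{v,w\}$ may be assumed kept, this yields the tight counts behind $k=m-5n_c-4n_x$: at least $4$ of the $9$ non-critical edges per clause, and $4$ of the $8$ non-critical edges per variable together with its dummy clause.
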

The proof of \cref{thm:nphardhappy} consists of a polynomial-time reduction from the 3-SAT~\cite{Kar72} problem.
    %, which is a modified version of the reduction presented to prove \cref{thm:nphardsimplestrict}. 
Given an instance $\phi$ of 3-SAT, we construct an instance $(\mathcal{G}=(V,\mathcal{E},T),k)$ of \textsc{Minimum Temporal Spanner} such that $\mathcal{G}$ admits a spanner of size $k$ if and only if $\phi$ is satisfiable. The transformation proceeds in two steps. First, we construct a temporal graph $\G'$ that is simple, but not yet proper. Then, $\G'$ is transformed into a proper version $\G$ that preserves the required properties. The construction of $\G'$ is illustrated in \cref{fig:nphardness2}.

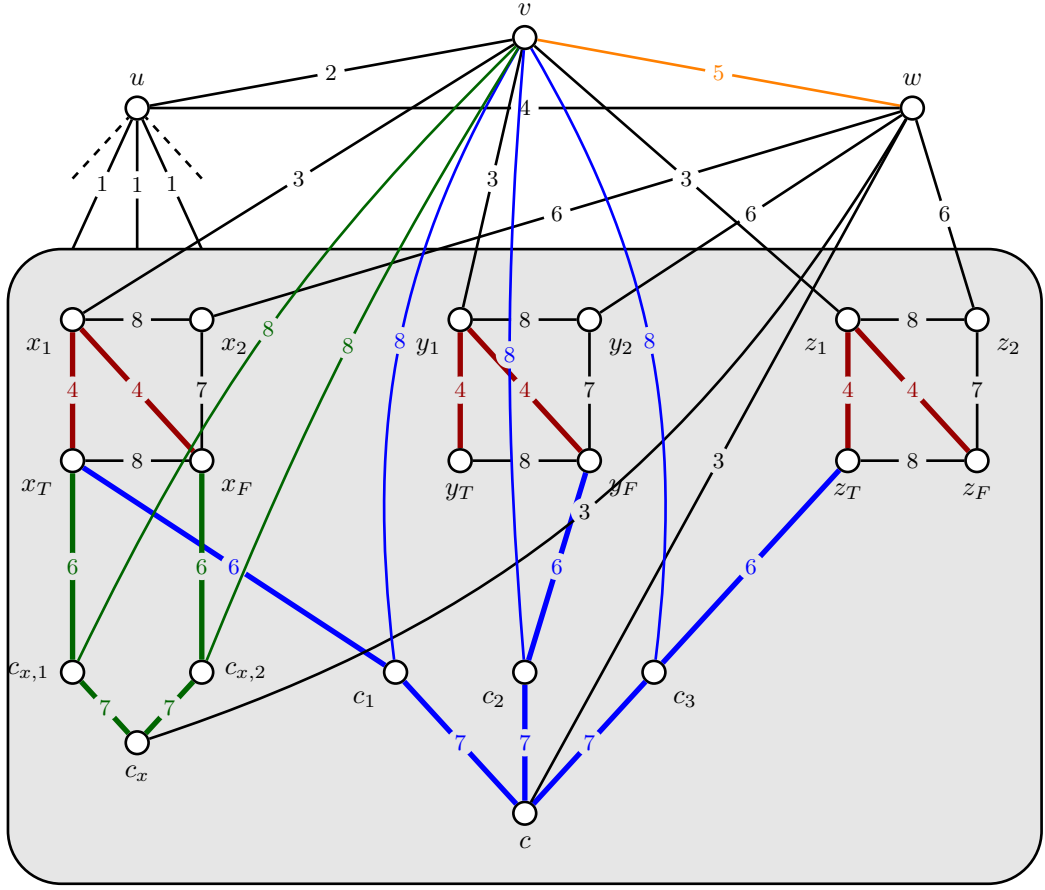
\begin{figure}[t!]
\begin{center}
\begin{tikzpicture}[line width=1pt,scale=1.7,yscale=1.1]

    \node[vert,label=above:$u$] (A) at (.5,2.5) {};
    \node[vert,label=above:$v$] (B) at (3.5,3) {};
    \node[vert,label=above:$w$] (B2) at (6.5,2.5) {};

    \draw (.5,1.5) --node[timelabel] {$1$} (A);
    \draw (0,1.5) --node[timelabel] {$1$} (A);
    \draw (1,1.5) --node[timelabel] {$1$} (A);
    \draw[dashed] (0,2) -- (A);
    \draw[dashed] (1,2) -- (A);
    
    \draw[rounded corners=20pt,fill=lightgray!40!white] (-.5, 1.5) rectangle (7.5, -3);

    \node[vert,label=below left:$x_T$] (X1) at (0,0) {};
    \node[vert,label=below right:$x_F$] (X2) at (1,0) {};
    \node[vert,label=below left:$x_1$] (X3) at (0,1) {};
    \node[vert,label=below right:$x_2$] (X4) at (1,1) {};
    
    \node[vert,label=below:$y_T$] (Y1) at (3,0) {};
    \node[vert,label=below right:$y_F$] (Y2) at (4,0) {};
    \node[vert,label=below left:$y_1$] (Y3) at (3,1) {};
    \node[vert,label=below right:$y_2$] (Y4) at (4,1) {};
    
    \node[vert,label=below:$z_T$] (Z1) at (6,0) {};
    \node[vert,label=below:$z_F$] (Z2) at (7,0) {};
    \node[vert,label=below left:$z_1$] (Z3) at (6,1) {};
    \node[vert,label=below right:$z_2$] (Z4) at (7,1) {};

    \node[vert,label=below:$c$] (C) at (3.5,-2.5) {};
    \node[vert,label=below left:$c_1$] (C1) at (2.5,-1.5) {};
    \node[vert,label=below left:$c_2$] (C2) at (3.5,-1.5) {};
    \node[vert,label=below right:$c_3$] (C3) at (4.5,-1.5) {};
    %\node[vert,label=below:$v'$] (C2) at (3.5,-2.5) {};
    
    \node[vert,label=below:$c_x$] (CX) at (.5,-2) {};
    \node[vert,label=left:$c_{x,1}$] (CX1) at (0,-1.5) {};
    \node[vert,label=right:$c_{x,2}$] (CX2) at (1,-1.5) {};

    \draw (A) --node[timelabel] {$2$} (B);
    \draw[color=orange] (B) --node[timelabel] {$5$} (B2);
    \draw (A) --node[timelabel] {$4$} (B2);
    
    \draw (B2) --node[timelabel] {$6$} (X4);
    \draw (B2) --node[timelabel] {$6$} (Y4);
    \draw (B2) --node[timelabel] {$6$} (Z4);

    \draw (B) --node[timelabel] {$3$} (X3);
    \draw (B) --node[timelabel] {$3$} (Y3);
    \draw (B) --node[timelabel] {$3$} (Z3);

    \draw[line width=2pt,color=red!60!black] (X1) --node[timelabel2] {$4$} (X3);
    \draw[line width=2pt,color=red!60!black] (X2) --node[timelabel2] {$4$} (X3);
    \draw (X2) --node[timelabel2] {$7$} (X4);
    \draw (X1) --node[timelabel2] {$8$} (X2);
    \draw (X3) --node[timelabel2] {$8$} (X4);

    \draw[line width=2pt,color=red!60!black] (Y1) --node[timelabel2] {$4$} (Y3);
    \draw[line width=2pt,color=red!60!black] (Y2) --node[timelabel2] {$4$} (Y3);
    \draw (Y2) --node[timelabel2] {$7$} (Y4);
    \draw (Y1) --node[timelabel2] {$8$} (Y2);
    \draw (Y3) --node[timelabel2] {$8$} (Y4);

    \draw[line width=2pt,color=red!60!black] (Z1) --node[timelabel2] {$4$} (Z3);
    \draw[line width=2pt,color=red!60!black] (Z2) --node[timelabel2] {$4$} (Z3);
    \draw (Z2) --node[timelabel2] {$7$} (Z4);
    \draw (Z1) --node[timelabel2] {$8$} (Z2);
    \draw (Z3) --node[timelabel2] {$8$} (Z4);

    \draw[line width=2pt,color=blue] (C1) --node[timelabel2] {$6$} (X1);
    \draw[line width=2pt,color=blue] (C2) --node[timelabel2] {$6$} (Y2);
    \draw[line width=2pt,color=blue] (C3) --node[timelabel2] {$6$} (Z1);
    
    \draw[line width=2pt,color=blue] (C1) --node[timelabel2] {$7$} (C);
    \draw[line width=2pt,color=blue] (C2) --node[timelabel2] {$7$} (C);
    \draw[line width=2pt,color=blue] (C3) --node[timelabel2] {$7$} (C);
    
    %\draw (C) --node[timelabel2] {$7$} (C2);
    %\draw (B2) --node[timelabel2] {$3$} (C2);

    \draw[line width=2pt,color=green!40!black] (CX1) --node[timelabel2] {$6$} (X1);
    \draw[line width=2pt,color=green!40!black] (CX2) --node[timelabel2] {$6$} (X2);
    \draw[line width=2pt,color=green!40!black] (CX1) --node[timelabel2] {$7$} (CX);
    \draw[line width=2pt,color=green!40!black] (CX2) --node[timelabel2] {$7$} (CX);
    
    \draw (B2) --node[timelabel2] {$3$} (C);
    \draw (B2) edge[bend left=20]  node[timelabel2] {$3$} (CX);
    
    \draw[color=blue] (B) edge[bend right=20]  node[timelabel2] {$8$} (C1);
    \draw[color=blue] (B) edge[bend right=5]  node[timelabel2] {$8$} (C2);
    \draw[color=blue] (B) edge[bend left=20]  node[timelabel2] {$8$} (C3);
    \draw[color=green!40!black] (B) edge[bend right=10]  node[timelabel2] {$8$} (CX1);
    \draw[color=green!40!black] (B) edge[bend right=5]  node[timelabel2] {$8$} (CX2);
    
    %\draw (CX) --node[timelabel2] {$7$} (C2);
\end{tikzpicture}
    \end{center}
    \caption{Illustration of the temporal graph constructed in the first step of the reduction in the proof of \cref{thm:nphardhappy}. Here, the clause $c\neq c^\star$ contains variables $x$ and $z$ non-negated, and variable~$y$ negated. The vertices $c_y, c_{y,1}, c_{y,2}$ and $c_z, c_{z,1}, c_{z,2}$ and their incident edges are not depicted. All vertices inside the gray box have an edge to $u$ with label 1. The vertex corresponding to clause~$c^\star$ is outside of the gray box. Critical edges are black.}\label{fig:nphardness2}
\end{figure}

\subparagraph{Construction.} 
For the sake of presentation, we will call some of the edges \emph{critical}, for which we will later prove that they need to be contained in every spanner. These edges are depicted in black in \cref{fig:nphardness2}.
The vertex set $V$ is made of three special vertices $u,v,w$, plus a number of vertices that depend on $\phi$. Namely, for each variable $x$, we add seven vertices $x_1,x_2,x_T,x_F,c_x,c_{x,1},c_{x,2}$, and for each clause $c$, we add four vertices $c,c_1,c_2,c_3$. For technical reasons, we denote by $c^\star$ the first clause of $\phi$.
The edges are as follows. First, we add $(\{u,v\},2)$, $(\{u,w\},4)$, and $(\{v,w\},5)$. 
Among those, edges $(\{u,v\},2)$ and $(\{u,w\},4)$ are critical.
Then, for each variable $x$, we add
$(\{v,x_1\},3)$, $(\{x_1,x_T\},4)$, $(\{x_1,x_F\},4)$, $(\{w,x_2\},6)$, $(\{x_2,x_F\},7)$, $(\{x_1,x_2\},8)$, and $(\{x_T,x_F\},8)$, and we call the subgraph induced by these edges the \emph{variable gadget} for variable $x$. 
Among the edges in the variable gadget, edges $(\{v,x_1\},3)$, $(\{w,x_2\},6)$, $(\{x_2,x_F\},7)$, $(\{x_1,x_2\},8)$, and $(\{x_T,x_F\},8)$ are critical.
For each clause $c$, we add $(\{c,c_1\},7)$, $(\{c,c_2\},7)$, $(\{c,c_3\},7)$, $(\{c_1,v\},8)$, $(\{c_2,v\},8)$, and $(\{c_3,v\},8)$. For each $c\neq c^\star$, we add edge $(\{c,w\},3)$, and for the special case of $c^\star$, we add $(\{c^\star,w\},2)$ instead. These edges $(\{c,w\},3)$ and $(\{c^\star,w\},2)$ are critical.
We call the subgraph induced by these edges the \emph{clause gadget} for clause $c$. Now, let $x$ be the $i$th variable appearing in clause~$c$. 
If $x$ appears non-negated in $c$, we add $(\{c_i,x_T\},6)$. Otherwise, we add $(\{c_i,x_F\},6)$. For each variable $x$, the vertex $c_x$ intuitively corresponds to a ``dummy clause'' $(x\vee\neg x)$. Next, we add edges $(\{c_x,c_{x,1}\},7)$, $(\{c_x,c_{x,2}\},7)$, $(\{c_{x,1},v\},8)$, $(\{c_{x,2},v\},8)$, $(\{c_{x,1},x_T\},6)$, $(\{c_{x,2},x_F\},6)$, and $(\{c_{x},w\},3)$. Here, edge $(\{c_{x},w\},3)$ is critical. Finally, we add a edge with label 1 from $u$ to every other vertex except $v$, $w$, and $c^\star$. All those edges are critical. Informally, the neighbors of $u$ are depicted by the gray box in \cref{fig:nphardness2}.

This finishes the construction of $\G'$. This graph is clearly simple, but not yet proper. Thus, we now create $\G$ from $\G'$ by modifying the time labels as follows: First, we order all the edges by their time label, breaking ties arbitrarily. Then, 
%with one exception: The edge $\{c^\star,w\}$ is the first edge in the ordering among all edges with label 3. Recall that $c^\star$ is the first clause in $\phi$, and vertex $c^\star$ is not connected to~$u$ with an edge with label 1. 
we relabel the edges with their ordinal position in this ordering.
The label of every edge is now unique, so $\mathcal{G}$ is happy.

Finally, set $k=m-5n_c-4n_x$, where $m=|\mathcal{E}|$ is the number of edges in $\mathcal{G}$, $n_c$ is the number of clauses in $\phi$ and $n_x$ is the number of variables in~$\phi$. This finishes the construction of the instance $(\G,k)$ for \textsc{Minimum Temporal Spanner}, which can clearly be computed in polynomial time from~$\phi$.
We will now examine the properties of $\mathcal{G}$, starting with the fact that it is temporally connected.
%We say that a vertex $a$ can \emph{reach} a vertex $b$ at time $t$ if there is a temporal path from $a$ to~$b$ that has arrival time~$t$.
%We say that $a$ can \emph{reach} $b$ at time $t$ starting from $t'$ if there is a temporal path from $a$ to~$b$ that has arrival time $t$ and starting time at most $t'$. Furthermore, since the constructed temporal graph is simple, we sometimes identify edge with the corresponding edges in the underlying graph.

\begin{lemma}\label{lem:connectednph2}
    The constructed graph $\mathcal{G}$ is temporally connected.
\end{lemma}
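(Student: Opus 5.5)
The plan is to prove temporal connectivity in the simple graph $\G'$ using only \emph{strictly} increasing temporal paths. This suffices because the relabeling that produces $\G$ preserves the relative order of any two edges whose labels differ in $\G'$. A strict path in $\G'$ therefore stays a temporal path in $\G$, whatever ties were broken arbitrarily. I will split the sources into the vertices adjacent to $u$ at time $1$ (the ``gray box''), the vertex $u$ itself, and the three remaining vertices $v$, $w$, $c^\star$.

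\textbf{Step 1: a strict spanning structure from $u$ using labels $\ge 2$.} I claim that $u$ reaches every vertex by strict paths whose first edge has label at least $2$:
\begin{itemize}
\item $u\to v$ at $2$, then $v\to x_1$ at $3$, then $x_1\to x_T$ and $x_1\to x_F$ at $4$;
\item $u\to w$ at $4$, then $w\to x_2$ at $6$;
\item every $c_i$ is adjacent at time $6$ to some $x_T$ or $x_F$, and every $c_{x,j}$ is adjacent at time $6$ to $x_T$ or $x_F$; both $x_T$ and $x_F$ are reached at time $4$;
\item from these vertices at time $6$, the edges at time $7$ reach $c$, $c_x$, and also $c^\star$ (through some $c^\star_i$).
\end{itemize}
This covers all vertices.

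\textbf{Step 2: the gray-box vertices and $u$.} A gray-box vertex $a$ first goes to $u$ at time $1$ and then follows the structure of Step 1, which only uses labels at least $2$. The vertex $u$ reaches gray-box vertices directly at time $1$, and reaches $v$, $w$, $c^\star$ via Step 1.

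\textbf{Step 3: the sources $v$, $w$, $c^\star$.}
\begin{itemize}
\item For $v$: take $v\to u$ at $2$ and then the Step 1 structure minus the edge $\{u,v\}$. Every remaining edge there has label at least $3$, or equals $(\{u,w\},4)$, so all these paths remain strict.
\item For $w$, I exhibit explicit strict paths:
  \begin{itemize}
  \item $w\to c^\star$ at $2$, $w\to u$ at $4$, $w\to v$ at $5$;
  \item $w\to x_2$ at $6$, then $x_2\to x_1$ at $8$; also $x_2\to x_F$ at $7$ and then $x_F\to x_T$ at $8$;
  \item $w\to c$ at $3$, then $c\to c_i$ at $7$;
  \item $w\to c_x$ at $3$, then $c_x\to c_{x,j}$ at $7$.
  \end{itemize}
\item For $c^\star$: take $c^\star\to w$ at $2$ and then all of $w$'s paths. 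These use labels at least $3$, except the edge back to $c^\star$, which is not needed.
\end{itemize}

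The only delicate point is the tie issue. The non-strict paths in $\G'$ might break after relabeling, which is why every path above is chosen strictly increasing. The main check is therefore Step 1: verifying that every vertex really is reached by a strict path avoiding label $1$. I would confirm this gadget by gadget (variable gadget, clause gadget, dummy-clause gadget) against the edge list of the construction.
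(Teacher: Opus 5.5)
Your proof is correct and follows essentially the same route as the paper. Both arguments reduce to strict temporal connectivity of $\mathcal{G}'$, using that the relabeling preserves strict order. They then show that a hub reaches everything using labels above its entry time, and route the remaining vertices through $u$ (label $1$) or, for $c^\star$, through $w$ (label $2$).

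The differences are cosmetic. You use $u$ as the central hub and reach $w$ and $x_2$ via $(\{u,w\},4)$ and $(\{w,x_2\},6)$, where the paper uses $v$ and $(\{v,w\},5)$, $(\{x_1,x_2\},8)$. From $w$, you reach the $c_i$ and $c_{x,j}$ through the $3$- and $7$-edges rather than through $v$ and the $8$-edges.
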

\begin{proof}
  First, observe that if a \emph{strict} temporal path exists in $\G'$ from some vertex $a$ to some vertex $b$, then a temporal path on the same underlying path must exist from $a$ to $b$ in~$\G$. Indeed, the labels along this path are already strictly increasing in $\G'$, and the relabeling procedure used to obtain $\G$ from $\G'$ preserves this ordering. Thus, it suffices to show that $\mathcal{G}'$ is strictly temporally connected. Let us start with two key properties about $v$ and $w$ in $\G'$. 

  \begin{claim}
    \label{lem:v}
  Vertex $v$ can reach all vertices except $u$ starting at time 3.
\end{claim}

\begin{claimproof}
  First observe that $v$ can reach $w$ using edge $(\{v,w\},5)$.
  For a variable $x$, $v$ can reach $x_1$ via the edge $(\{v,x_1\},3)$, then it can reach $x_2$, $x_T$, and $x_F$ from $x_1$ using $(\{x_1,x_2\},8)$, $(\{x_1,x_T\},4)$, and $(\{x_1,x_F\},4)$, respectively. From $x_T$ and $x_F$, it can reach $c_{x,1}$ and $c_{x,1}$ using $(\{x_T,c_{x,1}\},6)$ and $(\{x_F,c_{x,2}\},6)$, respectively. Finally, it can reach $c_x$ from any of these two vertices at time $7$.
  Now, for a clause $c$, wlog, let variable $x$ be contained in the $i$th literal of $c$. If $x$ appears non-negated (resp. negated), then vertex $c_i$ can be reached by first reaching $x_T$ (resp. $x_F$) via the above-described connection and then the edge from this vertex to $c_i$ at time 6. 
  Finally, $v$ can reach $c$ by first reaching vertex $c_1$ via the above-described connection and then the edge between $c_1$ and $c$ at time 7.
\end{claimproof}

\begin{claim}
    \label{lem:w}
  Vertex $w$ can reach all vertices except $c^\star$ starting at time 3.
\end{claim}
\begin{claimproof}
  First observe that $w$ can reach $u$ at time $4$ and $v$ at time $5$ via its common edge with there vertices. For a variable $x$, $w$ can reach vertex $x_2$ using $(\{w, x_2\}, 6)$. From there, it can reach $x_1$ at time $8$. It can also reach $x_F$ at time $7$ and from there $x_T$ at time $8$. The vertices $c_{x,1}$ and $c_{x,2}$ can be reached by first reaching $v$ at time $5$ via the edge between $w$ and $v$, then the edges between $v$ and $c_{x,1}$ (or $c_{x,2}$) at time 8. Finally, $c_x$ can be reached at time $3$ via the edge between $w$ and $c_x$.
For a clause $c$, the vertices $c_1$, $c_2$, and $c_3$ can be reached by first reaching $v$ via the edge between $w$ and $v$ with label 5, and then the edges between $v$ and $c_1$, $c_2$, and $c_3$, respectively, at time 8. Finally, if $c\neq c^\star$, the vertex $c$ can be reached at time $3$ via the edge between $w$ and $c$. 
\end{claimproof}

By Claim~\ref{lem:v} and edge $(\{u,v\},2)$, we have that $v$ can reach all the vertices. By Claim~\ref{lem:w} and edge $(\{w,c^\star\},2)$, we have that $w$ can reach all the vertices. To conclude, observe that all the remaining vertices except $c^\star$ can reach $u$ at time $1$, then $v$ at time $2$, and by Claim~\ref{lem:v} all the remaining vertices subsequently. Similarly, $c^\star$ can reach $w$ at time $2$ and by Claim~\ref{lem:w} all the other vertices subsequently. This finishes the proof.
\end{proof}

Now we show correctness of the reduction. Before giving a complete proof, we give some intuition in informal terms, referring the reader to \cref{fig:nphardness2} for an illustration.
%Similarly as in the for the reduction we use to prove \cref{thm:nphardsimplestrict}, 
The following key insights are the main ingredients of the proof. They will be proven subsequently.
\begin{enumerate}
    \item Observe that $w$ and its connections to the clause gadgets require all the critical edges in the variable gadgets.
    \item Although the orange edge between $v$ and $w$ is not critical, we can assume w.l.o.g.\ that any minimum temporal spanner contains this edge, which makes the proof simpler.
    \item All temporal spanners need to contain at least one of the two red edges for each variable gadget (e.g. $\{x_1,x_T\}$ or $\{x_1,x_F\}$). Intuitively, this corresponds to setting variable $x$ to true or false in an assignment. 
    \item For each clause $c$, there must be some $i\in [3]$ such that the edge between $c$ and $c_i$ is kept, as well as the edge from $c_i$ to its true or false neighbor in the corresponding variable gadget. For the other $i'\in[3]$, we need to keep the edge between $c_{i'}$ and $v$ (see the blue edges). Intuitively, the choice of $i$ corresponds to a literal that satisfies the clause. Otherwise, vertex $v$ cannot reach vertex~$c$.
    \item The ``dummy clauses'' behave as ``real'' clauses. They ensure that each temporal spanner contains at least one red edge in each variable gadget.
    \item The size $k$ of the target spanner corresponds to the constraint that for each variable gadget, at most one red edge is in the spanner; for each clause, at most four blue edges are in the spanner; and for each dummy clause, at most three green edges are in the spanner.
    This gives us enough knowledge on how spanners of size $k$ (which are minimum) look like in order to prove the correctness of the reduction.
\end{enumerate}
Given a satisfying assignment for $\phi$, we can construct a temporal spanner of size $k$ in a straightforward way, using the above intuition. Conversely, given a temporal spanner of size~$k$, we know that it must have the above-described properties, and hence we can construct a satisfying assignment for $\phi$. We now proceed with the proof.

\begin{lemma}\label{lem:nph2corr1}
   If $\phi$ is satisfiable, then $\mathcal{G}$ admits a spanner with $k=m-5n_c-4n_x$ edges.
\end{lemma}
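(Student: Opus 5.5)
The plan is to fix a satisfying assignment $\alpha$ of $\phi$ and build the spanner $\mathcal{H}\subseteq\mathcal{E}$ by deleting exactly one red edge per variable gadget, three green edges per dummy clause and five blue edges per clause. This removes $n_x+3n_x+5n_c$ edges, so $|\mathcal{H}|=k$. All critical edges and the orange edge $(\{v,w\},5)$ are kept. As in the proof of \cref{lem:connectednph2}, it suffices to check that the result is strictly temporally connected with the labels of $\G'$, since the relabeling preserves strict paths.

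The deletions are as follows.
\begin{itemize}
\item \textbf{Variable gadgets.} If $\alpha(x)=\mathrm{true}$, keep $(\{x_1,x_T\},4)$ and delete $(\{x_1,x_F\},4)$; otherwise do the reverse. The vertex kept adjacent to $x_1$ is called the \emph{true-literal vertex} of $x$.
\item \textbf{Dummy clauses.} If $\alpha(x)=\mathrm{true}$, keep $(\{c_{x,1},x_T\},6)$, $(\{c_{x,1},c_x\},7)$ and $(\{c_{x,2},v\},8)$, and delete the other three green edges. If $\alpha(x)=\mathrm{false}$, do the symmetric choice with $c_{x,2}$ and $x_F$.
\item \textbf{Clauses.} For each clause $c$, pick an index $i$ whose literal is satisfied by $\alpha$. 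Keep the edge from $c_i$ to its variable-gadget neighbour (label $6$), the edge $(\{c_i,c\},7)$, and $(\{c_j,v\},8)$ for both $j\neq i$. Delete the remaining five blue edges.
\end{itemize}

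To verify temporal connectivity I would redo the two claims from \cref{lem:connectednph2} inside $\mathcal{H}$, and I expect this re-verification to be the main step.

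\emph{Reachability from $v$, starting at time $3$.} Vertex $v$ reaches $x_1$ at time $3$ and the true-literal vertex at time $4$. It reaches $x_2$ via $(\{x_1,x_2\},8)$. It reaches $x_F$ either directly at time $4$, or via $v\to w$ (5), $w\to x_2$ (6), $x_2\to x_F$ (7); this second route is exactly why the orange edge is kept. It reaches $x_T$ either at time $4$, or via $x_F\to x_T$ at time $8$. Since the satisfied literal's vertex is reached at time $4$, $v$ reaches $c_i$ at time $6$ and then $c$ at time $7$; this includes $c=c^\star$. Each $c_j$ with $j\neq i$ is reached by the edge at time $8$. The dummy vertices $c_{x,\cdot}$ and $c_x$ are handled identically. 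Finally, $w$ is reached at time $5$.

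\emph{Reachability from $w$.} Vertex $w$ reaches $u$ at time $4$ and $v$ at time $5$. It reaches $x_2$ at $6$, $x_F$ at $7$, and $x_1,x_T$ at $8$. It reaches each $c\neq c^\star$ at time $3$, and then $c_i$ via $(\{c,c_i\},7)$. It reaches $c_j$, for $j\neq i$, via $v$ at $5$ and then the edge at time $8$. Dummy clauses are handled the same way, using the edge at time $3$ to $c_x$. It reaches $c^\star$ via $(\{w,c^\star\},2)$.

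\emph{Remaining sources.} Vertex $u$ reaches $v$ at time $2$ and then everything $v$ reaches; $v$ and $w$ reach $u$ directly at times $2$ and $4$. Every other vertex except $c^\star$ reaches $u$ at time $1$ and $v$ at time $2$, and then everyone. Vertex $c^\star$ reaches $w$ at time $2$ and then everyone.

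The delicate point is the re-verification: deleting red and blue edges must not break any path that the original claims relied on. In particular, the unselected $c_j$ are now reachable only via $v$ at time $8$, and the selected $c_i$ only via the true-literal vertex (from $v$) or via $c$ (from $w$). So every reachability argument must be rerouted through the retained edges, as done above.
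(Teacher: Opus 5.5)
Your proposal is correct and follows the paper's proof essentially verbatim. It uses the same deletions: one red edge per variable, three green edges per dummy clause and five blue edges per clause, keeping all critical edges and $\{v,w\}$. It also uses the same verification: reachability from $v$ and from $w$ starting at time $3$ in the pre-relabeling graph $\mathcal{G}'$, with every other vertex routed through $u$ (or, for $c^\star$, through $w$).

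One small correction: your remark that the route $v\to w\to x_2\to x_F$ is ``exactly why'' the orange edge is kept is off, since $x_F$ is also reached via $x_T$ and the critical $8$-edge $\{x_T,x_F\}$. The orange edge is really needed for $w$ (and hence $c^\star$) to reach $v$ and the unselected $c_j$, which your argument does use.
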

\begin{proof}
    We construct a temporal spanner for $\mathcal{G}$ as follows.
    We start by adding all edges to the spanner. If a variable $x$ is set to true, we remove edges $\{x_1,x_F\}$, $\{x_F,c_{x,2}\}$, $\{c_{x,2},c_{x}\}$, and $\{c_{x,1},v\}$ from the spanner.
    Otherwise, if $x$ is set to false, we remove edges $\{x_1,x_T\}$, $\{x_T,c_{x,1}\}$, $\{c_{x,1},c_{x}\}$, and $\{c_{x,2},v\}$ from the spanner. Consider a clause $c$ and let $c$ be satisfied by $i$th literal $\ell_i$. Let $x$ be the variable appearing in the literal $\ell_i$. If $x$ appears non-negated, then by construction the edge $\{x_T,c_i\}$ is in the spanner. If $x$ appears negated, then by construction the edge $\{x_F,c_i\}$ is in the spanner.
    We say that this edge corresponds to literal $\ell_i$. We remove edge $\{c_i,v\}$ from the spanner. Furthermore, for $i'\in[3]$ with $i'\neq i$ we remove the edge $\{c,c_{i'}\}$ and the edge corresponding to literal $\ell_{i'}$ from the spanner.
    Hence, we have removed $5n_c+4n_x$ edges from the initial spanner. It remains to show that we still have a strict temporal spanner. The remainder of the proof is similar to the one of \cref{lem:connectednph2}.

First, note the following, analogous to the observation in the proof of \cref{lem:connectednph2}. Let $\mathcal{G}'$ be the temporal graph obtained after the first step of the construction, that is, before the labels of the edges are changed to make the temporal graph proper. Clearly, if there is a temporal path from some vertex $a$ to some vertex $b$ in $\mathcal{G}'$, then in $\mathcal{G}$ there exists a temporal path from $a$ to $b$ using the same edges of the underlying graph (with the updated labels). This follows from the fact that for any two edges $(e_1,t_1)$, $(e_2,t_2)$ with $t_1<t_2$ in $\mathcal{G}'$, we have for the corresponding edges $(e_1,t_1')$, $(e_2,t_2')$ in $\mathcal{G}$ that $t_1'<t_2'$. Hence, we have that if $\mathcal{G}'$ is strictly temporally connected, then $\mathcal{G}$ is also strictly temporally connected. Moreover, we have that every spanner for $\mathcal{G}'$ is also a spanner for $\mathcal{G}$ (with the updated edge labels).
It is more convenient to show that the selected edges form a spanner in $\mathcal{G}'$, which we will do in the remainder of the proof. From this, the theorem statement follows.

    Note that by construction, all vertices in the spanner except for $v$, $w$, and $c^\star$ can reach vertex $u$ at time~1, since none of these edges are removed. See \cref{fig:nphardness2}, here all vertices except for $u$, $v$, $w$, and~$c^\star$ (which is not depicted) are in the gray area, and have a edge with label 1 to $u$.
Furthermore, $u$ can reach $v$ at time 2 starting from time 2, $v$ can reach $w$ at time 5 starting from~5, and $c^\star$ can reach $w$ at time 2 starting from 2. 

Now we argue that $v$ can reach every vertex except $u$ and $w$ starting from time 3. We consider the following cases.
\begin{itemize}
    \item For a variable $x$, the vertex $x_1$ can be reached via the edge between $v$ and $x_1$ with label~3.
    \item For a variable $x$, the vertex $x_2$ can be reached via the edge between $v$ and $x_1$ with label~3, and then the edge between $x_1$ and $x_2$ with label 7.
    \item For a variable $x$, the vertices $x_T$ and $x_F$ can be reached as follows. If $x$ is set to true, then $x_T$ can be reached via the edge between $v$ and $x_1$ with label 3, and then the edges between $x_1$ and $x_T$ at time 4. And $x_F$ by first reaching $x_T$ via the above-described connection, and then the edge between $x_T$ and $x_F$ at time 8. If $x$ is set to false, then $x_T$ and $x_F$ switch their roles.
    \item For a variable $x$, the vertices $c_{x,1}$ and $c_{x,2}$ can be reached as follows. If $x$ is set to true, then $c_{x,1}$ can be reached by first reaching $x_T$ via the above-described connection, and then the edge between $x_T$ and $c_{x,1}$ at time 6. And $c_{x,2}$ can be reached via the edge between $v$ and $c_{x,2}$ with label 8. If $x$ is set to false, then $c_{x,1}$ and $c_{x,2}$, and $x_T$ and $x_F$ switch their roles.
    \item For a variable $x$, the vertex $c_x$ can be reached by first reaching $c_{x,1}$ if $x$ is set to true, and $c_{x,2}$ otherwise, via the above-described connection and then the edge between $c_{x,1}$ or $c_{x,2}$, respectively, and $c_x$ at time 7.
    \item For a clause $c$, the vertices $c_{1}$, $c_2$, and $c_{3}$ can be reached as follows. Let variable $x$ be w.l.o.g.\ be contained in literal $\ell_i$ for some $i\in[3]$ of $c$ that satisfies $c$ and that we used in the construction of the spanner. If $x$ appears non-negated, then the vertex $c_i$ can be reached by first reaching $x_T$ via the above-described connection (since then $x$ is set to true) and then the edge between $x_T$ and $c_i$ at time 6. 
    If $x$ appears negated, then $c_i$ can be reached by first reaching $x_F$ via the above-described connection (since then $x$ is set to false) and then the edge between $x_F$ and $c_i$ at time 6.
    Vertices $c_{i'}$ for $i'\in[3]$ and $i'\neq i$ can be reached via the edge between $v$ and $c_{i'}$ with label 8.
    \item For a clause $c$, vertex $c$ can be reached by first reaching vertex $c_i$ corresponding to literal $\ell_i$ for some $i\in[3]$ of $c$ that satisfies $c$ and that we used in the construction of the spanner via the above-described connection and then the edge between $c_i$ and $c$ at time 7.
\end{itemize}
Finally, $v$ can clearly also reach $u$. So far, we have shown that all vertices except $w$ and $c^\star$ can reach all other vertices. Recall that $c^\star$ can reach $w$ at time 2 starting from 2. In the remainder, we show that $w$ can reach all other vertices (except $c^\star$) starting from 3.
%Finally, $v$ can clearly also reach $u$. So far, we have shown that all vertices except $w$ can reach all other vertices. It remains to show that $w$ can reach all other vertices. This is analogous to the proof of \cref{lem:connectednph} since only edges that are not removed in the spanner construction are used for the connections.
We consider the following cases.
\begin{itemize}
    \item For a variable $x$, the vertex $x_2$ can be reached via the edge between $w$ and $x_2$ with label~6.
    \item For a variable $x$, the vertex $x_1$ can be reached via the edge between $w$ and $x_2$ with label~6, and then the edge between $x_2$ and $x_1$ with label 7.
    \item For a variable $x$, the vertex $x_F$ can be reached via the edge between $v$ and $x_2$ with label~6, and then the edges between $x_2$ and $x_F$ at time 7.
    \item For a variable $x$, the vertex $x_T$ can be reached by first reaching $x_F$ via the above-described connection and then the edge between $x_F$ and $x_T$ at time 8.
    %\item Vertex $v'$ can be reached via the edge between $w$ and $v'$ at time 3.
    \item For a variable $x$, the vertex $c_x$ can be reached via the edge between $w$ and $c_x$ with label~3.
    \item For a variable $x$, the vertices $c_{x,1}$ and $c_{x,1}$ can be reached as follows. If $x$ is set to true, then $c_{x,1}$ can be reached by first reaching $c_x$ via the edge between $w$ and $c_x$ with label 3, and then the edge between $c_x$ and $c_{x,1}$ at time 7. Vertex $c_{x,2}$ can be reached by first reaching $v$ via the edge between $w$ and $v$ with label 4, and then the edge between $v$ and $c_{x,2}$ at time 8.
    If $x$ is set to false, then $c_{x,1}$ and $c_{x,2}$ switch their roles.
    \item For a clause $c$ with $c\neq c^\star$, the vertex $c$ can be reached via the edge between $w$ and $c$ with label 3. Vertex $c^\star$ can be reached via the edge between $w$ and $c^\star$ with label 2.
    \item For a clause $c$, the vertices $c_{1}$, $c_2$, and $c_{3}$ can be reached as follows. Let variable $x$ be w.l.o.g.\ be contained in literal $\ell_i$ for some $i\in[3]$ of $c$ that satisfies $c$ and that we used in the construction of the spanner. Vertex $c_i$ can be reached by first reaching $c$ via the above-described connection and then the edge between $c$ and $c_i$ at time 7. 
    Vertices $c_{i'}$ for $i'\in[3]$ and $i'\neq i$ can be reached by first reaching $v$ via the edge between $w$ and $v$ with label 4, and then the edge between $v$ and $c_{i'}$ with label 8.
    \item Vertices $u$ and $v$ can each be reached via the edge between $w$ and $u$, $v$, respectively, at times 4 and~5, respectively.
\end{itemize}
This finishes the proof.
\end{proof}

Before we show the other direction of the correctness, we show several properties that all spanners of the constructed instance have. These properties reflect the intuition that we informally described earlier. 

In the following proofs, we refer to the labels of edges as follows: We say that an edge has a \emph{$t$-label}, if its label is $t$ after the first step of the construction, i.e. in $\mathcal{G'}$ (see \cref{fig:nphardness2}). If two edges $e,e'$ both have a $t$-label, then we must consider the possibility that in \G, the label of $e$ becomes larger than the label of $e'$ and vice versa. If edge $e$ has a $t$-label and edge $e'$ has a $t'$-label, with $t \ne t'$, then by construction we have that the label of $e$ is larger than the label of $e'$ if and only if $t>t'$.

We begin by proving a lemma that reflects the first insight.

\begin{lemma}\label{lem:blackedges2}
    Every temporal spanner of $\mathcal{G}$ contains all critical edges.
\end{lemma}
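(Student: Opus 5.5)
The plan is to show, for each critical edge $e=\{a,b\}$, that there is an ordered pair of vertices whose \emph{only} temporal path in $\mathcal{G}$ is the single edge $e$. In most cases this pair is $(a,b)$ or $(b,a)$ itself. Removing $e$ then destroys temporal connectivity, so every spanner contains $e$. Throughout we reason with $t$-labels. Since $\mathcal{G}$ is happy, temporal paths in $\mathcal{G}$ strictly increase in label. Hence any path in $\mathcal{G}$ uses $t$-labels that are non-decreasing, and two consecutive edges with equal $t$-label are only possible if their relative order in the tie-breaking permits it. To be safe, the argument should treat ties pessimistically, that is, allow consecutive equal $t$-labels.

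We proceed group by group.

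\textbf{Label-1 edges $\{u,a\}$.} The vertex $a$ must reach $u$. Every other edge incident to $u$ has $t$-label at least 2, while $u$'s label-1 edges all lead to vertices other than $a$. Any path from $a$ to $u$ that does not use $\{a,u\}$ must end with an edge into $u$. If that edge is a label-1 edge $\{a',u\}$, then the earlier part of the path needs a label-1 edge out of $a$ going to some vertex other than $u$, and $a$ has no such edge. If the last edge is $(\{u,v\},2)$ or $(\{u,w\},4)$, then we argue that $a$ cannot reach $v$ by $t$-label $\le 2$, nor $w$ by $t$-label $\le 4$, without passing through $u$. This is a short check on low labels: labels $\le 3$ only appear on edges at $u$, $v$, $w$, and $\{c^\star,w\}$; labels $\le 4$ additionally appear on the red edges $\{x_1,x_T\}$ and $\{x_1,x_F\}$. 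From a generic $a$ one cannot climb to $w$ by time 4 this way, since from $x_1$ the only route to $w$ starts with $(\{v,x_1\},3)$, which would require arriving at $v$ and then backtracking in time.

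\textbf{Edges $(\{u,v\},2)$ and $(\{u,w\},4)$.} For $(\{u,v\},2)$, the only other edges at $u$ have $t$-labels 1 and 4, and nothing leaves $u$ at label $\le 2$ towards $v$ except this edge. Going through a label-1 neighbour $a$ of $u$ and then to $v$ needs an edge from $a$ of label $\le 2$ towards $v$, and none exists. So $u$ cannot reach $v$ otherwise. For $(\{u,w\},4)$, the pair $w \to u$ works: edges into $u$ have $t$-labels 1, 2, 4, and $w$ cannot reach a label-1 neighbour of $u$ by time 1, nor $v$ by time 2, since $w$'s earliest edges have labels 2 and 3, to $c^\star$ and to the $c$/$c_x$ vertices.

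\textbf{Edges $\{c,w\}$ (label 3), $\{c_x,w\}$ (label 3) and $\{c^\star,w\}$ (label 2).} Take the pair $w\to c$. Every other edge into $c$ has $t$-label 7 (from $c_i$), or label 1 from $u$. Reaching $u$ from $w$ happens at time 4, which is too late for the label-1 edge. A path arriving at $c$ via $c_i$ at time 7 needs $w$ to reach $c_i$ by time 7, and we show this fails. The edges into $c_i$ have labels 6 (from $x_T$ or $x_F$), 8 (from $v$) and 1. From $w$, the vertex $x_F$ is reached at 7 and $x_T$ at 8, both too late. The label-1 edge at $c_i$ is unusable. The same argument handles $c_x$. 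For $c^\star$, which has no $u$-edge, the pair is $c^\star\to w$ or $w\to c^\star$; its only edges are $w$ (label 2) and its $c^\star_i$ (label 7), so $w\to c^\star$ works similarly.

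\textbf{Variable-gadget critical edges.} Here we use the pairs $v\to x_1$ for label 3, $w\to x_2$ for label 6, $w\to x_F$ for label 7, and $w \to x_1$ and $w\to x_T$ for the two label-8 edges. In each case we list the edges into the target and check that $w$ (respectively $v$) cannot reach the other endpoints early enough. For example, for $w\to x_T$, the other edges into $x_T$ are label 4 from $x_1$, which $w$ reaches only at 8; label 6 from $c_{x,1}$ or $c_i$, which $w$ reaches only at 7 or 8; and label 1. The case of $v\to x_1$ uses that $v$'s edges have labels 2, 3, 5, 8, and that $x_1$'s other edges have labels 1, 4, 8, from vertices $v$ cannot reach by time 4 except via $x_1$ itself.

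The main obstacle is bookkeeping rather than any conceptual difficulty. In each case we have to exhaustively rule out alternative routes, and we must remember that within a single $t$-label class the relabelling may order edges either way. The attention should go to the label-1 edges and to the pairs $w\to x_1$ and $w\to x_T$, where label-8 ties could create unexpected paths.
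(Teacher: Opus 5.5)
Your strategy matches the paper's: certify each critical edge by an ordered pair of vertices that becomes disconnected once the edge is deleted. Several of your pairs work and coincide with the paper's. These are $\{u,w\}$, the edges between $w$ and $c$, $c_x$, $c^\star$, the critical gadget edges certified via $x_2$, $x_F$, $x_1$, $x_T$, and the label-1 edges at most vertices. However, three of your witness pairs are still connected after the deletion, so the proof has genuine gaps.

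The most serious gap concerns the label-1 edges $\{u,c\}$ for $c\neq c^\star$ and $\{u,c_x\}$. Here $(c,u)$ is not a witness, because $(\{c,w\},3),(\{w,u\},4)$ is a temporal path, and likewise for $c_x$. Your ``short check on low labels'' overlooks that $c$ and $c_x$ are themselves endpoints of 3-edges at $w$. The paper treats exactly these vertices separately, with the pair $(c,c^\star)$ (resp.\ $(c_x,c^\star)$).
\begin{itemize}
\item The edges into $c^\star$ are $(\{w,c^\star\},2)$ and the 7-edges from the $c^\star_i$.
\item The 2-edge is unusable, since every remaining edge at $c$ has label at least 3.
\item Entering some $c^\star_i$ before its 7-edge requires its 6-edge, hence $x_T$ or $x_F$ via a 4-edge from $x_1$, hence $(\{v,x_1\},3)$.
\item That in turn requires being at $v$ before label 3, hence reaching $u$ by a 1-edge, which from $c$ is only the deleted edge.
\end{itemize}
For such a pair your criterion must read ``every temporal path uses $e$'', not ``the only path is $e$ itself''.

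The other two failures are $(u,v)$ for $(\{u,v\},2)$ and $(v,x_1)$ for $(\{v,x_1\},3)$.

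For $(u,v)$: nothing forces $u$ to arrive at $v$ by label 2. Both $(\{u,x_1\},1),(\{x_1,v\},3)$ and $(\{u,w\},4),(\{w,v\},5)$ are temporal paths from $u$ to $v$. Reverse the pair instead. Without $\{u,v\}$, the edges at $v$ have labels $3,5,8$, so $v$ cannot use a 1-edge into $u$. It also cannot be at $w$ before label 5, which is too late for $(\{u,w\},4)$.

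For $(v,x_1)$: $(\{v,w\},5),(\{w,x_2\},6),(\{x_2,x_1\},8)$ is a temporal path, so your claim about $x_1$'s other neighbours ignores the 8-edge from $x_2$. The paper uses $(v,c_x)$ instead.
\begin{itemize}
\item $v$ reaches $u$ no earlier than label 2 and $w$ no earlier than label 4, so it can use neither $(\{c_x,u\},1)$ nor $(\{c_x,w\},3)$.
\item It must therefore enter $c_x$ from $c_{x,1}$ or $c_{x,2}$ before label 7.
\item That needs $x_T$ or $x_F$ before label 6, hence $x_1$ before label 4, which is only possible via the deleted 3-edge.
\end{itemize}
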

\begin{proof}
Recall that the critical edges are depicted in black in \cref{fig:nphardness2}.
First, consider the edges incident with $u$. Note that all 1-edges in $\mathcal{G}$ are incident with $u$. 
This means that a 1-edge $e$ can only be used to arrive at $u$ by temporal paths that have $e$ as their only edge. The only other edges incident with $u$ are a 2-edge between $u$ and $v$, and a 4-edge between $u$ and $w$. Note that the 2-edge $\{u,v\}$ can only be used to arrive at $u$ by the temporal path that has $\{u,v\}$ as its only edge, since all other edges incident with $v$ have larger labels.
Consider the 4-edge between $u$ and $w$, which is the only 4-edge incident with $w$. Note that in order for a temporal path from a vertex different from $w$ to arrive at $u$ via this edge, the temporal path has to arrive at $w$ via a 3-edge or a 2-edge (there are no 1-edges incident with $w$). There is only one 2-edge incident with $w$, which connects $w$ to $c^\star$. All 3-edges incident with $w$ are between $w$ and vertices $c$ for clauses $c$ or vertices $c_x$ for variables $x$. For the purpose of this analysis, vertices $c$ and $c_x$ behave analogously. The only edges incident with those vertices with (potentially) earlier labels than the 3-edge to $w$ are the 1-edges to $u$. It follows that all temporal paths that arrive at $u$ via the 4-edge between $u$ and $w$ start from vertices $c$, $c_x$, or from $w$. We can immediately conclude that all 1-edges between $u$ and vertices that are different from vertices $c$ for some clause $c$ or $c_x$ for some variable $x$ are contained in every temporal spanner for $\mathcal{G}$.

Now consider vertices $c$ for clauses $c$ or vertices $c_x$ for variables $x$. In the following analysis, vertices $c$ and $c_x$ again behave analogously.
Recall that vertex $c^\star$ is not connected to $u$ via a 1-edge. Assume for contradiction that the 1-edge from $u$ to some vertex $c$ for some clause $c$ is not contained is a spanner for $\mathcal{G}$. 
Let $P$ be a temporal path from $c$ to $c^\star$ in the spanner. 
By construction, we have that $c$ has at most four incident edges in the spanner: a 3-edge to $w$, and three 7-edges, to $c_1$,~$c_2$, and $c_3$, respectively. 
Note that none of the 7-edges can be used by any temporal path from $c$ to~$c^\star$, since the only way to continue from one of the vertices $c_1$, $c_2$, or $c_3$ is the 8-edge to $v$. Then the temporal path can potentially use another $8$-edge to some $c'_1$, $c'_2$, or $c'_3$ for some clause $c'$, so some $c_{x,1}$ or $c_{x,2}$ for some variable $x$, or to $c^\star_1$, $c^\star_2$, or $c^\star_3$. From there, the temporal path cannot continue.
It follows that the first edge of $P$ must be the 3-edge from $c$ to $w$. 
Note that by construction, we have that $c^\star$ has at most four incident edges in the spanner: a 2-edge to $w$, and three 7-edges, to $c^\star_1$, $c^\star_2$, and $c^\star_3$, respectively. Hence, in particular, the temporal path cannot continue from $w$ directly to $c^\star$ because the edge between $w$ and $c^\star$ is a 2-edge.
Now consider the second edge of $P$.
\begin{itemize}
\item The 4-edge from $w$ to $u$ cannot be the second edge of $P$, since then $P$ cannot continue from~$u$.
\item The 5-edge from $w$ to $v$ cannot be the second edge of $P$, since then $P$ can only continue with an 8-edge and is hence too late to arrive at $c^\star$.
\item If a 6-edge to some vertex $x_2$ for some variable $x$ is the second edge of $P$, then consider the following cases:
\begin{itemize}
\item If the path continues to $x_1$ via the 8-edge between $x_1$ and $x_2$, then it is too late to arrive at $c^\star$.
\item If the path continues to $x_F$ via the 7-edge between $x_2$ and $x_F$, then it can only continue via the 8-edge between $x_F$ and $x_T$, and then it is too late to arrive at $c^\star$.
\end{itemize}
\end{itemize}
Hence, we arrive at a contradiction to the assumption that the spanner does not contain the 1-edge from $c$ to $u$.
We can conclude that each spanner for $\mathcal{G}$ contains all edges that are incident with $u$.
%By construction, we have that $c^\star$ has at most four incident edges: a 2-edge to $w$, and three 7-edges, to $c^\star_1$, $c^\star_2$, and $c^\star_3$, respectively. Note that the 2-edge cannot be used by any temporal path to $c^\star$ starting from a vertex different from $w$, since all other edges incident with $w$ have larger labels. 

%First of all, observe that every edge that is incident with $u$ needs to be in every spanner, since each vertex can reach $u$ only via the direct edge. Vertex $u$ cannot be reached via $v$ or $w$, since a temporal path from another vertex cannot arrive at $v$ or $w$, respectively, early enough.

Now consider the edges incident with $v$:
\begin{itemize}
    \item If the edge from $v$ to $u$ is missing, then $v$ cannot reach $u$.
    %\item If the edge from $v$ to $w$ is missing, then $w$ cannot reach $v$.
    \item If a 3-edge from $v$ to a vertex $x_1$ is missing, then $v$ cannot reach $c_x$.

    Note that the 1-edge incident with $c_x$ cannot be used by a temporal path starting at $v$.
    Consider the 3-edge between $c_x$ and $w$. Those edges cannot be used by paths starting at $v$, since they cannot arrive at $w$ sufficiently early. All other incident edges are 7-edges and come from $c_{x,1}$ or $c_{x,2}$. Hence, a temporal path from~$v$ must either reach $c_{x,1}$ or $c_{x,1}$ sufficiently early. 
    This cannot be done via the 8-edges incident with $v$. The 5-edge from $v$ to $w$ cannot be used, since from there it is only possible to continue with a 6-edge to $x_2$ or some $y_2$, and from there the path can only continue with a 7-edge and then an 8-edge, or an 8-edge.
    Hence, the path must use the 3-edge from $v$ to $x_1$, since if it uses a 3-edge from $v$ to some $y_1$, the path cannot ``leave'' the gadget corresponding to variable $y$ sufficiently early. 
\end{itemize}

Next, consider edges incident with $w$:
\begin{itemize}
    \item If the edge from $w$ to $u$ is missing, then $w$ cannot reach $u$.
    %\item If the edge from $w$ to $v$ is missing, then $w$ cannot reach $v$.
    \item If an edge from $w$ to some $x_2$ is missing, then $w$ cannot reach $x_2$.

    This follows from the observation that a temporal path from $w$ to $x_2$ cannot use $u$ or $v$ as its second vertex, since it cannot proceed from $u$, and can only proceed via an 8-edge from $v$ and then cannot proceed further. Furthermore, a temporal path from $w$ to $x_2$ cannot have $y_2$ for some $y\neq x$ as its second vertex, since then the temporal path cannot ``leave'' the gadget corresponding to $y$. Finally, a temporal path from $w$ to $x_2$ cannot have a vertex $c$, $c_x$, $c^\star$, or~$c_y$ as second vertex, since from those vertices the temporal path cannot ``enter'' the gadget corresponding to $x$ and hence cannot reach $x_2$.
    \item If the edge from $u$ to some $c$ or $c_{x}$ is missing, then $w$ cannot reach $c$ or $c_{x}$, respectively.

    This follows from the observation that all other edges incident with $c$ and $c_x$ are either 1-edges or~7-edges. The 1-edges cannot be used by temporal paths starting at $w$. If a temporal path starting at $w$ visits $u$ or $v$, it cannot reach any $c$ or $c_{x}$ vertices. If a temporal path starting at $w$ visits some $x_2$ as its second vertex, then the path cannot ``leave'' the gadget corresponding to variable $x$, and hence also cannot reach any $c$ or $c_{x}$ vertices. It follows that the temporal path from $w$ to some $c$ or $c_{x}$ needs to use the edge between those vertices.
\end{itemize}
Finally, consider a variable $x$. The following statement follows from the observation that a temporal path from $w$ to a vertex in a variable gadget cannot have $u$ as its second vertex, since the path cannot proceed from there. 
The second vertex also cannot be $v$, since from there the path can only continue with an 8-edge and cannot ``enter'' a variable gadget.
Finally, the second vertex also cannot be some $y_1$ with $y\neq x$, since then the path cannot ``leave'' the gadget corresponding to variable $y$. It follows that the second vertex in the path is $x_1$, which is reached via a 6-edge.
\begin{itemize}
    \item If the edge from $x_1$ to $x_2$ is missing, then $w$ cannot reach $x_1$.
    \item If the edge from $x_2$ to $x_F$ is missing, then $w$ cannot reach $x_F$.
    \item If the edge from $x_T$ to $x_F$ is missing, then $w$ cannot reach $x_T$.
\end{itemize}
This finishes the proof.
\end{proof}

Next, we prove a lemma that reflects the second insight. Intuitively, if a minimum spanner does not contain the orange edge in \cref{fig:nphardness2}, then we look at the temporal path from $w$ to $v$ in the spanner and argue that one of the edges of that path is needed exclusively to provide the temporal connection from $w$ to $v$. Then we can exchange this edge with the orange edge $\{v,w\}$, which will also provide a temporal connection from $w$ to $v$.

\begin{lemma}\label{lem:orangeedge}
    There exists a minimum temporal spanner for $\mathcal{G}$ that contains edge $\{v,w\}$.
\end{lemma}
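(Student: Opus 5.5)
The plan is an exchange argument. Fix a minimum temporal spanner $\mathcal{S}$ of $\mathcal{G}$ that does not contain $\{v,w\}$. I will find an edge $e$ of $\mathcal{S}$ such that $\mathcal{S}-e+\{v,w\}$ is still a temporal spanner. This new spanner has the same size and contains $\{v,w\}$.

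\textbf{Step 1: classify the $w$-to-$v$ paths.} As in the proof of \cref{lem:blackedges2}, I would reason with $t$-labels and list the possible first edges at $w$: the $4$-edge to $u$, the $2$-edge to $c^\star$, the $3$-edges to the vertices $c$ and $c_x$, and the $6$-edges to the vertices $x_2$.
\begin{itemize}
\item Going to $u$ is a dead end, since every other edge at $u$ has a smaller label.
\item Going to $x_2$ the path can continue only with $\{x_2,x_F\}$ (7) or $\{x_2,x_1\}$ (8). It then gets stuck: $\{x_F,x_T\}$ is an 8-edge, and every edge leaving $x_T$, $x_F$ or $x_1$ towards $v$ or a clause vertex has label at most $6$.
\item The only remaining option is $w\to c\to c_i\to v$, with labels $3$ (or $2$ for $c^\star$), then $7$, then $8$; or the analogous route through $c_x$ and some $c_{x,j}$.
\end{itemize}
So $\mathcal{S}$ contains such a path $P$ for some clause-type vertex $c$ and some $i$. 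In particular $\mathcal{S}$ contains both $\{c,c_i\}$ and the $8$-edge $\{c_i,v\}$.

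\textbf{Step 2: choose the edge to swap out.} I take $e=\{c_i,v\}$ and set $\mathcal{S}'=\mathcal{S}-e+\{v,w\}$. I then need to check that every temporal path of $\mathcal{S}$ using $e$ can be rerouted in $\mathcal{S}'$.

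Since $e$ is an 8-edge, any path using it either ends right after $e$, or continues from $v$ along another 8-edge $\{v,c'_j\}$ (or $\{v,c_{y,j}\}$) that comes later in the tie-broken order, and then stops.
\begin{itemize}
\item Paths of $\mathcal{S}$ that need $e$ to reach $v$ come only from $w$, or from $c$ through $c_i$. The $w$-to-$v$ connection is now given directly by $\{v,w\}$. Every other vertex except $c^\star$ reaches $v$ via the critical edges $\{\cdot,u\}$ and $\{u,v\}$ (\cref{lem:blackedges2}), and $c^\star$ reaches $v$ through $w$ and then $\{v,w\}$.
\item Paths that reach some $c'_j$ by going through $c_i$, $e$, $v$ and then an 8-edge can be replaced by a route that reaches $v$ earlier, either via $u$ or via $w\to v$ at time $5$, and then takes the same final 8-edge.
\end{itemize}

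\textbf{Step 3 (main obstacle): reaching $c_i$ itself without $e$.} The difficult case is a path that uses $e$ in the direction $v\to c_i$ as its final edge, i.e.\ sources that reach $c_i$ only through $v$ at time $8$. Here I would try to use the edge $\{c,c_i\}$ in the other direction. Every source that can reach $v$ can also reach $c$ at time at most $7$, because $c$ is reachable from $w$ at time $3$ (or $2$ for $c^\star$), and $w$ is reachable via $u$ and the critical edge $\{u,w\}$. The source then continues along $c\to c_i$ at label $7$.

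This works only if $c$ is reached \emph{before} the tie-broken label of $\{c,c_i\}$. For paths that pass through $w$ this holds, because the edge $\{w,c\}$ has label $3$ (or $2$), while the relevant edges at $c$ carry 7-labels. I expect the main care to be needed in handling sources that do not obviously route through $w$, namely $v$ itself and vertices that reach $c_i$ through $v$. For these I would use the critical $4$-edge $\{u,w\}$ to get to $w$ early.

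If this rerouting fails for some source, the fallback is to swap out $\{c,c_i\}$ instead of $e$, and argue symmetrically that its only essential role in $\mathcal{S}$ was to carry the $w$-to-$v$ connection.
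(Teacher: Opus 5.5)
Your Steps 1 and 2 match the paper. Step 3, however, has a genuine gap, and it is where the substance of the lemma lies. You claim that every source can reach $c$ by a 7-label ``through $w$''. This is false. The critical edge $\{u,w\}$ is a 4-edge, so anyone arriving at $w$ via $u$ has already missed the 3-edge $\{w,c\}$ (and the 2-edge $\{w,c^\star\}$). That includes $v$, via $\{v,u\}$. The only sources that can traverse $\{w,c\}$ from $w$ to $c$ are $w$ and $c^\star$.

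Every other source, and $v$ in particular, can reach $c$ only along $v\to x_1\to x_T/x_F\to c_j\to c$. This is exactly the mechanism the reduction is built on. So after deleting $\{c_i,v\}$, reaching $c_i$ ``via $c$'' means entering $c$ through some 7-edge $\{c_j,c\}$ and leaving through the 7-edge $\{c,c_i\}$. Whether that is possible depends on the arbitrary tie-breaking among the 7-edges at $c$, and it can fail outright. For example, it fails if $c_i$ has no variable-gadget edge in $\mathcal{S}$ and $\{c,c_i\}$ is the first 7-edge at $c$.

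Your fallback of swapping out $\{c,c_i\}$ cannot be applied blindly either. If $\mathcal{S}$ contains $\{c_i,x_T\}$ (or $\{c_i,x_F\}$), then $\{c,c_i\}$ may be the very edge through which $v$ reaches $c$, and deleting it breaks $v\to c$.

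What is missing is the paper's case distinction on whether $c_i$ has an incident variable-gadget edge in $\mathcal{S}$.

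\textbf{No such edge.} Remove $\{c,c_i\}$. Then no path from $v$ to $c$ can pass through $c_i$, since $v$ could enter $c_i$ only via the 8-edge. So $v$'s path to $c$ survives, and every vertex can still reach $c_i$ via $v$ and the 8-edge $\{v,c_i\}$.

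\textbf{Such an edge exists,} say $\{c_i,x_T\}$. Remove $\{c_i,v\}$. All sources other than $w$ and $c^\star$ reach $c_i$ along $u\to v\to x_1\to x_T\to c_i$, with labels $1,2,3,4,6$. The sources $w$ and $c^\star$ use $\{w,c\}$ and then $\{c,c_i\}$.

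This second case requires $\{x_1,x_T\}\in\mathcal{S}$, which is not automatic. The paper derives it from minimality. Suppose $\{x_1,x_T\}$ were absent. Then $v$'s path to $c$ must go through another $c_{i'}$. In that situation $\{c_i,x_T\}$ is redundant: its only uses connect vertices with 1-edges to the gadgets of $x$ and $c$, and those vertices already reach these gadgets via $u$ and $v$. This contradicts the minimality of $\mathcal{S}$.
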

\begin{proof}
    Assume that a minimum temporal spanner $\mathcal{G}'$ does not contain edge $\{v,w\}$. The spanner must contain a temporal path from $w$ to $v$. This temporal path cannot arrive at $v$ from $u$, since it cannot arrive at $u$ sufficiently early. It also cannot arrive at $v$ from some $x_1$, since $w$ cannot reach~$x_1$ sufficiently early. Hence, the temporal path from $w$ to $v$ has to arrive at $v$ from some $c_i$ with $i\in[3]$ or $c_{x,i}$. Assume it is some $c_i$ (the case where it is some $c_{x,i}$ is analogous). A temporal path from $w$ cannot arrive at $c_i$ by first using an edge to some $x_2$, since then it cannot ``leave'' the gadget corresponding to variable $x$. Hence, the temporal path arrives from to $c_i$ from $c$. The only way for $w$ to reach $c$ is via the direct edge. The argument is analogous to the one in \cref{lem:blackedges2} that the edges from $w$ to $c$ are in every spanner.
    In particular, it follows that $\mathcal{G}'$ contains both the edges~$\{c,c_i\}$ and $\{c_i,v\}$. 
    
    Now we argue that we can do one of the following modifications:
    \begin{enumerate}
        \item If the spanner contains the edges from $c_i$ to some $x_T$ or $x_F$, then we can remove the edge from $v$ to $c_i$ from the spanner and add the edge from $w$ to $v$. 
%        then due to minimality, we cannot remove the edge without breaking some reachability. The only reachability that uses this edge is from $v$ to $c$ (via $c_i$). This means that $v$ does not need the direct edge to $c_i$ to reach it. We can remove the edge from $v$ to $c_i$ from the spanner and add the edge from $w$ to $v$. 
        \item If the spanner does not contain an edge from $c_i$ to some $x_T$ or $x_F$, then we can remove the edge from $c$ to $c_i$ from the spanner and add the edge from $w$ to $v$. 
    \end{enumerate}
Let $\mathcal{G}''$ denote $\mathcal{G}'$ after the modifications. We clearly have that $\mathcal{G}'$ and $\mathcal{G}''$ have the same size, that is, number of edges. It remains to show in both cases for the modification that all reachabilities are preserved, that is, that $\mathcal{G}''$ is a spanner.
\begin{enumerate}
    \item Assume that the first modification is applied to $\mathcal{G}'$ to obtain $\mathcal{G}''$. Note that the remove edge $\{v,c_i\}$ is an 8-edge. This means, that this edge can only appear as the last or the second-last edge in a temporal path (note that in $\mathcal{G}$ there are at most two consecutive 8-edges in any temporal path). Hence, we have the following cases:
    %\begin{itemize}
        %\item 
        
        If $\{v,c_i\}$ is the last edge of a temporal path, this path arrives either at $v$ or at $c_i$. Hence, we have to check that in $\mathcal{G}''$ all vertices can reach $v$ and all vertices can reach $c_i$.
        
        Recall that by \cref{lem:blackedges2}, we have that all vertices except $w$ and $c^\star$ can reach $v$ via $u$. The edge $\{v,w\}$ ensures that $w$ can reach $v$ and also $c^\star$ can reach $v$ via $w$. It follows that all vertices can reach $v$ in $\mathcal{G}''$.

        In the following, we argue that all vertices can reach $c_i$ in $\mathcal{G}''$.
        Recall that $\mathcal{G}''$ contains edge $\{c,c_i\}$ and one of the edges $\{c_i,x_T\},\{c_i,x_F\}$ for some variable~$x$. Assume that edge $\{c_i,x_T\}$ is in $\mathcal{G}''$ (the case for edge $\{c_i,x_F\}$ is analogous). 
        First, we have that $w$ (and $c^\star$) can reach $c_i$ via $c$ (and $w$). If edge $\{x_1,x_T\}$ is in $\mathcal{G}''$, then $v$ can reach $c_i$ and all vertices except $w$ and $c^\star$ can reach $c_i$ via $v$. Now assume for contradiction that edge $\{x_1,x_T\}$ is not in $\mathcal{G}''$. Then by definition, this edge is also not in the spanner $\mathcal{G}'$. Consider the temporal path $P$ from $v$ to $c$ in $\mathcal{G}'$. 
        
        The temporal path $P$ must visit the variable gadget of some variable $y\neq x$, since it cannot traverse the variable gadget for variable $x$. More specifically, the temporal path first visits~$y_1$, then it visits $y_T$ or $y_F$, then it visits $c_{i'}$ with $i'\neq i$, and then it visits $c$. Assume the path visits $y_T$ (the case where it visits $y_F$ is analogous). Then, in particular, the edges $\{y_T,c_{i'}\}$ and $\{c_{i'},c\}$ are in the spanner $\mathcal{G}'$. We claim that then we can remove edge $\{c_i,x_T\}$ from $\mathcal{G}'$ while preserving all reachabilities. This is a contradiction to the assumption that $\mathcal{G}'$ is minimum. Consider $\mathcal{G}'''$ which is $\mathcal{G}'$ without edge $\{c_i,x_T\}$. 
            Recall that by assumption, the edge $\{x_1,x_T\}$ is not contained in $\mathcal{G}'$ and hence not contained in $\mathcal{G}'''$. We have that temporal paths in $\mathcal{G}'$ that use edge $\{c_i,x_T\}$ either start with zero, one, or two 1-edges, and then use edge $\{c_i,x_T\}$. If the path arrives at $x_T$ after using edge $\{c_i,x_T\}$, then it can reach $x_F$. If the edge in $\mathcal{G}'$ is $\{c_i,x_F\}$, then the path can reach all vertices in the variable gadget corresponding to $x$.
            If the path arrives at $c_i$ after using edge $\{c_i,x_T\}$, then it can reach, then it can reach $c$ and potentially some  $c_{i'}$ with $i'\neq i$. We can summarize that a temporal path using edge $\{c_i,x_T\}$ can (potentially) connect a vertex that is incident with a 1-edge to a vertex of the variable gadget corresponding to $x$ or the clause gadget corresponding to $c$.

            However, every vertex incident with a 1-edge can reach $u$, then reach $v$ via the 2-edge, and then by \cref{lem:blackedges2} can reach all vertices of the variable gadget corresponding to $x$. Furthermore, since by assumption there is a temporal path from $v$ to $c$ that does not use edge $\{c_i,x_T\}$, every vertex incident with a 1-edge can also reach all vertices of the clause gadget corresponding to~$c$. It follows that $\mathcal{G}'''$ is a spanner, a contradiction to the assumption that $\mathcal{G}'$ is minimum.

        We can conclude that $\{x_1,x_T\}$ is in $\mathcal{G}''$. Hence, each vertex except $u$, $v$, $w$ and $c^\star$ can reach $c_i$ by first reaching $u$ with a 1-edge, then $v$ with a 2-edge, then $x_1$ with a 3-edge, then $x_T$ with a 4-edge, and finally $c_i$ with a 6-edge. Vertices $u$ and $v$ can clearly also reach $c_i$ via a subpath of the above-described path. Vertex $c^\star$ can reach $c_i$ by first reaching $w$ with a 2-edge, then $c$ with a 3-edge, and then $c_i$ with a 7-edge. Vertex $w$ can clearly reach $c_i$ via a subpath of the above-described path.
        
        If $\{v,c_i\}$ is the second-last edge of a path, this path arrives at some $c'_{i'}$, where $\{v,c'_{i'}\}$ is the last edge of the path. Clearly, in $\mathcal{G}''$ all vertices can still reach $c'_{i'}$, since $\mathcal{G}''$ also contains edge $\{v,c'_{i'}\}$ and each vertex can reach $v$ early enough to reach $c'_{i'}$ via that edge.
        
        \item Assume that the second modification is applied to $\mathcal{G}'$ to obtain $\mathcal{G}''$.
        Note that the remove edge $\{c,c_i\}$ is a 7-edge.

        Let $P$ be a temporal path in $\mathcal{G}'$ that contains edge $\{c,c_i\}$. We first make a case-distinction on whether $P$ first visits $c_i$ and then $c$ or vice-versa.
        \begin{itemize}
            \item Assume that $P$ first visits $c_i$ and then $c$. Note that since $\mathcal{G}'$ does not contain an edge from $c_i$ to some $x_T$ or $x_F$, we have that $P$ starts either at $u$ or at $c_i$. 
            Then $P$ visits $c$. Afterwards $P$ can potentially use another 7-edge to reach some $c_{i'}$. From there, $P$ can potentially reach $v$ with an 8-edge and then some $c'_{i''}$ with another 8-edge.

            Hence, we need to show the following: In $\mathcal{G}''$, all vertices can reach $c$, $c_{i'}$, $v$, and $c'_{i''}$.

            Consider the temporal path $P'$ from $v$ to $c$ in $\mathcal{G}'$. Since neither $\{c_i,x_T\}$ nor $\{c_i,x_F\}$ is in $\mathcal{G}'$, we have that $P'$ must visit the variable gadget of some variable $y\neq x$. More specifically, the temporal path first visits $y_1$, then it visits $y_T$ or $y_F$, then it visits $c_{i'}$ with $i'\neq i$, and then it visits $c$. Clearly, $P'$ is also a temporal path in $\mathcal{G}''$. It follows (also using \cref{lem:blackedges2}) that all vertices except $w$ and $c^\star$ can reach $c$ by first reaching~$v$ (potentially via $u$) and then following $P'$. Vertex $w$ can reach $c$ via the direct edge between them and~$c^\star$ can reach $c$ via $w$. 

            Now consider the temporal path $P'$ from $v$ to $c_{i'}$ in $\mathcal{G}'$. If this path contains edge $\{c,c_i\}$, then it first visits $c$, since first visiting $c_i$ would require $P'$ to use the 8-edge from $v$ to~$c_i$, a contradiction to the path using the 7-edge $\{c,c_i\}$ afterwards. However, then from $c_i$ it can only use the 8-edge $\{c_i,v\}$, revisiting $v$, a contradiction to $P'$ being a path. We can conclude that $P'$ does not contain edge $\{c,c_i\}$ and hence is also contained in $\mathcal{G}''$.
            It follows (also using \cref{lem:blackedges2}) that all vertices except $w$ and $c^\star$ can reach~$c_{i'}$ by first reaching~$v$ (potentially via $u$) and then following $P'$. If $P'$ visits $c$ before~$c_{i'}$, then vertex~$w$ can reach $c_{i'}$ via $c$ and~$c^\star$ can reach $c_{i'}$ via $w$ and $c$. If $P'$ does not visit~$c$, then $P'$ must be the 8-edge from $v$ to $c_{i'}$, and by \cref{lem:blackedges2} and the fact that $\mathcal{G}''$ contains the 5-edge $\{w,v\}$, all vertices can reach $c_{i'}$ by first reaching $v$ and then using the 8-edge $\{v,c_{i'}\}$. This argument implies that all vertices can reach $v$. Finally, the argument that all vertices can reach $c'_{i''}$ is analogous to the one for $c_{i'}$.
            \item Assume that $P$ first visits $c$ and then $c_i$. Then $P$ can potentially visit $v$ with an 8-edge, and afterwards some $c'_{i'}$ with another 8-edge.

            Hence, we need to show the following: In $\mathcal{G}''$, all vertices can reach $c_i$, $v$, and $c'_{i'}$.

            In the previous case, we already argued that all verices can reach $v$ and $c'_{i'}$. It remains to show that all vertices can reach $c_i$. Recall that $\mathcal{G}''$ contains the 8-edge $\{v,c_i\}$. Hence, by \cref{lem:blackedges2} and the fact that $\mathcal{G}''$ contains the 5-edge $\{w,v\}$, all vertices can reach $c_{i}$ by first reaching $v$ and then using the 8-edge $\{v,c_{i}\}$.
        \end{itemize}
    %\end{itemize}
\end{enumerate}
    This finishes the proof.
\end{proof}

The next three lemmas reflect insights three to five. In all cases, the statements are shown by proving that keeping certain edges in the spanner in necessary to establish a temporal path from $v$ to a clause vertex $c$ or a dummy clause vertex $c_x$.

\begin{lemma}\label{lem:rededges2}
    For every temporal spanner of $\mathcal{G}$ it holds that for each variable $x$, the spanner contains at least one of the edges $\{x_1,x_T\}$ and $\{x_1,x_F\}$.
\end{lemma}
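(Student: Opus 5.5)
We argue by contradiction: suppose some temporal spanner of $\mathcal{G}$ contains neither $\{x_1,x_T\}$ nor $\{x_1,x_F\}$ for some variable $x$. We then show that $v$ cannot reach the dummy clause vertex $c_x$. As in the proofs of \cref{lem:blackedges2} and \cref{lem:orangeedge}, we reason with $t$-labels of $\mathcal{G}'$. Within a single $t$-class the relative order of edges in $\mathcal{G}$ is arbitrary, so every argument will rely only on strict comparisons between different $t$-classes. Two such facts will be used repeatedly. First, two distinct edges with the same $t$-label can be used consecutively in either order. Second, no temporal path from $v$ can use a $t$-edge after a $t'$-edge with $t'>t$.

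We fix a temporal path $P$ from $v$ to $c_x$ in the spanner and examine the last edge of $P$. The edges at $c_x$ are the 1-edge to $u$, the 3-edge to $w$, and the 7-edges to $c_{x,1}$ and $c_{x,2}$. The 1-edge is unusable, since a path starting at $v$ uses the 2-edge or later first. The 3-edge to $w$ is also unusable: $v$ reaches $w$ only via the 5-edge $\{v,w\}$ or via $u$ and the 4-edge $\{u,w\}$, so it arrives at $w$ with a label of at least 4. Hence $P$ ends with a 7-edge from $c_{x,1}$ or $c_{x,2}$, and $P$ must reach one of these vertices using only edges with labels at most 7.

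Next we list the edges at $c_{x,1}$ and $c_{x,2}$. These are 1-edges to $u$, the 8-edges to $v$, the 7-edges to $c_x$, and the 6-edges to $x_T$ and $x_F$ respectively. The 8-edges are too late. A 1-edge cannot be the entry edge, because $v$ reaches $u$ only via a 2-edge or later. The 7-edge from $c_x$ would mean $P$ visits $c_x$ before its end. So $P$ must enter through $x_T$ or $x_F$, arriving there by time 6. We then show that $v$ cannot reach $x_T$ or $x_F$ before the 6-label without using $\{x_1,x_T\}$ or $\{x_1,x_F\}$, by looking at the other edges at these vertices. These are the 1-edges (unusable), the 8-edge $\{x_T,x_F\}$ and the 7-edge $\{x_2,x_F\}$ (too late), and the 6-edges to clause vertices $c_i$ or $c_{x,j}$.

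Entering through such a 6-edge from some $c_i$ or $c_{x,j}$ is the main obstacle, since 6-edges can be chained in any order within their class. The cleanest way to handle it is a minimal-counterexample argument. Consider the first time $P$ enters any vertex of the form $y_T$ or $y_F$, for any variable $y$, using only edges with labels at most 6. Just before that, $P$ must already be at some $c_i$ or $c_{y,j}$. Reaching such a vertex by time 6 from $v$ requires its 6-edge from some $y'_T$ or $y'_F$, since its other edges are 1-, 7- and 8-edges, and this contradicts the choice of first entry. So the first entry into any $y_T$ or $y_F$ must come from $y_1$ via a 4-edge. Applying this to $y=x$, and observing that $P$ cannot reach $x_T$ or $x_F$ through another variable's gadget in time, gives the contradiction. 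The remaining work is to check carefully that no edge with label below 6 leaves a variable gadget toward clause vertices.
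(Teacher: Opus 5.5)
Your proof is correct and follows the paper's route. Like the paper, you assume both red edges are missing and show that $v$ cannot reach $c_x$: you trace a putative path back through $c_{x,1}$ or $c_{x,2}$ to $x_T$ or $x_F$, which can then only be reached too late. Your extra care about chained 6-edges fills a step the paper glosses over. It can also be shortened: every $c_i$ and $c_{y,j}$ has exactly one 6-edge, and its other edges are 1-, 7- and 8-edges. So entering $x_T$ or $x_F$ by a 6-edge would require having entered that clause-side vertex through the same edge, which makes the first-entry bookkeeping and the cross-gadget check unnecessary.
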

\begin{proof}
    Assume for contradiction that there is a strict temporal spanner of $\mathcal{G}$ that contains neither $\{x_1,x_T\}$ nor $\{x_1,x_F\}$. Then we have $v$ cannot reach $c_x$: Vertex $v$ cannot reach $c_x$ via the 8-edges to $c_{x,1}$ or $c_{x,2}$, as it cannot proceed from those vertices. Vertex $v$ also cannot reach $c_x$ via $w$ and then the direct edge to $c_x$, as a temporal path from $v$ cannot use the 3-edge from $w$ to $c_x$. It follows that a temporal path from $v$ to $c_x$ needs to visit $x_T$ or $x_F$ and then proceed to $c_x$ via $c_{x,1}$ or $c_{x,2}$. However, if both $\{x_1,x_T\}$ and $\{x_1,x_F\}$ are not present in the spanner, then the earliest time a temporal path from $v$ can reach $x_T$ or $x_F$ with a 7-edge. This path cannot proceed to $c_x$.
\end{proof}

\begin{lemma}\label{lem:greenedges2}
    For every temporal spanner of $\mathcal{G}$, for each variable $x$, the spanner contains both edges $\{x_T,c_{x,1}\},\{c_{x,1},c_{x}\}$ or both edges $\{x_T,c_{x,2}\},\{c_{x,2},c_{x}\}$. Furthermore, the spanner contains at least two edges incident with $c_{x,1}$ and at least two edges incident with $c_{x,2}$.
\end{lemma}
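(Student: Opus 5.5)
The plan is to argue, as in the proof of \cref{lem:rededges2}, about which temporal paths can end at $c_x$ or at $c_{x,j}$ in a spanner. The first part uses the path from $v$ to $c_x$; the second uses the path from $w$ to $c_{x,j}$. I read the second alternative in the statement as $\{x_F,c_{x,2}\},\{c_{x,2},c_x\}$, since $c_{x,2}$ is adjacent to $x_F$ and not to $x_T$.

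Throughout I only compare edges with different $t$-labels. Their relative order in $\G$ is the same as in $\G'$, so the arbitrary tie-breaking among equal $t$-labels does not matter.

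\textbf{First part.} Fix a spanner and a temporal path $P$ from $v$ to $c_x$ in it. I classify the last edge of $P$ by the four edges incident with $c_x$.
\begin{itemize}
\item It cannot be the $1$-edge $\{u,c_x\}$: every edge incident with $u$ other than the $1$-edges has a later label, so only $u$ itself can use this edge to reach $c_x$.
\item It cannot be the $3$-edge $\{w,c_x\}$. A path from $v$ reaches $w$ either directly via the $5$-edge or via $u$ (the $2$-edge, then the $4$-edge). Both arrive after every $3$-label.
\end{itemize}
Hence $P$ ends with a $7$-edge $\{c_{x,j},c_x\}$ for some $j\in[2]$, so $P$ must reach $c_{x,j}$ by an edge with a $t$-label below $7$. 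The edges at $c_{x,j}$ are the $1$-edge to $u$ (unusable from $v$, by the same argument), the $8$-edge to $v$ (too late), and the $6$-edge to $x_T$ (if $j=1$) or to $x_F$ (if $j=2$). Therefore $P$ contains $\{x_T,c_{x,1}\}$ and $\{c_{x,1},c_x\}$, or $\{x_F,c_{x,2}\}$ and $\{c_{x,2},c_x\}$. All of these edges belong to the spanner.

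\textbf{Second part.} By \cref{lem:blackedges2} the critical $1$-edge $\{u,c_{x,j}\}$ is in every spanner. So it suffices to show that $c_{x,j}$ keeps some non-$u$ edge. Suppose instead that its only spanner edge is $\{u,c_{x,j}\}$. Then any temporal path from $w$ to $c_{x,j}$ must end with this $1$-edge, which requires arriving at $u$ before a $1$-label. This is impossible for a path starting at $w \neq u$. So $w$ cannot reach $c_{x,j}$, a contradiction, and $c_{x,j}$ has at least two incident spanner edges.

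I expect the main obstacle to be the timing check in the first part: ruling out every detour (through $u$, $w$, or other gadgets) that could reach $c_{x,j}$ or $c_x$ early enough. That check reduces to the label comparisons listed above.
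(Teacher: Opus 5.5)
Your proof is correct and follows the paper's argument. The first part is the same last-edge analysis of a temporal path from $v$ to $c_x$, and your reading of the second alternative as $\{x_F,c_{x,2}\},\{c_{x,2},c_x\}$ is the intended one. For the second part you use $w$ instead of $v$ as the source and invoke \cref{lem:blackedges2} to treat $c_{x,1}$ and $c_{x,2}$ uniformly, which is only a cosmetic difference.

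One wording fix. The claims ``only $u$ itself can use this edge'' and ``impossible for a path starting at $w\neq u$'' are not literally true in \G: after tie-breaking, a vertex $y$ whose $1$-edge precedes $\{u,c_x\}$ can travel $y\to u\to c_x$. The reason that does hold is that $v$ and $w$ have no incident $1$-edge, so every path from them starts after all $1$-labels and can never use a $1$-edge.
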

\begin{proof}
    First, assume for contradiction that there is a strict temporal spanner of $\mathcal{G}$ that contains neither both $\{x_T,c_{x,1}\},\{c_{x,1},c_{x}\}$ nor both $\{x_T,c_{x,2}\},\{c_{x,2},c_{x}\}$. Then we have $v$ cannot reach $c_x$: 
    The only edges in the spanner incident with $c_x$ is the 1-edge to $u$, the 3-edge to $w$, and potentially a 7-edge to $c_{x,1}$ and $c_{x,2}$. Both the 1-edge to $u$ and the 3-edge to $v$ cannot be used by a temporal path starting at $v$, since the temporal path cannot arrive at $u$ or $w$, respectively, early enough to use those edges.
    If the spanner contains a 7-edge from $c_x$ to $c_{x,1}$ (the case for $c_{x,2}$ is analogous), then by assumption the spanner does not contain edge $\{c_{x,1},c_{x}\}$. Hence, the only way for a temporal path from $v$ to arrive at $c_{x,1}$ is via the 8-edge between the two vertices. However, this path then cannot continue to $c_x$.
    If follows that the spanner contains both edges $\{x_T,c_{x,1}\},\{c_{x,1},c_{x}\}$ or both edges $\{x_T,c_{x,2}\},\{c_{x,2},c_{x}\}$. In the former case, the spanner contains at least two edges incident with $c_{x,1}$. In the latter case, the spanner contains at least two edges incident with $c_{x,2}$.
    Assume for contradiction we are in the former case and the spanner contains at most one edge incident with $c_{x,2}$. (The latter case is analogous.) Then we have $v$ cannot reach $c_{x,2}$: By \cref{lem:blackedges2}, the only neighbor of $c_{x,2}$ is $u$ via a 1-edge. This edge cannot be used by temporal paths starting at $v$.
\end{proof}

\begin{lemma}\label{lem:blueedges2}
    For every temporal spanner of $\mathcal{G}$ it holds that for each clause $c$, there is an $i\in[3]$ such that the spanner contains an edge from a variable gadget to $c_i$ and edge $\{c_i,c\}$. Furthermore, the spanner contains at least two edges incident with $c_i$ for each $i\in[3]$.
\end{lemma}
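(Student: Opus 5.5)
The plan mirrors the proof of \cref{lem:greenedges2}, with the clause vertex $c$ in the role of $c_x$ and the three vertices $c_1,c_2,c_3$ in the role of $c_{x,1},c_{x,2}$. Both parts of the statement will follow by checking how $v$ can reach $c$, and how $v$ can reach each $c_i$, inside an arbitrary spanner.

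\textbf{First part.} Fix a clause $c$ and consider a temporal path $P$ from $v$ to $c$ in the spanner. The edges of $\G$ incident with $c$ are as follows.
\begin{itemize}
\item There is a 1-edge to $u$, present unless $c=c^\star$. No path from $v$ can use it, because $v$ cannot arrive at $u$ before time $1$.
\item There is the critical 3-edge to $w$, or the 2-edge when $c=c^\star$. No path from $v$ can use it: every edge at $v$ carries a $t$-label with $t\ge 2$, and the only edge from $v$ to $w$ is the 5-edge, so $P$ cannot reach $w$ early enough. The argument is the same as in the proofs of \cref{lem:blackedges2,lem:rededges2}.
\item There are the three 7-edges $\{c_i,c\}$.
\end{itemize}
Hence $P$ ends with some edge $\{c_i,c\}$, so $P$ must reach $c_i$ before the 7-label. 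The edges at $c_i$ are the 1-edge to $u$ (unusable from $v$), the 8-edge to $v$ (too late to continue along a 7-edge), the 7-edge to $c$, and the 6-edge to $x_T$ or $x_F$. So $P$ arrives at $c_i$ through the 6-edge from the variable gadget. Together with $\{c_i,c\}$, this gives the claimed index $i$.

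\textbf{Second part.} We show that every $c_i$ has at least two incident spanner edges. The 1-edge $\{u,c_i\}$ is critical, so by \cref{lem:blackedges2} it is always present. If it were the only edge at $c_i$, then $v$ could reach $c_i$ only through $u$. This fails because the edge $\{u,c_i\}$ has $t$-label $1$, while $v$ reaches $u$ only via the 2-edge, which is strictly later. Hence at least one more edge at $c_i$ must be kept, for each $i\in[3]$.

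\textbf{Main obstacle.} The delicate point is ruling out indirect routes. In particular, we must exclude the possibility that $P$ enters $c_i$ from a variable gadget that it reached through $w$ (via the edge to $x_2$). The argument is that any route through $w$ from $v$ starts with the 5-edge and then uses a 6-edge to some $x_2$. From $x_2$ it can continue only with labels $7$ and $8$, towards $x_F$ or $x_1$, and so it cannot use a 6-edge into $c_i$. This is the same ``cannot leave the gadget'' reasoning already used in \cref{lem:blackedges2}.
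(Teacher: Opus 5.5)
Your proof is correct and follows the paper's route; the paper only says the argument is analogous to \cref{lem:greenedges2}. A path from $v$ to $c$ can use neither the 1-edge nor the edge to $w$, so it ends with some 7-edge $\{c_i,c\}$ entered through the 6-edge from the variable gadget, and the critical 1-edge $\{u,c_i\}$ alone never lets $v$ reach $c_i$. Your ``main obstacle'' paragraph is not needed, since the lemma only asserts that these edges are present, not how the path reaches the variable gadget (that matters only in \cref{lem:nph2corr2}); the step that deserves one more word is excluding $\{w,c\}$, because $P$ could reach $w$ indirectly (e.g.\ via $u$ and the 4-edge), although any such route still arrives after all 3-labels.
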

\begin{proof}
The proof is analogous to the proof of \cref{lem:greenedges2}.
    %Assume for contradiction that there is a strict temporal spanner of $\mathcal{G}$ that does not contain any edge with label 5 that is incident with $c$. Then we have $v$ cannot reach $c$: A temporal path from $v$ to $c$ has to visit $v'$, since this is the only neighbor of $c$ in the spanner. However, vertex $v$ cannot reach $c$ via $v'$, since a temporal path from $v$ cannot use the edge from $w$ to $v'$ with label 4, and hence arrives at $v'$ at time 7 and cannot proceed to $c$.
\end{proof}

Now we are ready to show the other direction of the correctness. Here, we will also make use of the last of the insights.

\begin{lemma}\label{lem:nph2corr2}
    If $\mathcal{G}$ admits a temporal spanner with $k=m-5n_c-4n_x$ edges, then $\phi$ is satisfiable.
\end{lemma}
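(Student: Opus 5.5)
We argue by counting, using the lemmas that hold for every spanner, and then read off an assignment. Let $\mathcal{H}$ be a temporal spanner of $\mathcal{G}$ with $k=m-5n_c-4n_x$ edges, so exactly $5n_c+4n_x$ edges of $\mathcal{G}$ are missing from $\mathcal{H}$. By \cref{lem:blackedges2}, $\mathcal{H}$ contains all critical edges, so every missing edge is non-critical. The non-critical edges split into pairwise disjoint groups: the orange edge $\{v,w\}$; for each variable $x$ the two red edges $\{x_1,x_T\},\{x_1,x_F\}$; for each variable $x$ the six green edges incident with $c_{x,1},c_{x,2}$ (excluding the 1-edges to $u$); and for each clause $c$ the nine blue edges $\{c,c_i\},\{c_i,v\}$ and the literal edges at $c_i$, for $i\in[3]$.

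We then lower-bound the number of kept edges in each group. \cref{lem:rededges2} gives at least one red edge per variable, so at most one red edge per variable is missing. \cref{lem:greenedges2} gives at least two kept edges at each of $c_{x,1}$ and $c_{x,2}$, i.e.\ at least four green edges. It also gives a full path $x_T$ (or $x_F$)$\,$--$\,c_{x,j}$--$\,c_x$, and we will need that the other vertex $c_{x,3-j}$ keeps two edges that do not both lie on such a path. We should double-check this, since a priori four green edges could be kept with both chains present; for counting purposes, however, four suffices, so at most $6-4=2$ green edges are missing. Similarly, \cref{lem:blueedges2} gives at least two kept edges at each $c_i$, i.e.\ at least six of the nine blue edges, so at most three are missing. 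Altogether at most $n_x+2n_x+3n_c$ edges are missing, plus possibly the orange edge.

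This does not reach $5n_c+4n_x$, so the partition above must be wrong. We recount with each $c_i$ having exactly three incident non-critical edges (to $c$, to $v$, and to its literal vertex). The blue group per clause then has $9$ edges, and it also absorbs the literal edges, which belong to no other group. Keeping at least two edges per $c_i$ means at most $3$ are missing. This is still short of the target, so the intended bound must be stronger: at each clause exactly four blue edges are kept, namely the full chain through one $c_i$ (edge to the literal vertex plus $\{c_i,c\}$) together with one edge at each of the other two $c_{i'}$, which gives $5$ missing. Accordingly we will sharpen \cref{lem:blueedges2}: each $c_i$ must be reachable from $v$, and one edge per $c_i$ suffices except for the one carrying the path from $v$ to $c$. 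The minimum is therefore $2+1+1=4$ kept edges, so $5$ are missing per clause. In the same way the green group has $6$ edges with a minimum of $3$ kept, the red group has $2$ with a minimum of $1$, and we may use \cref{lem:orangeedge} to assume the orange edge is kept, at the cost of at most one edge. The missing total is then at most $5n_c+3n_x+n_x=5n_c+4n_x$. Since exactly that many edges are missing, every group attains its minimum: each variable keeps exactly one red edge, and each clause keeps exactly four blue edges, among them a chain $x_T$-$c_i$-$c$ or $x_F$-$c_i$-$c$.

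Finally, we set $x$ to true if and only if $\{x_1,x_T\}\in\mathcal{H}$. For each clause $c$, \cref{lem:blueedges2} provides an index $i$ such that $v$ reaches $c$ through $c_i$ within the time window. Tracing the temporal path as in \cref{lem:rededges2}, $v$ must reach the literal vertex $x_T$ or $x_F$ at $4$-time through $x_1$, so the corresponding red edge is kept and the literal is true under our assignment. Hence $\phi$ is satisfied. The main obstacle is the tight counting: we have to show rigorously that the per-gadget minima (one red edge, three green edges, four blue edges) cannot be beaten by sharing edges across gadgets, and that the orange-edge exchange of \cref{lem:orangeedge} does not spoil the count.
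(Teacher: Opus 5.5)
Your final argument is essentially the paper's: you partition the non-critical edges into orange, red, green and blue groups, and use \cref{lem:rededges2,lem:greenedges2,lem:blueedges2} to get at least $1$, $3$ and $4$ kept edges in the respective red, green and blue groups. (The ``two incident edges'' in the last two lemmas already include the critical $1$-edge to $u$, which is why your first count overshot and why no sharpening is needed.) Tightness then forces exactly one red edge per variable, and tracing the $v$--$c$ path through $c_i$ and $x_1$ shows that the chosen literal is true.

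Neither obstacle you list is real. The groups are disjoint, so the minima simply add. The count also shows that every spanner has at least $k$ edges, so a spanner with $k$ edges is minimum, and \cref{lem:orangeedge} then directly gives a size-$k$ spanner containing $\{v,w\}$ from which to read off the assignment.
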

\begin{proof}
    From \cref{lem:blackedges2,lem:rededges2,lem:greenedges2,lem:blueedges2,lem:orangeedge} it follows that every strict temporal spanner for $\mathcal{G}$ has at least $m-5n_c-4n_x$ edges. Assuming that $\mathcal{G}$ admits a strict temporal spanner with exactly $m-5n_c-4n_x$ edges, we have by \cref{lem:rededges2} that for each variable $x$, the spanner either contains the edge $\{x_1,x_T\}$ or the edge $\{x_1,x_F\}$.

    We construct a satisfying assignment for $\phi$ as follows. If the spanner contains $\{x_1,x_T\}$, then we set variable $x$ to true, otherwise we set variable $x$ to false. Assume for contradiction that at least one clause $c$ is not satisfied in $\phi$. Consider the corresponding vertex $c$ in $\mathcal{G}$. The spanner needs to contain a temporal path from $v$ to $c$. Vertex $v$ cannot reach $c$ via $w$, as a temporal path from $v$ cannot use the 3-edge from $w$ to $c$. It follows that the temporal path from $v$ to $c$ reaches $c$ via a 7-edge incident with $c$. By \cref{lem:blueedges2}, there is some $i\in[3]$ such that the spanner contains edge $\{c_i,c\}$ and an edge connecting $c_i$ to some variable gadget. By construction, this edge is between $c_i$ and a vertex $x_T$ or a vertex $x_F$ for some variable $x$. Assume that the edges $\{c_i,x_T\}$ and $\{c_i,c\}$ are in the spanner. Then by construction, we have that $x$ appears non-negated in clause $c$. A temporal path from $v$ needs to arrive at $x_T$ via a 4-edge or a 1-edge incident with $x_T$ to be able to continue to $c$. Hence, the 4-edge $\{x_1,x_T\}$ must be in the spanner, as the 1-edge $\{x_T,u\}$ cannot be used by temporal path starting at $v$. However, then we set $x$ to true, and the clause $c$ is satisfied, a contradiction. The case where the edge $\{c_i,x_F\}$ is in the spanner is analogous.
\end{proof}

\cref{thm:nphardhappy} now follows from \cref{lem:connectednph2,lem:nph2corr1,lem:nph2corr2}. Furthermore, all proofs can be straightforwardly adapted for the temporal graph obtained after the first step of the reduction. Hence, we get the following corollary.

\begin{corollary}\label{cor:nphardsimplestrict}
    \textsc{Minimum Temporal Spanner} is NP-hard in the strict setting even if the input temporal graph is simple and has a constant lifetime.
\end{corollary}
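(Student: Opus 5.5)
The plan is to reuse the reduction of \cref{thm:nphardhappy} verbatim, but to stop after the first step: take the instance $(\G',k)$ with $k=m-5n_c-4n_x$, where $\G'$ is the simple temporal graph built from $\phi$ and no relabeling is applied. Every edge of $\G'$ occurs exactly once, so $\G'$ is simple. All labels lie in $\{1,\dots,8\}$, so its lifetime is the constant $T=8$. We work in the strict setting, where a temporal path must use strictly increasing labels. This is exactly why equal labels cause no trouble: two edges with the same label can never be consecutive on a strict path, which is precisely the behaviour that arbitrary tie-breaking in $\G$ was meant to simulate.

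The key steps, in order, are as follows.
\begin{enumerate}
\item Strict temporal connectivity of $\G'$. This is already shown in the proof of \cref{lem:connectednph2}, since that proof argues directly about strict paths in $\G'$.
\item Completeness. Given a satisfying assignment, the edge set built in the proof of \cref{lem:nph2corr1} is a strict spanner of $\G'$. That proof also establishes this directly in $\G'$, using only strictly increasing label sequences (for example $1,2,3,4,6,7$ and $2,3,7$).
\item Soundness. Every strict spanner of $\G'$ satisfies the conclusions of \cref{lem:blackedges2,lem:orangeedge,lem:rededges2,lem:greenedges2,lem:blueedges2}. From these, the counting and the assignment extraction of \cref{lem:nph2corr2} go through unchanged.
\end{enumerate}

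For the soundness step, I would go through each of those lemma proofs and check one thing. Those proofs reason in $\G$ about ``$t$-labels'', always treating two edges with the same $t$-label as possibly ordered either way. Every impossibility claim they make (for example ``cannot leave the gadget early enough'' or ``cannot use the 3-edge from $w$'') is therefore robust to any tie-breaking. In $\G'$ with strict paths, equal labels are simply never consecutive. So the set of strict paths in $\G'$ is contained in the set of paths that exist in $\G$ under every tie-breaking, and each ``no path exists'' argument transfers.

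The exchange argument of \cref{lem:orangeedge} needs extra care, because it also asserts the \emph{existence} of paths in the modified spanner. I would recheck that each replacement path it names uses strictly increasing $t$-labels. The paths it uses are $1,2,3,4,6$ via $u,v,x_1,x_T,c_i$, then $2,3,7$ via $c^\star,w,c,c_i$, and then paths of the form ``reach $v$ by time $5$, then take an $8$-edge''. All of these are strictly increasing.

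The main obstacle is the lower-bound direction. It must be confirmed that no argument in \cref{lem:blackedges2} or \cref{lem:orangeedge} silently relied on a specific relabeling order. The place I would examine is the claim that a path through two consecutive $8$-edges (such as $v,c_i,v$-type continuations) is treated as the final portion of a path: in $\G'$ such consecutive $8$-edges are simply forbidden, which only shrinks the set of paths and so only helps. Once this is checked, NP-hardness in the strict, simple, constant-lifetime setting follows, since the construction is polynomial.
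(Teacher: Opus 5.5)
Your proposal is correct and follows the paper's approach. The paper's proof is only a one-line remark that all arguments carry over to the unrelabeled graph $\mathcal{G}'$, which is simple and has lifetime $8$, in the strict setting.

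You can skip the lemma-by-lemma recheck for soundness, including the exchange in \cref{lem:orangeedge}. As noted in the proof of \cref{lem:nph2corr1}, every strict spanner of $\mathcal{G}'$ is also a spanner of $\mathcal{G}$, so \cref{lem:nph2corr2} applies to it directly.
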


%\todo[inline]{add corollary, that finding a spanner with two sources is NP-hard.}

As mentioned in the beginning, our reduction reveals that it is also NP-hard to to find a minimum $2$-source spanner, namely, a minimum size subgraph that preserves reachability from two given vertices to all the vertices, and so, even in the happy setting. The following slight modification of the presented reduction yields this result: removing vertex $u$ from the instance constructed by the reduction.
To see this, we observe the following. In the instance constructed by the presented reduction, vertices $v$ and $w$ (which shall be the two sources) still can reach all other vertices after $u$ is removed. Intuitively, the hardness of the original reduction stems from preserving this property in the spanner, and then all other vertices can reach each other by first reaching $u$, and then $v$. The vertex $w$, again intuitively, is needed to ensure that certain critical edges in the variable and clause gadgets need to be included in every spanner. Furthermore, it is straightforward to verify that this still holds after $u$ is removed. It follows, that the spanner constructed from a yes-instance of 3-SAT is a $2$-source spanner (for $v$ and $w$) after $u$ is removed. Conversely, a $2$-source spanner (for $v$ and $w$) needs to have essentially the same properties as a minimum spanner, hence we can construct a satisfying assignment for the 3-SAT instance in an analogous way. We omit further details here.

\begin{corollary}\label{cor:2spannerhard}
    Given a temporal graph, computing a minimum $2$-source spanner, that is, a minimum size subgraph that preserves reachability from two given vertices to all the vertices, is NP-hard even if the input temporal graph is happy.
\end{corollary}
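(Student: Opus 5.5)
We reduce from 3-SAT by reusing the construction of \cref{thm:nphardhappy} with the vertex $u$ and all its incident edges deleted. The two sources are $v$ and $w$, and the budget becomes $k' = m' - 5n_c - 4n_x$, where $m'$ is the number of remaining edges. The relabeling into a happy graph is exactly as before: we order by $t$-labels and break ties arbitrarily. As in the proof of \cref{lem:nph2corr1}, it then suffices to reason in $\G'$ with $t$-labels and strict paths, since every strict path in $\G'$ survives the relabeling.

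First, we check that the instance is valid, i.e., that $v$ and $w$ reach every vertex. The path arguments of Claims~\ref{lem:v} and~\ref{lem:w} never pass through $u$ except as a target. The single exception is $w$ reaching $c^\star$, which used $(\{w,c^\star\},2)$ directly and is unaffected. So those claims still give reachability from $v$ and $w$ to all vertices other than $u$.

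For the forward direction, we take the spanner built in \cref{lem:nph2corr1} and delete the $u$-edges. The case analysis in that proof showing that $v$ and $w$ reach everything (starting at time $3$) uses no $u$-edges, except that $v$ reaching $u$ is now irrelevant. One point needs a check: $w$ reaching $v$ uses the orange edge $\{v,w\}$, which is kept. So the result is a $2$-source spanner with $k'$ edges.

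For the backward direction, we must re-establish lower bounds analogous to \cref{lem:blackedges2,lem:orangeedge,lem:rededges2,lem:greenedges2,lem:blueedges2}, now using only reachability from $v$ and $w$.
\begin{itemize}
\item Critical edges not incident to $u$: the proof of \cref{lem:blackedges2} derived each of them from a path starting at $v$ or at $w$ (to $c_x$, $x_2$, $c$, $x_1$, $x_F$, $x_T$). These arguments carry over verbatim.
\item Red, green and blue lemmas: these were all proved via reachability from $v$ to $c_x$, $c$, $c_{x,i}$, $c_i$, so they also carry over.
\item Orange edge: here there is a simplification. Since $v$ is itself a source, the edge $\{v,w\}$ (or some path from $w$ to $v$) is needed for $w$ to reach $v$. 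We replay the exchange argument of \cref{lem:orangeedge}, checking that only reachabilities from $v$ and $w$ must be preserved; this only makes the verification easier.
\end{itemize}
Counting then gives that every $2$-source spanner has at least $k'$ edges. A spanner of exactly $k'$ edges contains exactly one red edge per variable, and the satisfying assignment is extracted exactly as in \cref{lem:nph2corr2}.

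The main obstacle is the exchange step of \cref{lem:orangeedge}, whose original proof invoked paths ``via $u$'' to argue that all vertices still reach $c_i$ or $v$. In the new setting, I would redo the case analysis restricted to the two sources: after removing $\{v,c_i\}$ or $\{c,c_i\}$ and adding $\{v,w\}$, I check that $v$ and $w$ still reach $c_i$, $c$, $v$ and the $c'_{i'}$. For $w$, this goes through $c$ or through the new edge $\{w,v\}$ followed by an $8$-edge. For $v$, it goes through the literal path (including the argument that $\{x_1,x_T\}$ may be assumed present by minimality). Both should follow from the same reasoning as before.
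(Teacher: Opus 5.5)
Your proposal is correct and follows the paper's own route: delete $u$, take $v$ and $w$ as the sources, use the spanner of \cref{lem:nph2corr1} minus the $u$-edges for the forward direction, and rederive the lower bounds of \cref{lem:blackedges2,lem:rededges2,lem:greenedges2,lem:blueedges2} from reachability out of $v$ and $w$ alone. One caution: these lower bounds are non-reachability claims, so they must be argued in the happy graph with ties between equal $t$-labels broken arbitrarily, not only for strict paths in $\mathcal{G}'$.

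The step you single out as the main obstacle actually disappears once $u$ is deleted. Every neighbor of $w$ other than $v$ (the $x_2$, $c$, $c^\star$ and $c_x$ vertices) has all its other edges at strictly larger $t$-labels than its edge to $w$. Hence $v$ can reach $w$ only through $\{v,w\}$ itself, and the orange edge is forced in every $2$-source spanner. The count then follows directly with no exchange argument, including exactly one red edge per variable when the spanner has $k'$ edges.
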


\section{XP-Algorithm for Vertex Cover Number}
\label{sec:xp-vc}
In this section, we prove that \textsc{Minimum Temporal Spanner} on happy temporal graphs admits an XP-algorithm when parameterized by the vertex cover number of the underlying graph. 
%We present the algorithm for the setting of happy temporal graph, but at the end we give an argument, that the techniques (and hence the result) generalizes to the general setting.
To the best of our knowledge, this is the first non-trivial algorithm for \textsc{Minimum Temporal Spanner} that runs on temporal graphs whose underlying graph is not a tree.

\begin{theorem}\label{thm:vcn}
    \textsc{Minimum Temporal Spanner} is in XP when parameterized by the vertex cover number of the underlying graph if the input graph is happy.
\end{theorem}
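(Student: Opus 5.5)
Let $X$ be a vertex cover of the underlying graph $G_U$ with $|X|=\kappa$; a 2-approximation suffices and is computed in polynomial time. Every edge then has at least one endpoint in $X$, and $I=V\setminus X$ is an independent set. In a happy graph, a vertex $y\in I$ is characterized by its neighbourhood in $X$ together with the labels of its incident edges, and these labels are pairwise distinct. The plan is to show that an optimal spanner can be assumed to have a specific structure: all but $f(\kappa)$ of its edges are ``cheap'' edges of the form ``each $y\in I$ keeps one or two edges''. The remaining expensive part can then be guessed in $n^{f(\kappa)}$ time.

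The first step, which is the VC-tree lemma the introduction refers to, bounds how a vertex $y\in I$ can be used. A temporal path through $y$ enters via an edge $\{a,y\}$ and leaves via $\{y,b\}$ with $a,b\in X$, so its effect is a single ``hop'' $a\to b$ in the graph on $X$. For each ordered pair $(a,b)\in X^2$, only the hops of $y$ that are extremal matter, namely those departing $a$ latest and arriving at $b$ earliest in the dominance sense. A Pareto-front argument should then show that some optimal spanner uses at most $g(\kappa)$ vertices of $I$ as \emph{transit} vertices, i.e. vertices that keep two or more edges for the sake of paths between other vertices. The idea is that all temporal paths between vertices of $X$ can be realized by at most $\kappa^2$ journeys. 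Each such journey has length at most $2\kappa$, because it can visit every vertex of $X$ at most once and alternates with $I$. So at most $O(\kappa^3)$ edges among $X\cup I$ suffice to preserve $X$-to-$X$ reachability optimally.

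Every $y\in I$ must additionally reach $X$ and be reached from it. If $y$ is a leaf of the spanner attached to a single $x\in X$ by an edge of label $t$, then $y$ reaches everything that $x$ reaches after time $t$, and is reached by everything that reaches $x$ before time $t$. A non-transit $y$ therefore keeps either one edge, for which both conditions must hold, or two edges $\{y,a\}$ and $\{y,b\}$, where $a$ handles outgoing reachability and $b$ handles incoming reachability. Pairs $y,y'\in I$ must also reach each other; this is handled through $X$ by the conditions above.

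The algorithm proceeds as follows. It guesses the set of transit vertices together with their retained edges, and guesses the set of retained edges inside $G_U[X]$, which has at most $2^{\kappa^2}$ options. This gives $n^{O(g(\kappa))}\cdot 2^{O(\kappa^2)}$ guesses in total. For each guess, it computes the resulting reachability ``profile'' of each $x\in X$: for each time $t$, the set of vertices that $x$ reaches departing at time $\ge t$, and the set that reaches $x$ arriving at time $\le t$. Each remaining $y\in I$ then independently chooses the cheapest option, one edge if one suffices and otherwise two, such that $y$ reaches and is reached by all vertices. The requirement that non-transit vertices reach each other has to be checked as well. Since a non-transit $y$ routes through its attachment vertex at its label time, this reduces to a condition on $y$'s own departure and arrival times at $X$, which can be checked locally.

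The main obstacle is the structural lemma: proving that transit vertices can be restricted to few, and, crucially, that choices for non-transit vertices are independent. One difficulty is that a leaf $y$ might seem to need another $I$-vertex as transit to reach some target; such dependencies have to be absorbed into the guessed transit set. My first approach would be an exchange argument on an optimal spanner. Fix one earliest-arrival or latest-departure journey for each of the $O(\kappa^2)$ pairs in $X\times X$, and declare the $I$-vertices on these journeys to be transit. Then show that every other $y$ with at least three retained edges, or with two edges used by foreign paths, can be rerouted through these canonical journeys without increasing size. Happiness makes label comparisons strict and thus keeps this rerouting well defined.
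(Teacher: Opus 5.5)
\paragraph{Verdict: genuine gaps.} Your overall architecture matches the paper's in spirit: guess a bounded core of $I$-vertices with their edges, then let every other vertex make a cheap local choice. However, the structural lemma that carries it is not established, and the reason you sketch for it is not the right one. Keeping one journey per ordered pair of $X$ (at most $\kappa^2$ journeys of length at most $2\kappa$) preserves only untimed $X$-to-$X$ reachability. A vertex $y$ that leaves through $\{y,a\}$ at time $t$ needs a journey out of $a$ departing \emph{after} $t$. For a fixed pair $(a,b)$, the windows formed by the labels of $\{a,y\}$ and $\{y,b\}$ can be $\Theta(n)$ pairwise incomparable windows (e.g.\ labels $2j$ and $2j+1$ on $\{a,y_j\}$ and $\{y_j,b\}$). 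Moreover, ``one earliest-arrival journey per pair'' is not even defined until you say from which departure time.

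\paragraph{The missing idea.} This is the paper's VC-tree lemma, which fixes that departure time. In a minimum spanner $\mathcal{S}$, the out-tree of each $v\in V\setminus X$ begins with an edge $\{v,x_i\}$. Among all $v$ sharing the same $x_i$, take the vertex $s_i$ whose edge to $x_i$ is latest. Every other such $v$ reaches $x_i$ strictly earlier (labels at $x_i$ are distinct) and can then ride the out-tree of $s_i$. Minimality therefore makes $\mathcal{S}$ a union of at most $d$ out-trees plus one extra edge per vertex of $V\setminus X$. Each tree has at most $d$ internal $I$-vertices, since each has a child in $X$ and every $X$-vertex has at most one parent. These $O(d^2)$ vertices are the real ``transit'' set, which the paper guesses via templates.

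\paragraph{Second gap: independence.} Whether a non-transit $z$ reaches a non-transit $w$ depends jointly on the edge $z$ uses to enter the core and the edge $w$ uses to leave it. So the choices are coupled pairwise. A condition on ``$y$'s own departure and arrival times'' is local only relative to global thresholds at each $X$-vertex, and your guess does not fix these. In the tree framework the thresholds come from the guessed skeletons. A leaf $v$ is served iff it has:
\begin{itemize}
\item one edge to some $x_i$ labeled below the earliest edge of the tree attached to $x_i$, and
\item for \emph{each} tree, one edge from an $X$-vertex that this tree's skeleton reaches before that edge's label.
\end{itemize}
Different trees may force different incoming edges. Hence your ``one or two edges per non-transit vertex'' does not follow from the lemma and would need its own exchange argument. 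It is simpler to let each leaf pick any subset of its at most $\kappa$ edges, which is still a local $2^{\kappa}$-way choice once the skeletons are fixed.

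Incidentally, the paper's own count of $O(n^d)$ for guessing leaf attachments and extra edges is also too optimistic: done vertex by vertex it is $d^{\Theta(n)}$. It is precisely this per-leaf decoupling, given fixed skeletons, that yields the $n^{O(d^2)}$ bound.
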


A key ingredient to prove \cref{thm:vcn} is a lemma that shows that any minimum temporal spanner consists of a small number of temporal out-trees and at most one additional edge per vertex. This insight on the structure of temporal spanners allows us to design a ``guess-and-check'' algorithm and it may be of independent interest. For simplicity, we formulate the lemma in the context of happy temporal graphs, although it should hold more generally.

\begin{lemma}[VC-Tree Lemma]
  \label{lem:k-trees}
  Let $\G$ be a happy \TC graph, let $X$ be a vertex cover of the underlying graph $G_U$ with $|X|=d$. Let $\mathcal{S}$ be a minimum spanner of $\G$. Then, $\mathcal{S}$ is a union of at most~$d$ temporal out-trees rooted in $X$ and at most one extra edge incident to each vertex in $V\setminus X$.
\end{lemma}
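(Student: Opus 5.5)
The plan is an exchange argument based on minimality. We exhibit a subgraph $U\subseteq\mathcal{S}$ of the claimed form and show that $U$ is already temporally connected. Minimality of $\mathcal{S}$ then forces $U=\mathcal{S}$.

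Two structural facts are used throughout. Since $X$ is a vertex cover, $V\setminus X$ is independent in $G_U$, so every edge of $\mathcal{S}$ joins a vertex of $X$ to a vertex of $X$ or of $V\setminus X$. Since $\G$ is happy, every edge carries a single label and the labels around any vertex are pairwise distinct. Hence, for $y\in V\setminus X$, the edges of $\mathcal{S}$ incident to $y$ are totally ordered by time.

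\textbf{Step 1 (the extra edges).} For each $y\in V\setminus X$, let $f_y=\{y,x_y\}$ be the edge of $\mathcal{S}$ incident to $y$ with the smallest label $\tau_y$. The key observation: if $\{y,x''\}$ is any other edge of $\mathcal{S}$ at $y$, with label $\tau''>\tau_y$, then $x_y\to y\to x''$ is a temporal path. So anything $y$ can do by leaving through any of its edges, $x_y$ can also do starting at time $\tau_y$, by passing through $y$. Consequently, reachability from $y$ reduces to reachability from $x_y$ after time $\tau_y$, plus the single edge $f_y$.

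\textbf{Step 2 (the trees).} For each $x\in X$, fix a temporal out-tree $T_x\subseteq\mathcal{S}$ rooted at $x$. Such a tree exists because $\mathcal{S}$ is \TC; this is the standard fact recalled in Section~\ref{sec:prelims}. Set
\[
U=\bigcup_{x\in X}T_x\cup\{f_y : y\in V\setminus X\}.
\]
Then $U$ has exactly the form required by the statement, and it remains to prove that $U$ is \TC. Sources in $X$ are handled by their trees. For a source $y\in V\setminus X$, Step~1 reduces the task to showing that $x_y$ reaches every vertex in $U$ by paths whose first edge departs after $\tau_y$. Every vertex reaches each vertex of $V\setminus X$ only through an $X$-neighbour of it, so targets in $V\setminus X$ need no separate treatment.

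\textbf{Main obstacle.} An arbitrary out-tree, for instance a foremost one, may leave $x_y$ before $\tau_y$ and is then useless to $y$. Since several vertices $y$ can share the same $x_y$ with different $\tau_y$, a single tree cannot simply be optimized for one departure time. I would resolve this by proving the statement in its contrapositive, edge-by-edge form. Take an edge $e\in\mathcal{S}\setminus U$; it is not in any tree and is not an $f_y$. Show that every temporal path of $\mathcal{S}$ using $e$ can be rerouted inside $\mathcal{S}-e$; then $\mathcal{S}-e$ is \TC, contradicting minimality.

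For the rerouting, the trees are to be chosen to contain, for each $x$, \emph{all} edges of $\mathcal{S}$ that are indispensable for some source. The argument then splits into two cases.
\begin{itemize}
\item If $e$ is incident to some $y\in V\setminus X$ and is not $f_y$, the detour $x_y\to y$ via $f_y$, followed by the remaining edges at $y$, replaces any use of $e$ as a departing edge of $y$. The uses of $e$ as an entering edge are replaced via the tree edges into $y$.
\item If $e$ lies inside $X$, the replacement is by paths of the appropriate trees.
\end{itemize}
Making this case analysis go through, and in particular choosing the $T_x$ so that they are simultaneously compatible with all departure thresholds $\tau_y$, is where I expect the real work to lie.
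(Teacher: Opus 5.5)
\textbf{Gap: you never show that $U$ is \TC.} Your Step~1 and the overall frame are sound: exhibit $U\subseteq\mathcal{S}$ of the required shape, show it is \TC, and conclude $U=\mathcal{S}$ by minimality. But the proof stops exactly where the content lies. With arbitrary trees $T_x$, the set $U$ need not be \TC. The edge-by-edge rerouting you fall back on is not carried out, and it does not work as sketched:
\begin{itemize}
\item Replacing a departure of $y$ along $e=\{y,x''\}$ by ``$f_y$, then the remaining edges at $y$'' is circular. What is needed is that $x_y$ reaches the target after time $\tau_y$ in $\mathcal{S}-e$, which is the original question.
\item Replacing an arrival at $y$ along $e$ by ``tree edges into $y$'' ignores when the path must leave $y$ again.
\item Asking the trees to contain all edges indispensable for some source cannot be met, since in a minimum spanner every edge is indispensable for some source. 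Restricted to the source $x$ itself, the condition holds for every out-tree of $x$ and adds nothing.
\end{itemize}

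\textbf{Missing idea: one tree per cover vertex suffices.} Your stated obstacle, that ``a single tree cannot simply be optimized for one departure time'', is not real. The requirement ``$T_x$ only uses labels larger than $\tau_y$'' is monotone in $\tau_y$, so a tree tuned to the latest threshold serves all $y$ with $x_y=x$ at once.
\begin{itemize}
\item Let $\tau^*=\max\{\tau_y : x_y=x\}$, attained by $y^*$. Since $f_{y^*}$ is the earliest edge at $y^*$, every temporal path leaving $y^*$ uses only labels $\ge\tau^*$.
\item Prepending $f_{y^*}$ (and shortcutting), $x$ reaches every vertex inside the subgraph of $\mathcal{S}$ formed by the edges with label $\ge\tau^*$. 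That subgraph therefore contains a temporal out-tree $T_x$ rooted at $x$.
\item For $y\neq y^*$ with $x_y=x$, properness at $x$ gives $\tau_y<\tau^*$, so $f_y$ followed by $T_x$ is temporal.
\item For $y^*$ itself, a tree path from $x$ either starts with $f_{y^*}$ (use its suffix from $y^*$) or starts after $\tau^*$ by properness at $x$.
\item For $x$ with no such $y$, any out-tree will do.
\end{itemize}
Now $U$ is \TC\ directly, and minimality finishes the proof with no rerouting at all.

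This is exactly the paper's argument in different clothing. The paper takes out-trees $\mathcal{T}_v$ for $v\notin X$ and groups these $v$ by the $X$-endpoint $x_i$ of the minimum edge of $\mathcal{T}_v$. It then picks the $s_i$ whose such edge is latest, and observes that $\mathcal{T}_{s_i}$ re-roots at $x_i$ and serves all of $V_i$ through their own first edges.
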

\begin{proof}
Let $\G$ be a happy \TC graph, let $X=\{x_1,x_2,\ldots,x_d\}$ be a vertex cover of the underlying graph $G_U$ with $|X|=d$. Let $\mathcal{S} \subseteq \G$ be a minimum spanner of $\G$.
  As $\mathcal{S} \in \TC$, we have that for every vertex $v\in V$, there exists a temporal out-tree $\mathcal{T}_v \subseteq \mathcal{S}$ that allows $v$ to reach all other vertices. For every $v\in V\setminus X$, the minimum edge of $\mathcal{T}_v$ is an edge $\{v,x\}$ such that $x \in X$. We can thus partition $V\setminus X$ into $d$ sets $V_1,V_2,\cdots,V_d$ such that $v\in V_i$ if and only if the minimum edge of $\mathcal{T}_v$ is $\{v,x_i\}$. Now, let 
  \[
  s_i=\argmax_{v\in V_i} \{(\{v,x_i\},t)\},
  \] 
  that is, $s_i$ is the vertex in $V_i$ whose edge to $x_i$ is maximum. Then, for every $V_i$, all the vertices $v$ in $V_i\setminus s_i$ can reach all the vertices of $V$ using the edge $\{v,x_i\}$ followed by the edges of $\mathcal{T}_{s_i}$ (same for $s_i$ and $x_i$, using $\mathcal{T}_{s_i}$ directly). Note that the tree $\mathcal{T}_{s_i}$ can indistinctly be seen as rooted in $s_i$ or in~$x_i$. Clearly, the subset of $\mathcal{S}$ characterized in this way achieves temporal connectivity, and as $\mathcal{S}$ is minimum, it is $\mathcal{S}$ itself.
\end{proof}

%\begin{proof}[Proof of \cref{thm:vcn}]
Informally speaking, given an instance of \textsc{Minimum Temporal Spanner}, the algorithm consists of the following steps:
\begin{enumerate}
    \item We compute a minimum vertex cover $X$ for the underlying graph $G_U$. 
    \item We guess the temporal out-tree $\mathcal{T}_x$ for each $x\in X$.\footnote{By ``guess'' we mean that we exhaustively enumerate and try out all possibilities.}
    \item We guess up to one extra edge incident with $v$ for each $v\in V\setminus X$.
    \item We define the spanner $\mathcal{S}$ as the union of all guessed out-trees and additional edges.
    \item We verify that $\mathcal{S}$ is temporally connected and check whether its size is at most $k$.
\end{enumerate}

The second step is the main part of the algorithm. Since an temporal out-tree $\mathcal{T}_x$ contains $n-1$ time edges, we cannot naively guess the subset of time edges that forms the out-tree. To circumvent this, we, informally speaking, first guess the ``internal structure'' of the out-tree, and then we guess how to attach the remaining vertices as leaves to the out-tree.
To formalize this, we introduce the notion of \emph{templates}.

\begin{definition}[Template]
Given a set $X$, a \emph{template} $T$ is a (static) directed out-tree $T=(U,A)$ with $X\subseteq U$, such that:
\begin{itemize}
\item The root of $T$ is some vertex $x^\star\in X$.
\item For each $v\in U\setminus X$, we have $N^+(v)\subseteq X$ \hfill($N^+(v)$ denote the out-neighbors of~$v$).
\item All leaves of $T$ are vertices from $X$.
\end{itemize}
\end{definition}

The set of nodes $U\setminus X$ can be seen as placeholders to be instanciated with some vertices from $V\setminus X$. Note that the templates do not contain leaves that are not in $X$. In particular, this means that the size of a template is in $O(|X|)$.

Let $\mathcal{T}_x$ be a temporal out-tree of some vertex $x\in X$ and let $T=(U,A)$ be a template. We say that $\mathcal{T}_x$ is \emph{compatible} with $T$ if there is a mapping $\zeta:U\setminus X\rightarrow V\setminus X$ from $U\setminus X$ to vertices in $V\setminus X$ such that the following holds: (See \cref{fig:simpletemplate} for an illustration.)
\begin{itemize}
    %\item If $v^\star\in U$ is the root of $T$, then $\zeta(v^\star)=v$.
    \item If $(x_1,x_2)\in A$ with $x_1,x_2\in X$, then there is a temporal path in $\mathcal{T}_x$ from $x$ to a leaf vertex of $\mathcal{T}_x$ that visits $x_1$ and $x_2$ consecutively in this order.
    \item If $(u,x')\in A$ with $x'\in X$ and $u\in U\setminus X$, then there is a temporal path in $\mathcal{T}_x$ from $x$ to a leaf vertex of~$\mathcal{T}_x$ that visits $\zeta(u)$ and $x'$ consecutively in this order.
    \item If $(x',u)\in A$ with $x'\in X$ and $u\in U\setminus X$, then there is a temporal path in $\mathcal{T}_x$ from~$x$ to a leaf vertex of $\mathcal{T}_x$ that visits $x'$ and $\zeta(u)$ consecutively in this order.
\end{itemize}

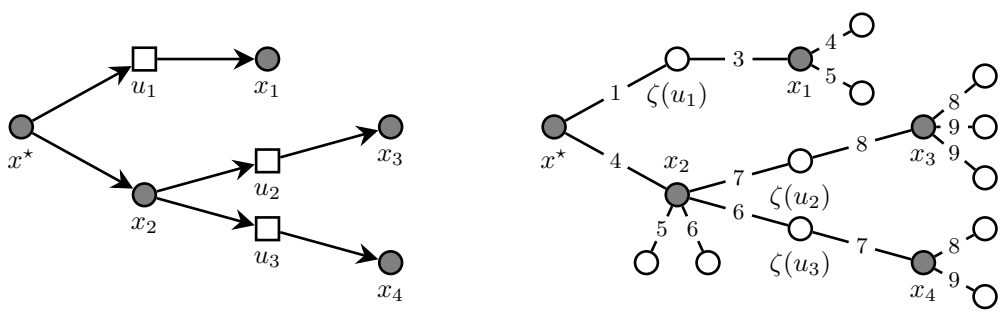
\begin{figure}[t]
\begin{center}
\begin{tikzpicture}[line width=1pt,scale=.9,xscale=.9]

    \node[vert,fill=gray,label=below:$x^\star$] (v) at (-3,0) {};
    \node[vert2,label=below:$u_1$] (x1) at (-1,1) {};
    \node[vert,fill=gray,label=below:$x_2$] (x2) at (-1,-1) {};
    \node[vert,fill=gray,label=below:$x_1$] (x3) at (1,1) {};
    \node[vert,fill=gray,label=below:$x_3$] (x4) at (3,0) {};
    \node[vert2,label=below:$u_2$] (u1) at (1,-.5) {};
    \node[vert2,label=below:$u_3$] (u2) at (1,-1.5) {};
    \node[vert,fill=gray,label=below:$x_4$] (x5) at (3,-2) {};
      
    \draw[diredge] (v) -- (x1);
    \draw[diredge] (x1) -- (x3);
    \draw[diredge] (v) -- (x2);  
    \draw[diredge] (x2) -- (u1); 
    \draw[diredge] (x2) -- (u2);      
    \draw[diredge] (u1) -- (x4); 
    \draw[diredge] (u2) -- (x5);
\end{tikzpicture}
\begin{tikzpicture}[line width=1pt,scale=.9,xscale=.9]
\node (y) at (-5,0) {};

    \node[vert,fill=gray,label=below:$x^\star$] (v) at (-3,0) {};
    \node[vert,label=below:$\zeta(u_1)$] (x1) at (-1,1) {};
    %\node[vert] (u1) at (-1.5,2) {};
    %\node[vert] (u2) at (-.5,2) {};
    \node[vert,fill=gray,label=above:$x_2$] (x2) at (-1,-1) {};
    \node[vert] (u3) at (-1.5,-2) {};
    \node[vert] (u4) at (-.5,-2) {};
    \node[vert,fill=gray,label=below:$x_1$] (x3) at (1,1) {};
    \node[vert] (u5) at (2,1.5) {};
    \node[vert] (u6) at (2,.5) {};
    \node[vert,fill=gray,label=below:$x_3$] (x4) at (3,0) {};
    \node[vert] (u7) at (4,.75) {};
    \node[vert] (u8) at (4,0) {};
 
    \node[vert] (u9) at (4,-.75) {};
    \node[vert,label=below:$\zeta(u_2)$] (u2) at (1,-.5) {};
    \node[vert,label=below:$\zeta(u_3)$] (u) at (1,-1.5) {};
    \node[vert,fill=gray,label=below:$x_4$] (x5) at (3,-2) {};
    \node[vert] (u10) at (4,-1.5) {};
    \node[vert] (u11) at (4,-2.5) {};
      
    \draw (v) --node[timelabel] {$1$} (x1);
    %\draw (x1) --node[timelabel] {$2$} (u1);
    %\draw (x1) --node[timelabel] {$4$} (u2);
    \draw (x1) --node[timelabel] {$3$} (x3);
    \draw (x3) --node[timelabel] {$4$} (u5);
    \draw (x3) --node[timelabel] {$5$} (u6);
    \draw (v) --node[timelabel] {$4$} (x2);  
    \draw (x2) --node[timelabel] {$6$} (u); 
    \draw (x2) --node[timelabel] {$5$} (u3);
    \draw (x2) --node[timelabel] {$6$} (u4);     
    \draw (x2) --node[timelabel] {$7$} (u2);     
    \draw (u2) --node[timelabel] {$8$} (x4);     
    \draw (x4) --node[timelabel] {$8$} (u7);    
    \draw (x4) --node[timelabel] {$9$} (u8);    
    \draw (x4) --node[timelabel] {$9$} (u9); 
    \draw (u) --node[timelabel] {$7$} (x5);
    \draw (x5) --node[timelabel] {$8$} (u10);
    \draw (x5) --node[timelabel] {$9$} (u11);
\end{tikzpicture}
    \end{center}
    \caption{Illustration of a template on the left and a temporal tree that is compatible with it on the right. Gray vertices are the ones in $X$.}\label{fig:simpletemplate}
\end{figure}

We now have the ingredients ready to prove \cref{thm:vcn}.

\begin{proof}[Proof of \cref{thm:vcn}]
We start by computing a minimum vertex cover $X$ for the underlying graph $G_U$. It is well-known, that this can be done in $2^d\cdot n^{O(1)}$ time~\cite{chen2006improved}.

Next, we guess a temporal out-tree $\mathcal{T}_x$ with $x\in X$ as follows (Step 2 in the above description of the algorithm).
\begin{itemize}
    \item We first guess a template $T=(U,A)$ that is compatible with $\mathcal{T}_x$. 
    
    Since $|U|\le 2d$, we can upper-bound the number of different templates by $O((2d)^{2d})$ using Cayley's formula (the number of labeled trees on $k$ vertices is $k^{k-2}$).
    \item Secondly, we guess the mapping $\zeta$ for the tempate $T$ that witnesses compatibility.

    Since $|U\setminus X|\le d$, we can upper-bound the number of different mappings by $O(n^d)$.

    \item Lastly, for every vertex $v\in V\setminus X$ that is not in the image of $\zeta$, we guess a vertex in $X$. Intuitively, this is the vertex to which we attach $v$ as a leaf in $\mathcal{T}_x$.

    The number of possibilities here is again $O(n^d)$.
\end{itemize}
Note that we can easily vertify in polynomial time, whether the object we guessed is indeed a temporal out-tree rooted at $x$~\cite{KKK02}. It follows that there are $n^{O(d)}$ possibilities for each $\mathcal{T}_x$ with $x\in X$, and hence $n^{O(d^2)}$ possibilities overall.

Next, we guess at most one additional edge for each $v\in V\setminus X$ that is incident with $v$ (Step 3 in the above description of the algorithm). Let $\mathcal{E}^\star$ denote the set of guessed edges. There are at most $O(n^d)$ possibilities, since all vertices in $V\setminus X$ have degree at most $d$.

Now we compute a spanner $\mathcal{S}$ by taking the union of all out-trees $\mathcal{T}_x$ with $x\in X$ and the edges in $\mathcal{E}^\star$. Formally, let $\mathcal{T}_x=(V,\mathcal{E}_x, T_x)$ for each $x\in X$. Then we define
\[
\mathcal{S}=(V,\bigcup_{x\in X}\mathcal{E}_x\cup \mathcal{E}^\star, \max_{x\in X}T_x).
\]
We can check in polynomial time whether $\mathcal{S}$ is indeed temporally connected (see e.g.~\cite{KKK02}) and also check whether the number of edges in $\mathcal{S}$ is at most $k$.
The overall running time of the algorithm is in $n^{O(d^2)}$.
It remains to show that the algorithm is correct.

($\to$): If $\mathcal{S}$ is \TC\ and its size is at most $k$, then $\G$ is clearly a yes-instance.

($\leftarrow$): Now assume that $\G$ is a yes-instance. Then $\G$ admits some temporal spanner $\mathcal{S}$ of size at most $k$. Assume w.l.o.g.\ that $\mathcal{S}$ is minimum. Then by \cref{lem:k-trees}, we have that $\mathcal{S}$ is a union of at most $d$ simple temporal out-trees $\mathcal{T}_x$ with $x\in X$ and, for each vertex $v\in V\setminus X$, at most one additional edge incident with $v$.

For each temporal out-tree $\mathcal{T}_x$ with $x\in X$ there exists, by definition, a template $T$ such that $\mathcal{T}_x$ is compatible with $T$ and there exists a mapping $\zeta$ that witnesses compatibility. Furthermore, we have that all vertices in $V\setminus X$ that are not in the image of $\zeta$ must be leaves in $\mathcal{T}_x$. Then the enumeration in Step 2 of the algorithm will find $\mathcal{T}_x$.

Hence, we can assume that the algorithm correctly guesses all temporal out-trees $\mathcal{T}_x$ with $x\in X$. Furthermore, the algorithm enumerates all possibilities to select at most one additional edge incident with $v$ for each vertex $v\in V\setminus X$. Hence, it correctly guesses these additional edges. It follows that the algorithm will find the spanner $\mathcal{S}$ and hence, it will concludes that the input instance is a yes-instance.
\end{proof}
%We call $\mathcal{T}_v$ \emph{vc-foremost} if for all vertices $w\in X$ we have that there is no temporal out-tree $\mathcal{T}'_v$ (with respect to $\mathcal{G}'$) of $v$ such that $v$ reaches $w$ (strictly) earlier in $\mathcal{T}'_v$ than in $\mathcal{T}_v$.

\section{W[1]-hardness for Feedback Vertex Number}
\label{sec:W1-fvs}
In this section, we prove that \textsc{Minimum Temporal Spanner} is W[1]-hard when parameterized by the feedback vertex number of the underlying graph. As the graph used in the reduction is proper, the result holds for both strict and non-strict temporal paths.

\begin{theorem}\label{thm:w1hardness}
    \textsc{Minimum Temporal Spanner} is W[1]-hard when parameterized by the feedback vertex number of the underlying graph even if the input temporal graph is proper.
\end{theorem}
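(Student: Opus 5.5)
We plan a parameterized reduction from \textsc{Multicolored Clique} (W[1]-hard with respect to the solution size $\kappa$). Given a graph $H$ with vertex set partitioned into color classes $V_1,\dots,V_\kappa$, we build a proper, non-simple temporal graph $\G$ together with a budget $k$. The feedback vertex set of the underlying graph will have size $O(\kappa^2)$, which gives a parameterized reduction. The modularity is the same as in the proof of \cref{thm:nphardhappy}. There will be a hub vertex $u$ (attached to almost everything by early edges), a source vertex $v$ reached from $u$, and further ``checker'' vertices. Together these force a large set of \emph{critical} edges into every spanner, so that the only genuine choices left are selection choices. Unlike the happy case, we may give a single underlying edge several labels. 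This lets one edge serve different roles at different times, which is what keeps the feedback vertex number small.

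\emph{Construction.} For each color $i$ we use a selection vertex $s_i$, and for each pair $i<j$ a selection vertex $s_{ij}$. Every vertex $a\in V_i$ becomes a vertex of degree two between $s_i$ and a sink/checker structure. Keeping the edge $\{s_i,a\}$ encodes choosing $a$. Each edge $ab\in E(H)$ with $a\in V_i$, $b\in V_j$ becomes a pendant-like path vertex attached to $s_{ij}$, and keeping it encodes choosing the edge $ab$. The set $F=\{u,v\}\cup\{s_i\}\cup\{s_{ij}\}$ has size $O(\kappa^2)$. We will design the gadgets so that $G_U-F$ is a forest; every non-$F$ vertex should have degree at most two into $F$ plus tree edges. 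Consistency between a chosen edge $ab$ and the chosen vertices $a$ and $b$ is encoded in time labels. Each $a\in V_i$ gets a time window (an interval indexed by the position of $a$ in $V_i$). An edge vertex $e=ab$ can relay reachability from $s_{ij}$ into the $s_i$-side only during $a$'s window, and into the $s_j$-side only during $b$'s window. Multiple labels on the edges $\{s_i,a\}$ implement these windows and keep the graph proper. The budget $k$ equals the number of critical edges, plus one selection edge per color and per pair. Temporal connectivity of the full graph is proved as in \cref{lem:connectednph2}: every vertex goes to $u$ early, then to $v$, then $v$ fans out.

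\emph{Correctness.} First we prove an analogue of \cref{lem:blackedges2}: each critical edge is the unique way some specific vertex pair can be connected, for example a leaf-like vertex reaching $u$. Next, as in \cref{lem:rededges2,lem:blueedges2}, each color class and each pair class needs at least one selection edge so that $v$ reaches a designated checker vertex $c_i$ or $c_{ij}$. The budget therefore forces exactly one per class. Finally, we must show that a checker vertex $c_{ij}$ is reachable from $v$ only if the selected edge $ab$ has endpoints that are exactly the selected $a\in V_i$ and $b\in V_j$. This is the window argument: reaching $c_{ij}$ requires passing through $e$ at a time lying in both windows. The converse direction, where a clique yields a spanner, is routine: keep the critical edges plus the selection edges given by the clique, then check reachability pair by pair, as in the proof of \cref{lem:nph2corr1}.

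\emph{Main obstacle.} The delicate part is forbidding ``cheating'' temporal paths. Because labels repeat on the $s_i$ edges, a path could enter through one vertex gadget and exit through another, or use an unselected vertex's edge via $u$. My first attempt is to order all gadgets in disjoint time phases: a vertex-selection phase, then an edge-check phase, then a fan-out phase in which only $v$'s late edges are present. I would then prove a ``no early re-entry'' claim showing that any path from $v$ to a checker must traverse exactly the selected edges. Keeping $G_U-F$ acyclic while still allowing these multiple labels is the second thing I would verify carefully.
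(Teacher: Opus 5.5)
\textbf{Genuine gap.} Your proposal is a plan rather than a proof, and what is missing is exactly what carries the reduction. No gadget, label, critical edge or value of $k$ is fixed. The edge-vertex $e=ab$ is described as pendant-like at $s_{ij}$, yet it is supposed to relay reachability into the $s_i$- and $s_j$-sides. The key claim, that $v$ reaches $c_{ij}$ only if the selected $ab$ has exactly the selected endpoints, is deferred to a ``window argument'' and a ``no early re-entry'' claim that you do not prove. The sketched mechanism also has three concrete defects.

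\emph{One-sidedness.} Arriving early never hurts a temporal path, so labels at $e$ and $s_i$ only bound arrival times from above. A path through $e$, $s_i$ and the kept edge $\{s_i,a^\ast\}$ therefore certifies at most an inequality between the index of $e$'s $V_i$-endpoint and that of $a^\ast$, never equality. You would need a reversed second test, or some way to forbid early arrival at $s_i$.

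\emph{Cheating through $F$.} Since $G_U-F$ must be a forest, edge-vertices cannot in general be wired directly to vertex-vertices, because that embeds a subdivision of $H$ in $G_U-F$. So every test must pass through the high-degree vertices $s_i,s_{ij}\in F$. A path entering $s_i$ in one window may leave along any kept edge with a later label, into other edge-vertices and other checkers. Ordering gadgets into global phases does not by itself exclude such paths inside the check phase.

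\emph{Cross-pair consistency.} The problem as defined in the paper counts time edges ($|\mathcal{E}'|\le k$). Suppose the edges $\{s_i,a\}$ carry several labels, one per window or pair. If the budget pays for one label per color, that label can serve only one pair. If it pays for about $\kappa-1$ labels per color, a spanner can keep the label of $\{s_i,a\}$ used for $(i,j)$ and the label of $\{s_i,a'\}$ used for $(i,j')$ at the same cost. Either way, nothing forces a single vertex per color to be chosen consistently across all pairs.

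\textbf{How the paper proceeds.} The paper takes a different route and builds it around exactly this consistency requirement. It does not select via stars but via the rigid shape of minimum spanners of the cycles $\mathcal{C}_{i,j}$. Each needs at least $6|E_{i,j}|-3$ time edges, by the $2n-3$ bound for $C_4$-free underlying graphs.

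The validators between $\mathcal{C}_{i,j}$ and $\mathcal{C}_{i,j'}$ are the only route between gadgets that share a color, and their labels lie strictly between the low and the high labels. Hence every vertex of $\mathcal{C}_{i,j}$ must reach the outgoing attachment vertex using low labels only, and be reached from the incoming attachment vertex using high labels only. This costs exactly one extra time edge per gadget. It also forces all low-label convergence points to coincide or be adjacent. As a result, all color-$i$ attachments at $\mathcal{C}_{i,j}$ are a single vertex, and the color-$i$ and color-$j$ attachments are the two endpoints of one $e_\ell$; the paper's clique extraction rests on this rigidity.

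The remaining ingredients are as follows. Connector gadgets with hub vertices (label $4$ towards one gadget, a very late label towards the other) give reachability between color-disjoint gadgets without shortcuts inside a gadget. The feedback vertex set consists of the hubs, $y_1,\dots,y_4$, one vertex per cycle and the central validator vertices, of size $O(k^4)$. Properness comes from subdividing with a parity shift and perturbing labels.

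To make your route work, you would need an equally explicit construction, a proved lemma excluding cheating paths through the $s_i$, and a mechanism enforcing a consistent vertex choice per color across all pairs $(i,j)$.
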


To show \cref{thm:w1hardness}, we present a parameterized reduction from the W[1]-hard problem \textsc{Multicolored Clique} parameterized by the number of colors~\cite{fellows2009multipleinterval}. Here, we are given a $k$-partite graph $G=(V_1\uplus V_2\uplus\ldots\uplus V_k,E)$ and are asked whether $G$ contains a clique of size $k$. W.l.o.g.~we assume that $|V_1|=|V_2|=\ldots=|V_k|=n$. We refer to vertices in $V_i$ as vertices of color $i$. We assume w.l.o.g.\ that for each color $i$, there is at least one vertex $v\in V_i$ that has a neighbor of each other color. For~$1\le i<j\le k$ we denote with $E_{i,j}\subseteq E$ the set of all edges between vertices in $V_i$ and vertices in $V_j$ and we assume w.l.o.g.\ that $|E_{i,j}|$ is even and $|E_{i,j}|>0$. We denote $m=\max_{i,j}|E_{i,j}|$. Assume from now on that we are given an instance $G=(V_1\uplus V_2\uplus\ldots\uplus V_k,E)$ of \textsc{Multicolored Clique}.

For the sake of an easier presentation, we first give a reduction that does not produce a proper temporal graph, and only shows the result for the strict version of the problem. At the end of the section, we show how to transform the instance produced by the reduction into an equivalent one where the temporal graph is proper. In this case, we have that strict and non-strict connectivity is equivalent, and hence, the result holds for both strict and non-strict versions of \textsc{Minimum Temporal Spanner}.

The main idea is that we create an ``edge selection gadget'' for each color combination together with an ``adjacency validator'' that, intuitively, check whether the endpoints corresponding to the same color of the selected edges are the same vertices. For technical reasons, we also need an additional ``connector gadget'' to make sure that the created instance is strictly temporally connected.

\subparagraph{Edge Selection Gadget.} Let $1\le i<j\le k$. We create a temporal graph $\mathcal{C}_{i,j}=(V_{i,j},\mathcal{E}_{i,j},T)$ with $T=3m+4kn+4$. Assume that the edges in $E_{i,j}$ are ordered in an arbitrary but fixed way, that is, $E_{i,j}=\{e_1,e_2,\ldots, e_{|E_{i,j}|}\}$. For each edge $e_\ell\in E_{i,j}$ we add vertices $v_{\ell,i}^{(i,j)}$, $v_{\ell,j}^{(i,j)}$, and $u_\ell^{(i,j)}$ to $V_{i,j}$. For each $1\le\ell\le|E_{i,j}|-1$ we add time edges $\{\{v_{\ell,i}^{(i,j)},v_{\ell,j}^{(i,j)}\},\{v_{\ell,j}^{(i,j)},u^{(i,j)}_\ell\},\{u^{(i,j)}_\ell,v_{\ell+1,i}^{(i,j)}\}\}\times(L_\downarrow\cup L_\uparrow)$ to $\mathcal{E}_{i,j}$, where $L_\downarrow$ is the set of \emph{low labels} $[5,\frac{3}{2}|E_{i,j}|+4]$ and $L_\uparrow$ is the set of \emph{high labels} $[\frac{3}{2}m+4kn+5,\frac{3}{2}m+4kn+\frac{3}{2}|E_{i,j}|+4]$. Furthermore, we add time edges $\{\{v_{|E_{i,j}|,i}^{(i,j)},v_{|E_{i,j}|,j}^{(i,j)}\},\{v_{|E_{i,j}|,j}^{(i,j)},u^{(i,j)}_{|E_{i,j}|}\},\{u^{(i,j)}_{|E_{i,j}|},v_{1,i}^{(i,j)}\}\}\times(L_\downarrow\cup L_\uparrow)$ to $\mathcal{E}_{i,j}$. For a visualization see \cref{fig:edgeselection}.

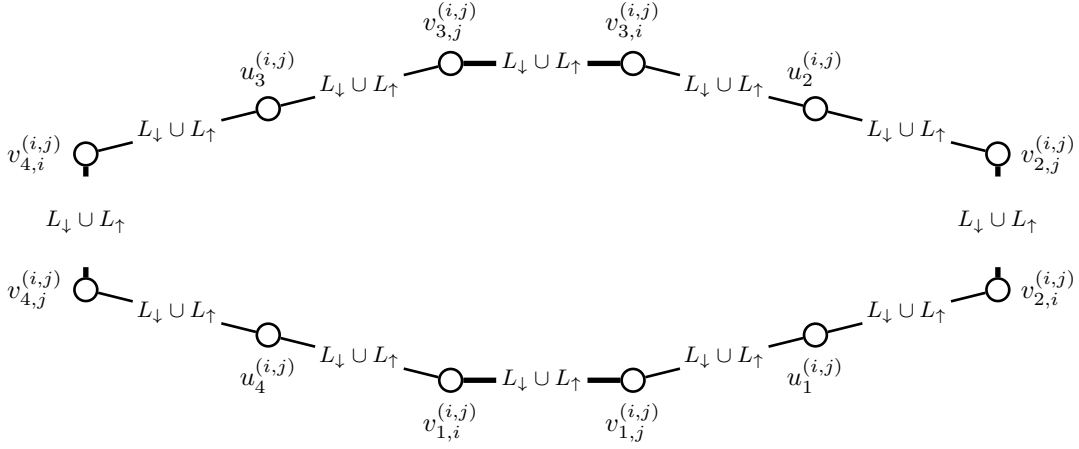
\begin{figure}[t]
\begin{center}
\begin{tikzpicture}[line width=1pt,scale=1.2,xscale=2]

    \node[vert,label=below:$v_{1,i}^{(i,j)}$] (V1) at (0,0) {};
    \node[vert,label=below:$v_{1,j}^{(i,j)}$] (V2) at (1,0) {};
    \node[vert,label=below:$u^{(i,j)}_1$] (U1) at (2,0.5) {};
    \node[vert,label=right:$v_{2,i}^{(i,j)}$] (V3) at (3,1) {};
    \node[vert,label=right:$v_{2,j}^{(i,j)}$] (V4) at (3,2.5) {};
    \node[vert,label=above:$u^{(i,j)}_2$] (U2) at (2,3) {};
    \node[vert,label=above:$v_{3,i}^{(i,j)}$] (V5) at (1,3.5) {};
    \node[vert,label=above:$v_{3,j}^{(i,j)}$] (V6) at (0,3.5) {};
    \node[vert,label=above:$u^{(i,j)}_3$] (U3) at (-1,3) {};
    \node[vert,label=left:$v_{4,i}^{(i,j)}$] (V7) at (-2,2.5) {};
    \node[vert,label=left:$v_{4,j}^{(i,j)}$] (V8) at (-2,1) {};
    \node[vert,label=below:$u^{(i,j)}_4$] (U4) at (-1,0.5) {};
    
    \draw[line width=2pt] (V1) --node[timelabel] {$L_\downarrow\cup L_\uparrow$} (V2);
    \draw (V2) --node[timelabel] {$L_\downarrow\cup L_\uparrow$} (U1);
    \draw (U1) --node[timelabel] {$L_\downarrow\cup L_\uparrow$} (V3);
    \draw[line width=2pt] (V3) --node[timelabel] {$L_\downarrow\cup L_\uparrow$} (V4);
    \draw (V4) --node[timelabel] {$L_\downarrow\cup L_\uparrow$} (U2);
    \draw (U2) --node[timelabel] {$L_\downarrow\cup L_\uparrow$} (V5);
    \draw[line width=2pt] (V5) --node[timelabel] {$L_\downarrow\cup L_\uparrow$} (V6);
    \draw (V6) --node[timelabel] {$L_\downarrow\cup L_\uparrow$} (U3);
    \draw (U3) --node[timelabel] {$L_\downarrow\cup L_\uparrow$} (V7);
    \draw[line width=2pt] (V7) --node[timelabel] {$L_\downarrow\cup L_\uparrow$} (V8);
    \draw (V8) --node[timelabel] {$L_\downarrow\cup L_\uparrow$} (U4);
    \draw (U4) --node[timelabel] {$L_\downarrow\cup L_\uparrow$} (V1);
\end{tikzpicture}
    \end{center}
    \caption{Illustration of the edge selection gadget $\mathcal{C}_{i,j}$ for $E_{i,j}=\{e_1,e_2,e_3,e_4\}$. Here $L_\downarrow$ is the set of \emph{low labels} $[5,10]$ and $L_\uparrow$ is the set of \emph{high labels} $[\frac{3}{2}m+4kn+5,\frac{3}{2}m+4kn+10]$.}\label{fig:edgeselection}
\end{figure}

%\begin{observation}
%For every $1\le i< j\le k$, the edge selection gadget $\mathcal{C}_{i,j}$ is strictly temporally connected.
%\end{observation}

\begin{lemma}\label{lem:edgeselection}
For every $1\le i< j\le k$, the edge selection gadget $\mathcal{C}_{i,j}$ admits a strict temporal spanner with $6|E_{i,j}|-3$ time edges and no strict temporal spanner with fewer than $6|E_{i,j}|-3$ time edges.
\end{lemma}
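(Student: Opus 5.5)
We prove the two bounds separately. Write $N=3|E_{i,j}|$ for the number of vertices of $\mathcal{C}_{i,j}$; its underlying graph is a cycle $C_N$. Since $|E_{i,j}|$ is even and positive, $N$ is even and $N\ge 6$. Also $|L_\downarrow|=|L_\uparrow|=N/2$, and $6|E_{i,j}|-3=2N-3$.

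\emph{Upper bound (gather and broadcast on a path).} Delete one underlying edge of the cycle, so that we work on a Hamiltonian path $P=p_1p_2\cdots p_N$. Let $p_{N/2}p_{N/2+1}$ be the middle edge. Each side of it contains $N/2-1$ edges.
\begin{itemize}
\item Gather phase. On the left side, give edge $p_hp_{h+1}$ the low label $4+h$ for $h=1,\dots,N/2-1$. Label the right side symmetrically, so that labels increase towards the centre.
\item Middle edge. Give it the single label $\max L_\downarrow=N/2+4$.
\item Broadcast phase. Label each side edge a second time with a high label, taking the high labels in increasing order going outward from the centre.
\end{itemize}
This uses $2(N-2)+1=2N-3$ time edges. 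For any source $s$, we build a strict path to any target $t$ as follows. Starting from $s$, walk towards the centre using low labels, which are strictly increasing and all less than $N/2+4$. If needed, cross the middle edge at time $N/2+4$. Then walk outward to $t$ using high labels, which are strictly increasing and larger than every low label. If $s$ and $t$ lie on the same side, the middle edge is not crossed, and the gather-then-broadcast route through the centre-side endpoint still works. Every label used lies in $L_\downarrow\cup L_\uparrow$ on an edge of $\mathcal{C}_{i,j}$, so this is a valid strict temporal spanner.

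\emph{Lower bound (reduction to gossiping).} Let $\mathcal{S}$ be any strict temporal spanner of $\mathcal{C}_{i,j}$. Order its time edges by label, breaking ties arbitrarily. This yields a sequence of calls, that is, a gossip scheme on $N$ vertices. Every strict temporal path of $\mathcal{S}$ uses strictly increasing labels, so it remains a valid time-respecting path in the call sequence. Hence the scheme completes gossip, with exactly $|\mathcal{S}|$ calls. We now invoke the classical gossip theory:
\begin{itemize}
\item every complete gossip scheme on $n\ge 4$ vertices uses at least $2n-4$ calls;
\item a scheme with exactly $2n-4$ calls must contain a $4$-cycle in its underlying graph (Bumby; Seress).
\end{itemize}
The underlying graph of $\mathcal{S}$ is a subgraph of $C_N$ with $N\ge 6$, so it contains no $C_4$. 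Therefore $|\mathcal{S}|\ge 2N-3=6|E_{i,j}|-3$.

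The main obstacle is this lower bound, specifically that it relies on the $C_4$-characterization of optimal gossip schemes. If a self-contained argument is preferred, I would argue directly on the cycle, in two steps. First, if the underlying graph of $\mathcal{S}$ is a spanning path, use the tree gossip bound: in a tree, every edge except one must carry at least two calls, because the information of both components of $T-e$ must cross $e$ in both directions, and only one edge can be crossed by a single call serving both directions. This gives $2(N-1)-1=2N-3$. Second, if all $N$ cycle edges are present, a counting argument shows that at most two edges can be used only once: consider the first and last calls, and use that each vertex's first outgoing and last incoming information must travel along distinct directions of the cycle. This gives at least $2N-2$ time edges, which is more than needed.
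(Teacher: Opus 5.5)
Your proof is correct and follows the paper's route. The upper bound is the same gather, cross, broadcast labeling on a Hamiltonian path of the cycle: you put the largest low label on the central edge where the paper uses the smallest high label. The lower bound is exactly the $C_4$-free bound $2n-3$ that the paper cites from Klobas et al., which you unpack via the gossip four-cycle theorem.

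Do drop the optional self-contained fallback, because its full-cycle claim is false. Start from your construction, remove the high label on the outermost edge $p_1p_2$, and label the closing edge $p_Np_1$ with $\max L_\uparrow$ instead. Every vertex still reaches $p_N$ before that time, so this is a strict temporal spanner. It has $2N-3$ time edges, uses all $N$ cycle edges, and has three single-labeled edges.
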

\begin{proof}
    It is known that a minimum strict temporal spanner of a temporal graph with $n$ vertices such that the underlying graph is $C_4$-free contains at least $2n-3$ edges~\cite{KlobasMMS22}. Note that the underlying graph of $\mathcal{C}_{i,j}$ is a cycle and the number of vertices of $\mathcal{C}_{i,j}$ is divisible by three. We can conclude that the underlying graph of $\mathcal{C}_{i,j}$ is $C_4$-free. It follows that a strict temporal spanner for $\mathcal{C}_{i,j}$ contains at least $6|E_{i,j}|-3$ time edges. It remains to show that we can construct a strict temporal spanner for $\mathcal{C}_{i,j}$ with $6|E_{i,j}|-3$ time edges.

    We construct a strict temporal spanner for $\mathcal{C}_{i,j}$ as follows. Recall that all edges of $\mathcal{C}_{i,j}$ are labeled with a set of  \emph{low labels} $L_\downarrow=[5,\frac{3}{2}|E_{i,j}|+4]$ and a set of \emph{high labels} $L_\uparrow=[\frac{3}{2}m+4kn+5,\frac{3}{2}m+4kn+\frac{3}{2}|E_{i,j}|+4]$. We pick an arbitrary edge $e$ of the underlying graph of $\mathcal{C}_{i,j}$ and put the time edge $(e,\frac{3}{2}m+4kn+5)$ into the temporal spanner. Let $e=\{u,v\}$ and let $u'$ be the (unique vertex) with distance $\frac{3}{2}|E_{i,j}|-1$ to $u$ and distance $\frac{3}{2}|E_{i,j}|$ to $v$. Similarly, let $v'$ be the (unique vertex) with distance $\frac{3}{2}|E_{i,j}|-1$ to $v$ and distance $\frac{3}{2}|E_{i,j}|$ to $u$. Let $e'$ be the $\ell$th edge of the shortest path from $u$ to $u'$ in the underlying graph. Then we put $(e',\frac{3}{2}m+4kn+\ell+5)$ and $(e',\frac{3}{2}|E_{i,j}|-\ell+4)$ into the temporal spanner. Note that this creates a temporal path from $u$ to $u'$ along the shortest path from $u$ to $u'$ in the underlying graph in the temporal spanner that uses only high labels (that are larger than $\frac{3}{2}m+4kn+5$) and this creates a temporal path from $u'$ to $u$ along the shortest path from $u'$ to $u$ in the underlying graph in the temporal spanner that uses only low labels. Now let $e'$ be the $\ell$th edge of the shortest path from $v$ to $v'$ in the underlying graph. Then we put $(e',\frac{3}{2}m+4kn+\ell+5)$ and $(e',\frac{3}{2}|E_{i,j}|-\ell+4)$ into the temporal spanner. Similarly, this creates a temporal path from $v$ to $v'$ along the shortest path from $v$ to $v'$ in the underlying graph in the temporal spanner that uses only high labels (that are larger than $\frac{3}{2}m+4kn+5$) and this creates a temporal path from $v'$ to $v$ along the shortest path from $v'$ to $v$ in the underlying graph in the temporal spanner that uses only low labels. For an illustration see \cref{fig:edgeselection2}.

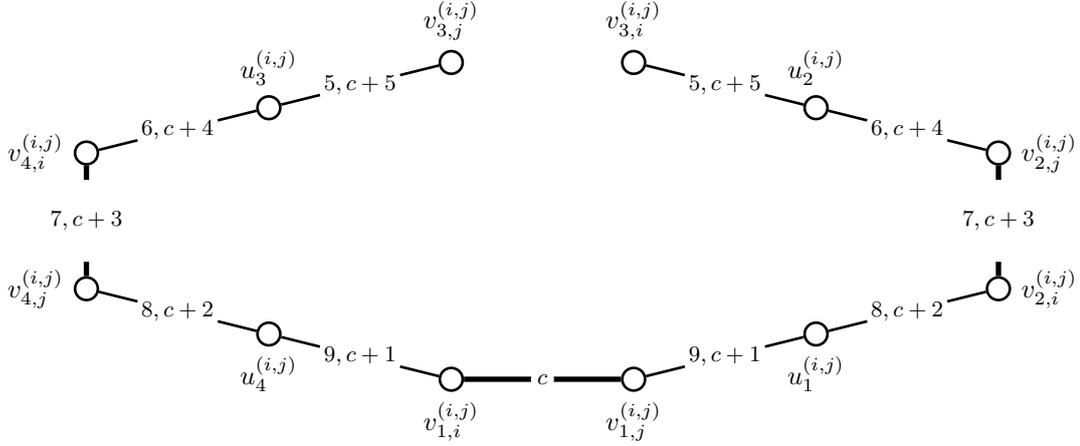
\begin{figure}[t]
\begin{center}
\begin{tikzpicture}[line width=1pt,scale=1.2,xscale=2]

    \node[vert,label=below:$v_{1,i}^{(i,j)}$] (V1) at (0,0) {};
    \node[vert,label=below:$v_{1,j}^{(i,j)}$] (V2) at (1,0) {};
    \node[vert,label=below:$u^{(i,j)}_1$] (U1) at (2,0.5) {};
    \node[vert,label=right:$v_{2,i}^{(i,j)}$] (V3) at (3,1) {};
    \node[vert,label=right:$v_{2,j}^{(i,j)}$] (V4) at (3,2.5) {};
    \node[vert,label=above:$u^{(i,j)}_2$] (U2) at (2,3) {};
    \node[vert,label=above:$v_{3,i}^{(i,j)}$] (V5) at (1,3.5) {};
    \node[vert,label=above:$v_{3,j}^{(i,j)}$] (V6) at (0,3.5) {};
    \node[vert,label=above:$u^{(i,j)}_3$] (U3) at (-1,3) {};
    \node[vert,label=left:$v_{4,i}^{(i,j)}$] (V7) at (-2,2.5) {};
    \node[vert,label=left:$v_{4,j}^{(i,j)}$] (V8) at (-2,1) {};
    \node[vert,label=below:$u^{(i,j)}_4$] (U4) at (-1,0.5) {};
    
    \draw[line width=2pt] (V1) --node[timelabel] {$c$} (V2);
    \draw (V2) --node[timelabel] {$9,c+1$} (U1);
    \draw (U1) --node[timelabel] {$8,c+2$} (V3);
    \draw[line width=2pt] (V3) --node[timelabel] {$7,c+3$} (V4);
    \draw (V4) --node[timelabel] {$6,c+4$} (U2);
    \draw (U2) --node[timelabel] {$5,c+5$} (V5);
    %\draw[line width=2pt] (V5) --node[timelabel] {$L_\downarrow\cup L_\uparrow$} (V6);
    \draw (V6) --node[timelabel] {$5,c+5$} (U3);
    \draw (U3) --node[timelabel] {$6,c+4$} (V7);
    \draw[line width=2pt] (V7) --node[timelabel] {$7,c+3$} (V8);
    \draw (V8) --node[timelabel] {$8,c+2$} (U4);
    \draw (U4) --node[timelabel] {$9,c+1$} (V1);
\end{tikzpicture}
    \end{center}
    \caption{Illustration of the strict temporal spanner for the edge selection gadget $\mathcal{C}_{i,j}$ for $E_{i,j}=\{e_1,e_2,e_3,e_4\}$ constructed in the proof of \cref{lem:edgeselection}, where $c=\frac{3}{2}m+4kn+5$. Here, the edge $e=\{v_{1,i}^{(i,j)},v_{1,j}^{(i,j)}\}$ is selected to start the construction of the spanner.}\label{fig:edgeselection2}
\end{figure}

    It remains to show that the constructed subgraph is a strict temporal spanner for $\mathcal{C}_{i,j}$. Consider two vertices $w,w'$ of $\mathcal{C}_{i,j}$. Consider the case where both $w$ and $w'$ are closer to $u$ than to $v$ in the underlying graph and $w$ is closer to $u$ than $w'$. Then $w$ can reach $w'$ via the temporal path from $u$ to $u'$ and $w'$ can reach $w$ via the temporal path from $u'$ to $u$. The case where both $w$ and $w'$ are closer to $v$ than to $u$ is analogous. Now consider the case where $w$ is closer to $u$ than to $v$ and $w'$ is closer to $v$ than to $u$ in the underlying graph. Then $w$ can reach $w'$ in the following way. By construction, there is a temporal path from $w$ to $u$ that uses only low labels. We can extend this temporal path to $v$ with the time edge $(e,\frac{3}{2}m+4kn+5)$. By construction, there is also a temporal path from $v$ to $w'$ that uses only high labels that are larger than $\frac{3}{2}m+4kn+5$. Hence, we have a temporal path from $w$ to $w'$. A temporal path from $w'$ to $w$ can be constructed analogously. 
\end{proof}

For every $1\le i<j\le k$, we create an edge selection gadget $\mathcal{C}_{i,j}$. Let $\mathcal{G}$ be the disjoint union of all created edge selection gadgets.

\subparagraph{Adjacency Validator.} For every $1\le i\le k$ and every $1\le j<j'\le k$ with $j\neq i\neq j'$, we create two vertices $w^{(j,j')}_{i}$, $w^{(j',j)}_{i}$ and add them to $\mathcal{G}$. 
Assume that all vertices $G$ are ordered in some arbitrary but fixed way, that is, $V_1\uplus V_2\uplus\ldots\uplus V_k=\{v_1,v_2,\ldots,v_{kn}\}$.
Now consider two edge selection gadgets $\mathcal{C}_{i,j}$ and $\mathcal{C}_{i,j'}$ with $j<j'$ that have a common color $i$. (The cases of $\mathcal{C}_{i,j}$ and $\mathcal{C}_{j',i}$, $\mathcal{C}_{j,i}$ and $\mathcal{C}_{j',i}$, and $\mathcal{C}_{j,i}$ and $\mathcal{C}_{i,j'}$ are analogous.)
Let $e_\ell\in E_{i,j}$ and $e_{\ell'}\in E_{i,j'}$ be two edges that share a common endpoint, that is, $e_\ell\cap e_{\ell'}=\{v_{h}\}$. 
We add four vertices $w^{(j,j')}_{i,h,1}$, $w^{(j,j')}_{i,h,2}$, $w^{(j',j)}_{i,h,1}$, and $w^{(j',j)}_{i,h,2}$ to $\mathcal{G}$.
Then we add eight time edges $(\{v_{\ell,i}^{(i,j)},w^{(j,j')}_{i,h,1}\},\frac{3}{2}m+2h+4)$, $(\{w^{(j,j')}_{i,h,1},w^{(j,j')}_{i}\},\frac{3}{2}m+2h+5)$, $(\{w^{(j,j')}_{i},w^{(j,j')}_{i,h,2}\},\frac{3}{2}m+2n+2h+4)$, $(\{w^{(j,j')}_{i,h,2},v_{\ell',i}^{(i,j')}\},\frac{3}{2}m+2n+2h+5)$, $(\{v_{\ell',i}^{(i,j')},w^{(j',j)}_{i,h,1}\},\frac{3}{2}m+2h+4)$, $(\{w^{(j',j)}_{i,h,1},w^{(j',j)}_{i}\},\frac{3}{2}m+2h+5)$, $(\{w^{(j',j)}_{i},w^{(j',j)}_{i,h,2}\},\frac{3}{2}m+2n+2h+4)$, and $(\{w^{(j',j)}_{i,h,2},v_{\ell,i}^{(i,j)}\},\frac{3}{2}m+2n+2h+5)$ to $\mathcal{G}$. We call all vertices added for the combination $i,j,j'$ the \emph{adjacency vertices}. For an illustration see \cref{fig:validation}.

\begin{figure}[t]
\begin{center}
\begin{tikzpicture}[line width=1pt,scale=.9,xscale=1.1,yscale=1.2]

    \node (C1) at (1.5,1.75) {$\mathcal{C}_{i,j}$};
    %\node[vert,label=below:$v_{1,j}^{(i,j)}$] (V2) at (1,0) {};
    \node[vert,label=below:$u^{(i,j)}_{\ell-1}$] (U1) at (2,0.5) {};
    \node[vert,label=below:$v_{\ell,i}^{(i,j)}$] (V3) at (3,1) {};
    \node[vert,label=above:$v_{\ell,j}^{(i,j)}$] (V4) at (3,2.5) {};
    \node[vert,label=above:$u^{(i,j)}_\ell$] (U2) at (2,3) {};
    %\node[vert,label=above:$v_{3,i}^{(i,j)}$] (V5) at (1,3.5) {};

    \node[vert,label=above:$w^{(j',j)}_{i,h,2}$] (W21) at (5,3) {};
    \node[vert,label=above:$w^{(j',j)}_{i,h,1}$] (W22) at (11,3) {};

    \node[vert,label=above:$w^{(j',j)}_{i}$] (W2) at (8,3) {};
    \node[vert,label=below:$w^{(j,j')}_{i}$] (W1) at (8,-1) {};

    \node[vert,label=below:$w^{(j,j')}_{i,h,1}$] (W11) at (5,-1) {};
    \node[vert,label=below:$w^{(j,j')}_{i,h,2}$] (W12) at (11,-1) {};

    \node (C1) at (14.5,1.75) {$\mathcal{C}_{i,j'}$};
    \node[vert,label=below:$u^{(i,j')}_{\ell'-1}$] (U12) at (14,0.5) {};
    \node[vert,label=below:$v_{\ell',i}^{(i,j')}$] (V32) at (13,1) {};
    \node[vert,label=above:$v_{\ell',j'}^{(i,j')}$] (V42) at (13,2.5) {};
    \node[vert,label=above:$u^{(i,j')}_{\ell'}$] (U22) at (14,3) {};

    \draw (V3) --node[timelabel] {$c+2h-1$} (W11);
    \draw (W11) --node[timelabel] {$c+2h$} (W1);
    \draw (W1) --node[timelabel] {$c+2n+2h-1$} (W12);
    \draw (W12) --node[timelabel] {$c+2n+2h$} (V32);

    \draw (V3) --node[timelabel] {$c+2n+2h$} (W21);
    \draw (W21) --node[timelabel] {$c+2n+2h-1$} (W2);
    \draw (W2) --node[timelabel] {$c+2h$} (W22);
    \draw (W22) --node[timelabel] {$c+2h-1$} (V32);

    \draw[dashed] (U1) -- (1,0);
    \draw (U1) -- (V3);
    \draw[line width=2pt] (V3) -- (V4);
    \draw (V4) -- (U2);
    \draw[dashed] (U2) -- (1,3.5);

    \draw[dashed] (U12) -- (15,0);
    \draw (U12) -- (V32);
    \draw[line width=2pt] (V32) -- (V42);
    \draw (V42) -- (U22);
    \draw[dashed] (U22) -- (15,3.5);
\end{tikzpicture}
    \end{center}
    \caption{Illustration of an adjacency validator, where $c=\frac{3}{2}m+5$. The connection is visualized for $e_\ell\in E_{i,j}$ and $e_{\ell'}\in E_{i,j'}$ with $e_\ell\cap e_{\ell'}=\{v_{h}\}$.}\label{fig:validation}
\end{figure}
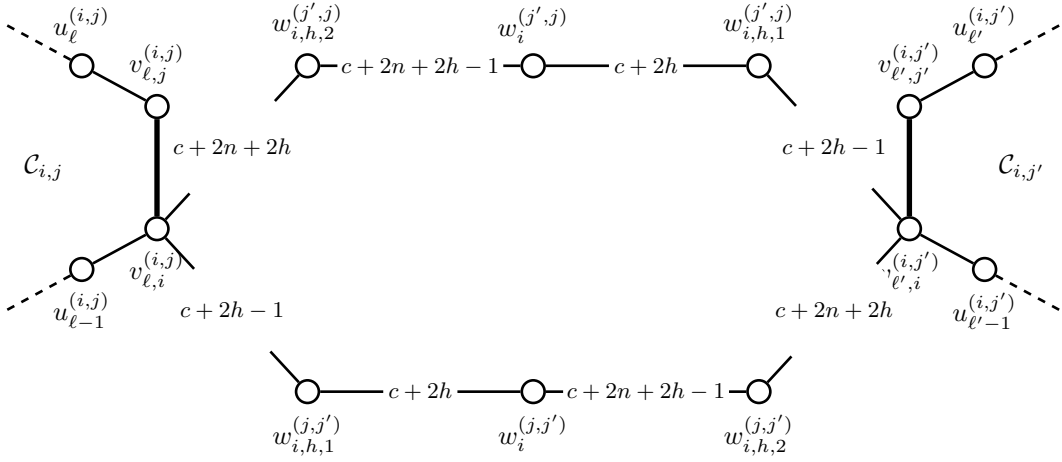

\subparagraph{Connector Gadget.} 
Let $\mathcal{C}_{i,j}$ and $\mathcal{C}_{i',j'}$ be two edge selection gadgets such that $|\{i,j,i',j'\}|=4$, that is, the two gadgets do not share a color.
%Given two disjoint vertex sets $V_1,V_2$ which are both subsets of the vertices of $\mathcal{G}$, we define a connector gadget $\mathcal{H}_{V_1,V_2}$ as follows. 
Then we add an additional vertex $x_{\mathcal{C}_{i,j},\mathcal{C}_{i',j'}}$ to $\mathcal{G}$, which is called \emph{hub vertex}. For every $v$ of $\mathcal{C}_{i,j}$ we add a time edge $(\{v,x_{\mathcal{C}_{i,j},\mathcal{C}_{i',j'}}\},4)$. For every $v$ of $\mathcal{C}_{i',j'}$ we add a time edge $(\{v,x_{\mathcal{C}_{i,j},\mathcal{C}_{i',j'}}\},3m+4kn+6)$. 
%We add connector gadgets for the following vertex sets.
%\begin{itemize}
    %\item 
    %Let $\mathcal{C}_{i,j}$ and $\mathcal{C}_{i',j'}$ be two edge selection gadgets such that $|\{i,j,i',j'\}|=4$, that is, the two gadgets do not share a color. Then we add a connector gadget $\mathcal{H}_{V_1,V_2}$ for $V_1=V(\mathcal{C}_{i,j})$ and $V_2=V(\mathcal{C}_{i',j'})$ to $\mathcal{G}$.
    %\item Let $\mathcal{C}_{i,j}$ be an edge selection gadget and denote $W$ the set of all connection vertices. Then we add a connector gadgets $\mathcal{H}_{V(\mathcal{C}_{i,j}),W}$ and $\mathcal{H}_{W,V(\mathcal{C}_{i,j})}$.
%\end{itemize}

Furthermore, we add four vertices $y_1$, $y_2$, $y_3$, and $y_4$ to $\mathcal{G}$. Let $H$ denote the set of all hub vertices and let $W$ denote the set of all adjacency vertices. We consider $H\cup\{y_1,y_2,y_3,y_4\}$ to be the vertices of the connector gadget. For each $v\in H\cup W$, we add time edges $(\{v,y_1\},1)$, $(\{v,y_2\},3)$, $(\{v,y_3\},3m+4kn+7)$, and $(\{v,y_4\},3m+4kn+9)$. Furthermore, we add the time edges $(\{y_1,y_2\},2)$, $(\{y_3,y_4\},3m+4kn+8)$, $(\{y_1,y_4\},1)$ $(\{y_1,y_4\},3m+4kn+9)$, $(\{y_1,y_3\},1)$, and $(\{y_2,y_4\},3m+4kn+9)$.

Let $x$ denote the total number of time edges added to $\mathcal{G}$ when the connector gadgets are added to $\mathcal{G}$. For an illustration see \cref{fig:connector}.

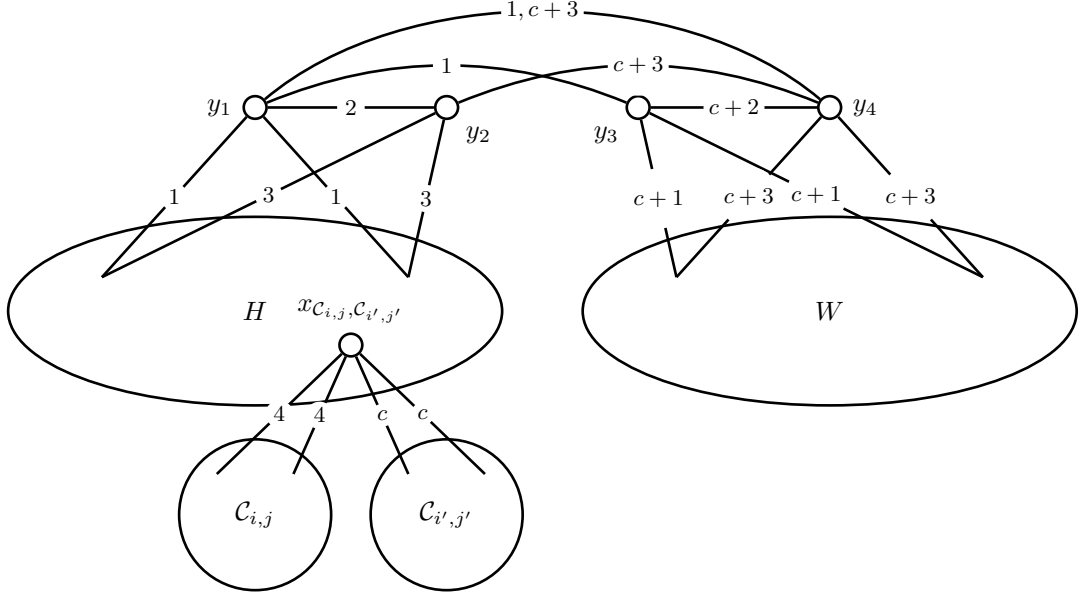
\begin{figure}[t]
\begin{center}
\begin{tikzpicture}[line width=1pt,scale=.9,xscale=1.4]
    \node[vert,label=left:$y_1$] (Y1) at (0,0) {};
    \node[vert,label=below right:$y_2$] (Y2) at (2,0) {};
    \node[vert,label=below left:$y_3$] (Y3) at (4,0) {};
    \node[vert,label=right:$y_4$] (Y4) at (6,0) {};

    \draw (Y1) --node[timelabel] {$1$} (-1.6,-2.5);
    \draw (Y1) --node[timelabel] {$1$} (1.6,-2.5);
    \draw (Y1) --node[timelabel] {$2$} (Y2);
    \draw (Y2) --node[timelabel] {$3$} (-1.6,-2.5);
    \draw (Y2) --node[timelabel] {$3$} (1.6,-2.5);

    \draw (Y3) --node[timelabel] {$c+1$} (4.4,-2.5);
    \draw (Y3) --node[timelabel] {$c+1$} (7.6,-2.5);
    \draw (Y3) --node[timelabel] {$c+2$} (Y4);
    \draw (Y4) --node[timelabel] {$c+3$} (4.4,-2.5);
    \draw (Y4) --node[timelabel] {$c+3$} (7.6,-2.5);

    %\draw (Y2) --node[timelabel] {$2,c+2$} (Y3);
    \draw (Y1) edge[bend left=50] node[timelabel] {$1,c+3$} (Y4);
    \draw (Y1) edge[bend left=30] node[timelabel] {$1$} (Y3);
    \draw (Y2) edge[bend left=30] node[timelabel] {$c+3$} (Y4);
    
    \node[ellipse, draw, minimum width=6.5cm, minimum height=2.5cm] (H) at (0,-3) {$H$};
    \node[ellipse, draw, minimum width=6.5cm, minimum height=2.5cm] (W) at (6,-3) {$W$};

    \node[vert,label=above:$x_{\mathcal{C}_{i,j},\mathcal{C}_{i',j'}}$] (X) at (1,-3.5) {};
    
    \node[ellipse, draw, minimum width=2cm, minimum height=2cm] (V1) at (0,-6) {$\mathcal{C}_{i,j}$};
    \node[ellipse, draw, minimum width=2cm, minimum height=2cm] (V2) at (2,-6) {$\mathcal{C}_{i',j'}$};

    \draw (X) --node[timelabel] {$4$} (-.4,-5.4);
    \draw (X) --node[timelabel] {$4$} (.4,-5.4);
    \draw (X) --node[timelabel] {$c$} (1.6,-5.4);
    \draw (X) --node[timelabel] {$c$} (2.4,-5.4);
\end{tikzpicture}
    \end{center}
    \caption{Illustration of the connection gadget, where $c=\frac{3}{2}m+4kn+6$. The time edges from $y_1,y_2$ to $W$ and the time edges from $y_3,y_4$ to $H$ are omitted.}\label{fig:connector}
\end{figure}

\medskip

This finishes the construction of $\mathcal{G}$. We first show that $\mathcal{G}$ is strictly temporally connected and analyze the feedback vertex number of the underlying graph of $\mathcal{G}$.

\begin{lemma}\label{lem:connected}
    The constructed temporal graph $\mathcal{G}$ is strictly temporally connected.
\end{lemma}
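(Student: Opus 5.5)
The plan is to funnel every connection through the connector gadget. Early on, every vertex reaches $y_1$ or $y_2$, and hence all of $H\cup W$ by time $3$. At the end, all vertices are reached from $H\cup W$ via $y_3,y_4$ at times $c+1,\dots,c+3$, where $c=\frac{3}{2}m+4kn+6$. I classify the vertices as gadget vertices (vertices of some $\mathcal{C}_{i,j}$), adjacency vertices $W$, hub vertices $H$, and $y_1,\dots,y_4$. For each class I build an early phase (labels at most $c$) and a late phase (labels at least $c$), and concatenate them.

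\textbf{Step 1 (connector core).} Any $a\in H\cup W$ uses $(\{a,y_1\},1)$, then $(\{y_1,y_2\},2)$, then $(\{y_2,b\},3)$ to reach every $b\in H\cup W$ strictly, and it reaches $y_1,y_2$ on the way. Symmetrically, from any $a\in H\cup W$ at time at most $c$, the path $(\{a,y_3\},c+1)$, $(\{y_3,y_4\},c+2)$, $(\{y_4,b\},c+3)$ reaches all of $H\cup W\cup\{y_3,y_4\}$. The $y$'s among themselves are handled by the listed edges. Vertex $y_1$ reaches $y_3,y_4$ at time $1$, and $y_2$ reaches $y_4$ at time $c+3$. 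Vertex $y_3$ reaches $y_4$ at $c+2$ and then $y_1,y_2$ at $c+3$. Vertex $y_4$ reaches $y_1$ at time $1$ and $y_3$ via $y_1$. For $y_4\to y_2$ I go through some $a\in H\cup W$ at time $3$, since $y_4$ reaches $a$ at time $1$ via $y_1$. For $y_3\to y_2$ I go via $y_1$ at time $1$, then $y_2$ at time $2$. Checking such small cases is routine.

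\textbf{Step 2 (gadget vertices into and out of the core).} Take a gadget vertex $g\in\mathcal{C}_{i,j}$. Since $k$ is large enough, there is a gadget $\mathcal{C}_{i',j'}$ disjoint in colors, hence a hub $x=x_{\mathcal{C}_{i,j},\mathcal{C}_{i',j'}}$. Then $g$ reaches $x\in H$ at time $4$ and continues with labels above $4$, either through the adjacency validator or through the late core path. Concretely, $g$ reaches $x$ at $4$, then $y_3$ at $c+1$, $y_4$ at $c+2$, and all of $H\cup W$ at $c+3$. I still have to reach other gadget vertices $g'$ and $y_1,y_2$ from $g$. For $g'$ in a different gadget: if $g'$ lies in a gadget that is the second gadget of some hub $x'$, it is entered at time $c$ from $x'$. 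So I need $g$ to reach $x'$ before time $c$. I get this by going $g\to x$ (time $4$) and then using the validator/hub edges, or by using the gadget's low labels, reaching $v^{(i,j)}_{\ell,i}$ and then an adjacency vertex at time roughly $\frac32 m+2h+4$, which lies in $W$. This is the \textbf{main obstacle}. Arriving in $W$ at a middle time gives no direct route to a hub before $c$, because the core edges to $H$ have labels $1,3$ (too early) or $c+1,c+3$ (too late). I will instead use the hub edges labeled $c$, for either orientation of the ordered pair. Since hubs exist for both orders $(\mathcal{C}_{i,j},\mathcal{C}_{i',j'})$ and $(\mathcal{C}_{i',j'},\mathcal{C}_{i,j})$, $g$ reaches the hub whose label-$4$ side is its own gadget, and that hub enters the target gadget at time $c$. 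For a target $g'$ in a gadget sharing a color with $g$'s gadget, I route through an intermediate color-disjoint gadget, using its high labels $L_\uparrow\subseteq(c-1,\,c+\frac32 m]$ internally and then its hub to the target. The timing must be checked against the fact that $L_\uparrow$ starts at $c-1$. Within a single gadget, I use the cycle's low and high labels as in Lemma~\ref{lem:edgeselection}.

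\textbf{Step 3 (remaining pairs).} Vertices in $H\cup W\cup\{y_1,y_2\}$ reach gadget vertices as follows. First reach a hub $x$ by time $3$ via the core. Then enter its first gadget at time $4$ and proceed as in Step 2, or enter its second gadget at time $c$ and spread along $L_\uparrow$. Gadget vertices reach $y_1$ and $y_2$ via $x$, then $y_3$ at $c+1$, then $y_1$ with the label $c+3$ edge $\{y_1,y_4\}$ (via $y_4$), and $y_2$ via $y_4$ at $c+3$. Vertices $y_3,y_4$ reach gadgets only through $y_1$ at time $1$ into $H$ at time $3$, and then as above. I expect every case to reduce to the three moves ``into the core by time $3$'', ``gadget to hub at $4$'', and ``hub to gadget at $c$ then $L_\uparrow$''. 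The delicate part is the ordering between $c$ and $L_\uparrow$, which I would verify first.
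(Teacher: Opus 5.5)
There is a genuine gap in the case you flagged yourself: two gadgets that share a color, such as $\mathcal{C}_{i,j}$ and $\mathcal{C}_{i,j'}$. Your fix, routing through an intermediate color-disjoint gadget $\mathcal{C}_{a,b}$, cannot work, because hubs cannot be chained. A hub $x_{\mathcal{C}_1,\mathcal{C}_2}$ only carries $\mathcal{C}_1\to\mathcal{C}_2$: label $4$ on the $\mathcal{C}_1$ side, the late label on the $\mathcal{C}_2$ side. Once you are inside $\mathcal{C}_{a,b}$ at the late hub label, every edge from $\mathcal{C}_{a,b}$ toward another gadget has a smaller or equal label. Hub edges on the $\mathcal{C}_{a,b}$ side carry $4$ or that same late label, and validator edges carry middle labels. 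Going through $y_3,y_4$ lands you in $H\cup W$ after every edge into a gadget. The construction is designed so that color-sharing gadgets cannot reach each other via the connector gadget or a third gadget, and the proof of Lemma~\ref{lem:corr2} relies on exactly this. So no version of ``funnel through the core'' covers this case.

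Your worry about the ordering of $c$ and $L_\uparrow$ is moot for color-disjoint pairs. Each hub is adjacent to \emph{every} vertex of its second gadget, so nothing has to spread inside the target gadget.

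The missing idea is that the adjacency validator is a direct bridge between the two gadgets, not a detour into $W$. Use the standing assumption that some $v_h\in V_i$ has a neighbor of every color. This gives $e_\ell\in E_{i,j}$ and $e_{\ell'}\in E_{i,j'}$ with $e_\ell\cap e_{\ell'}=\{v_h\}$.
\begin{itemize}
\item Any $g\in\mathcal{C}_{i,j}$ reaches $v^{(i,j)}_{\ell,i}$ along the cycle using low labels, arriving by time $\frac32|E_{i,j}|+4\le\frac32 m+4$.
\item The path through $w^{(j,j')}_{i,h,1}$, $w^{(j,j')}_{i}$, $w^{(j,j')}_{i,h,2}$ has labels $\frac32 m+2h+4<\frac32 m+2h+5<\frac32 m+2n+2h+4<\frac32 m+2n+2h+5$. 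Since $h\le kn$ and $k\ge 2$, it arrives at $v^{(i,j')}_{\ell',i}$ before the first high label $\frac32 m+4kn+5$.
\item From there the high labels of $\mathcal{C}_{i,j'}$ reach every vertex of the target gadget.
\end{itemize}
The reverse direction uses the $w^{(j',j)}$ vertices, and the other color-sharing patterns are analogous. You had the first half of this path and then looked for a hub from $W$. Continuing along the validator is what closes the case.

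A smaller slip: $y_4\to y_1\to y_3$, and likewise $y_1\to a$ right after $y_4\to y_1$, uses label $1$ twice and is not strict. Instead go $y_4\to y_1\to y_2$ at times $1,2$, then to some $a\in H\cup W$ at time $3$, then to $y_3$ at the late label.
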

\begin{proof}
    We first show that each vertex from the connector gadget can reach all other vertices. See \cref{fig:connector} for an illustration of the gadget. First, consider the vertices $y_1,y_2,y_3,y_4$. One can easily check that these vertices can reach each other. Furthermore, $y_1$, $y_3$, and $y_4$ can all reach $y_2$ at time~$2$. From there the vertices $y_1,y_2,y_3,y_4$ can reach all vertics in $H\cup W$ at time $3$. Furthermore, each vertex in $H\cup W$ can reach each other vertex in $H\cup W$ at time $3$ through a temporal path via $y_1$ and $y_2$. Each vertex in an edge selection gadget can be reached from at least one vertex in $H$ via a time edge at time $4$. With a symmetric argument, we can show that all vertices in the connector gadget can be reached by all other vertices.

    Consider an edge selection gadget $\mathcal{C}_{i,j}$. By \cref{lem:edgeselection} we know that all vertices in the gadget can reach each other. Consider a second edge selection gadget $\mathcal{C}_{i',j'}$. If $|\{i,i',j,j'\}|=4$, then each vertex in $\mathcal{C}_{i,j}$ can reach each vertex in $\mathcal{C}_{i',j'}$ via vertex $x_{\mathcal{C}_{i,j},\mathcal{C}_{i',j'}}$ of the connector gadget. Now assume that $i=i'$. There is a vertex $v_h\in V_i$ that has a neighbor of each color. Hence, there are two edges $e_\ell\in E_{i,j}$ and $e_{\ell'}\in E_{i,j'}$ such that $e_\ell\cap e_{\ell'}=v_h$. By construction of the validation connections, we have that the edge selection gadget $\mathcal{C}_{i,j}$ is connected to the edge selection gadget $\mathcal{C}_{i,j'}$ via $w^{(j,j')}_{i,h,1}$, $w^{(j,j')}_{i}$, and $w^{(j,j')}_{i,h,2}$, see \cref{fig:validation}.
    Vertex $w^{(j,j')}_{i,h,1}$ is connected to $v_{\ell,i}^{(i,j)}$ of $\mathcal{C}_{i,j}$ and vertex $v_{\ell,i}^{(i,j)}$ can be reached by each other vertex in $\mathcal{C}_{i,j}$ at time at most $\frac{3}{2}m+5$ (using the low labels). From this time, vertex $v_{\ell,i}^{(i,j)}$ can reach vertex $v_{\ell',i}^{(i,j')}$ of $\mathcal{C}_{i,j'}$ at time at most $\frac{3}{2}m+4n+5$. From this time $v_{\ell',i}^{(i,j')}$ can reach each other vertex of $\mathcal{C}_{i,j'}$ (using the high labels).
\end{proof}

\begin{lemma}\label{lem:fvn}
    The underlying graph of $\mathcal{G}$ has a feedback vertex number in $O(k^4)$.
\end{lemma}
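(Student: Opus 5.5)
We will exhibit an explicit feedback vertex set $F$ of size $O(k^4)$ and check that deleting it leaves a forest. Our candidate is
\[
F = H\cup\{y_1,y_2,y_3,y_4\}\cup\{w^{(j,j')}_i : 1\le i\le k,\ j\neq i\neq j'\}\cup\{v_{1,i}^{(i,j)} : 1\le i<j\le k\}.
\]
The size bound is immediate. There is one hub vertex per pair of color-disjoint gadgets, so $|H|=O(k^4)$. There are $O(k^3)$ central adjacency vertices $w^{(j,j')}_i$, and there are $O(k^2)$ chosen gadget vertices. Hence $|F|=O(k^4)$.

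Next we verify acyclicity of the underlying graph of $\mathcal{G}-F$ by listing which edges survive.
\begin{itemize}
\item Every edge incident to $y_1,\dots,y_4$ or to a hub vertex disappears. Hub vertices are adjacent only to gadget vertices and to the $y$'s, and the $y$'s are adjacent only to $H\cup W$ and to each other.
\item Each edge selection gadget $\mathcal{C}_{i,j}$ is a cycle in the underlying graph. Removing $v_{1,i}^{(i,j)}$ turns it into a path.
\item The remaining edges are those of the adjacency validators. Each non-central adjacency vertex $w^{(j,j')}_{i,h,1}$ or $w^{(j,j')}_{i,h,2}$ has exactly two neighbors: one gadget vertex $v^{(i,j)}_{\ell,i}$ (respectively $v^{(i,j')}_{\ell',i}$), and the central vertex $w^{(j,j')}_i$, which lies in $F$.
\end{itemize}
So after deletion, every non-central adjacency vertex is a pendant vertex (or isolated) attached to a gadget path. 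A graph obtained from disjoint paths by attaching pendant vertices is a forest.

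The one step needing care is confirming that no other edges touch adjacency vertices. From the construction, adjacency vertices are only incident to their validator edges and to $y_1,\dots,y_4$, so this holds. We also have to check that the two chains for a given $h$ share no non-central vertex; the four vertices $w_{i,h,\cdot}$ are distinct, so this is fine too. With these points confirmed the lemma follows. Alternatively, one could put all of $W$ into $F$, but $W$ has size depending on $n$, so the central vertices are the right choice.
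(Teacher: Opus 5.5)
Your proof is correct and is essentially the paper's argument. It uses the same feedback vertex set: all hub vertices, $y_1,\dots,y_4$, one vertex per edge selection cycle, and the central validator vertices $w^{(j,j')}_i$, whose removal leaves only paths with pendant adjacency vertices. Your explicit check that the only other neighbors of $W$-vertices are $y_1,\dots,y_4$ is a welcome detail; note that your ``exactly two neighbors'' claim, like the paper's ``degree-one'' remark, assumes the construction creates fresh vertices $w^{(j,j')}_{i,h,1},w^{(j,j')}_{i,h,2}$ for each pair $(e_\ell,e_{\ell'})$ rather than one per $h$.
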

\begin{proof}
We construct a feedback vertex set for the underlying graph of $\mathcal{G}$ of size $O(k^4)$. 
 First, we add all hub vertices of the connector gadget, as well as vertices $y_1$, $y_2$, $y_3$, and $y_4$ to the feedback vertex set. Note that this removes all edges of the connector gadgets from the underlying graph. Recall that each connector gadget has one hub vertex and $\mathcal{G}$ contains $O(k^4)$ connector gadgets. Hence, we add $O(k^4)$ vertices to the feedback vertex set. 

Each edge selection gadget has a cycle as an underlying graph, hence we need to add one (arbitrary) vertex of each edge selection gadget to the feedback vertex set. This adds $O(k^2)$ vertices to the feedback vertex set.

 Finally, for every $1\le i\le k$ and every $1\le j<j'\le k$ with $j\neq i\neq j'$, we add the two vertices $w^{(j,j')}_{i}$, $w^{(j',j)}_{i}$ of the validation connections to the feedback vertex set. Note that then only degree-one vertices that are attached to the edge selection gadgets remain. This adds $O(k^3)$ vertices to the feedback vertex set.
\end{proof}

Next, we show correctness of the reduction.

\begin{lemma}\label{lem:corr1}
    If $G$ contains a clique of size $k$, then $\mathcal{G}$ admits a strict temporal spanner with $6|E|-2\binom{k}{2}+8k\binom{k-1}{2}+x$ time edges, where $x$ is the number of time edges in the connector gadget.
\end{lemma}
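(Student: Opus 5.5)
The plan is to build the spanner explicitly from a multicolored clique $\{v_{h_1},\dots,v_{h_k}\}$ with $v_{h_i}\in V_i$. Write $e^{(i,j)}=\{v_{h_i},v_{h_j}\}=e_{\ell_{i,j}}\in E_{i,j}$ for the clique edge of colors $i,j$. The spanner will consist of three groups of time edges. First, all $x$ time edges of the connector gadget. Second, for each gadget $\mathcal{C}_{i,j}$, $6|E_{i,j}|-2$ time edges. Third, for each $i$ and each pair $j<j'$ with $j\neq i\neq j'$, the eight time edges of the adjacency validator attached to $h=h_i$ (between $v_{\ell_{i,j},i}^{(i,j)}$ and $v_{\ell_{i,j'},i}^{(i,j')}$). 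Summing gives $\sum_{i<j}(6|E_{i,j}|-2)+8k\binom{k-1}{2}+x=6|E|-2\binom{k}{2}+8k\binom{k-1}{2}+x$, as required.

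\textbf{Inside each gadget.} In $\mathcal{C}_{i,j}$ I would use the construction from the proof of \cref{lem:edgeselection}, started at the selected edge $e=\{u,v\}$ with $u=v_{\ell_{i,j},i}^{(i,j)}$ and $v=v_{\ell_{i,j},j}^{(i,j)}$. This gives $6|E_{i,j}|-3$ time edges. I would add one extra time edge $(e,\frac32|E_{i,j}|+4)$, which is the largest low label. The point of this edge is twofold. Every vertex on $v$'s side reaches $v$ with low labels, arriving by time $\frac32|E_{i,j}|+3$, and can then cross to $u$. Vertices on $u$'s side reach $u$ directly with low labels. Hence every vertex of the gadget reaches $u$ by time $\frac32 m+4$. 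Conversely, starting from time $\frac32 m+4kn+5$, $u$ reaches every gadget vertex: via the high path on its own side, and via $(e,\frac32m+4kn+5)$ followed by the high path on $v$'s side. I also need to recheck that \cref{lem:edgeselection}'s argument still gives internal strict temporal connectivity; this is immediate, since we only added an edge. The same property holds for $u'=v_{\ell_{i,j'},i}^{(i,j')}$ in $\mathcal{C}_{i,j'}$, and for the endpoint of color $j$ in the analogous cases.

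\textbf{Reachability between parts.} I would verify reachability by cases, reusing the argument of \cref{lem:connected}, since all connector edges are kept.
\begin{itemize}
\item Connector vertices, including all of $W$ (also the adjacency vertices $w_{i,h',1},w_{i,h',2}$ for $h'\neq h_i$, whose validator edges were dropped): they reach each other and everything early through $y_1,y_2$ and the hub edges at time $4$. They are reached late through the hub edges at time $3m+4kn+6$ followed by $y_3,y_4$.
\item Two gadgets with four distinct colors: they are connected through their hub vertex exactly as in \cref{lem:connected}.
\item Two gadgets $\mathcal{C}_{i,j}$ and $\mathcal{C}_{i,j'}$ sharing color $i$: every vertex of the first reaches $u$ by time $\frac32m+4$. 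It then follows the kept validator path for $h_i$, with labels $\frac32m+2h_i+4<\dots<\frac32m+2n+2h_i+5\le \frac32m+4kn+4$, to $u'$. From $u'$ it spreads through $\mathcal{C}_{i,j'}$ with high labels. The reverse direction uses the second validator path. The timing check is that $\frac32m+4$ is below the first validator label, and that the last validator label is below the first high label.
\end{itemize}
This is exactly where the clique property enters: the same $h_i$ is an endpoint of both selected edges, so the validator edges for $h_i$ exist and attach to the two selected vertices.

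I expect the main obstacle to be the bookkeeping in the intra-gadget step. One must check that the single added low edge really makes $u$ an early sink and a late source simultaneously, with the exact label bounds for every vertex, including those at distance $\frac32|E_{i,j}|$. The case analysis over the four orientations ($i<j$ versus $j<i$ for the shared color) is analogous throughout.
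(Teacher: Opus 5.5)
Your proposal is correct and follows the paper's proof essentially step for step. You use the same three groups of time edges (all connector edges; the \cref{lem:edgeselection} construction started at the clique edge, plus the extra copy $(e_\ell,\frac32|E_{i,j}|+4)$; and the eight validator edges for each color's clique vertex), the same count, and the same case analysis for strict temporal connectivity, with your early-sink/late-source timing check on the selected endpoint being the fact the paper uses implicitly when routing through the validator.
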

\begin{proof}
Assume that we are given a clique $C$ of $G$ with $|C|=k$. We construct a strict temporal spanner $\mathcal{G}'$ for $\mathcal{G}$ as follows. We add all time edges from $\mathcal{G}$ to $\mathcal{G}'$ that belong to the connector gadget. Note that these are $x$ time edges.

Consider the edge selection gadget $\mathcal{C}_{i,j}$ and let $e_\ell\in E_{i,j}$ such that $e_\ell\subseteq C$. We add the time edges to $\mathcal{G}'$ that are produced in the proof of \cref{lem:edgeselection} when edge $e_\ell$ is selected to start the construction (see also \cref{fig:edgeselection2}). In addition, we add the time edge $(e_\ell,\frac{3}{2}|E_{i,j}|+4)$ to $\mathcal{G}'$. Note that in total this adds $6|E|-2\binom{k}{2}$ time edges to~$\mathcal{G}'$.

Now consider two edge selection gadgets $\mathcal{C}_{i,j}$ and $\mathcal{C}_{i,j'}$ with $j<j'$ that have a common color~$i$. (The cases of $\mathcal{C}_{i,j}$ and $\mathcal{C}_{j',i}$, $\mathcal{C}_{j,i}$ and $\mathcal{C}_{j',i}$, and $\mathcal{C}_{j,i}$ and $\mathcal{C}_{i,j'}$ are analogous.) Let $e_\ell\in E_{i,j}$ such that $e_\ell\subseteq C$ and let $e_{\ell'}\in E_{i,j'}$ such that $e_{\ell'}\subseteq C$. Since $C$ is a clique, we have that $e_\ell\cap e_{\ell'}=\{v_h\}$. We add the eight time edges $(\{v_{\ell,i}^{(i,j)},w^{(j,j')}_{i,h,1}\},\frac{3}{2}m+4h+2)$, $(\{w^{(j,j')}_{i,h,1},w^{(j,j')}_{i}\},\frac{3}{2}m+4h+3)$, $(\{w^{(j,j')}_{i},w^{(j,j')}_{i,h,2}\},\frac{3}{2}m+4h+4)$, $(\{w^{(j,j')}_{i,h,2},v_{\ell',i}^{(i,j')}\},\frac{3}{2}m+4h+5)$, $(\{v_{\ell',i}^{(i,j')},w^{(j',j)}_{i,h,1}\},\frac{3}{2}m+4h+2)$, $(\{w^{(j',j)}_{i,h,1},w^{(j',j)}_{i}\},\frac{3}{2}m+4h+3)$, $(\{w^{(j',j)}_{i},w^{(j',j)}_{i,h,2}\},\frac{3}{2}m+4h+4)$, and $(\{w^{(j',j)}_{i,h,2},v_{\ell,i}^{(i,j)}\},\frac{3}{2}m+4h+5)$ to $\mathcal{G}'$. Note that in total this adds $8k\binom{k-1}{2}$ time edges to $\mathcal{G}'$.

This finishes the construction of $\mathcal{G}'$. It remains to show that $\mathcal{G}'$ is strictly temporally connected.
First, notice that all vertices in $H\cup W\cup\{y_1,y_2,y_3,y_4\}$ can reach all vertices and can be reached by all vertices (see also \cref{fig:connector}). We show this formally in the proof of \cref{lem:connected}. In the following, we show that the vertices from the edge selection gadgets can reach all vertices and can be reached by all vertices.
Consider an edge selection gadget $\mathcal{C}_{i,j}$, then by \cref{lem:edgeselection} we have that all vertices of the gadget can reach all other vertices of the gadget. Now consider two edge selection gadgets $\mathcal{C}_{i,j}$ and $\mathcal{C}_{i',j'}$. If $|\{i,j,i',j'\}|=4$, then by construction the two edge selection gadgets are connected via the connector gadget, and all vertices of $\mathcal{C}_{i,j}$ can reach all vertices in $\mathcal{C}_{i',j'}$ and vice versa. Lastly, consider the case where $i=i'$. Let $w$ be a vertex from $\mathcal{C}_{i,j}$ and $w'$ be a vertex from $\mathcal{C}_{i,j'}$. We construct a temporal path from $w$ to $w'$ in $\mathcal{G}'$ as follows. Note that there is a temporal path from $w$ to $v_{\ell,i}^{(i,j)}$ in $\mathcal{G}'$ that only uses time edges from $\mathcal{C}_{i,j}$ with low labels. It follows that we can extend the temporal path with the time edges $(\{v_{\ell,i}^{(i,j)},w^{(j,j')}_{i,h,1}\},\frac{3}{2}m+2h+4)$, $(\{w^{(j,j')}_{i,h,1},w^{(j,j')}_{i}\},\frac{3}{2}m+2h+5)$, $(\{w^{(j,j')}_{i},w^{(j,j')}_{i,h,2}\},\frac{3}{2}m+2n+2h+4)$, $(\{w^{(j,j')}_{i,h,2},v_{\ell',i}^{(i,j')}\},\frac{3}{2}m+2n+2h+5)$. Note that the last label $\frac{3}{2}m+2n+2h+5$ is smaller than all high labels of the edge selection gadgets. There is a temporal path from $v_{\ell',i}^{(i,j')}$ to $w'$ in $\mathcal{G}'$ that only uses time edges from $\mathcal{C}_{i,j'}$ with high labels. It follows that there is a temporal path from $w$ to $w'$ in $\mathcal{G}'$. A temporal path from $w'$ to $w$ can be constructed analogously.
\end{proof}

To show the other direction of the correctness, we first need to establish some properties of all strict temporal spanners of $\mathcal{G}$.

\begin{lemma}\label{lem:connectorgadget}
    Every strict temporal spanner of $\mathcal{G}$ contains all time edges of the connector gadget.
\end{lemma}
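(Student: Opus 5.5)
We prove the statement by showing, for each time edge of the connector gadget, that some ordered pair of vertices loses its only strict temporal path when that edge is removed. The key structural fact is the label layout. Edges at $y_1,y_2$ carry labels $1,2,3$. Edges at $y_3,y_4$ carry labels $c+1,c+2,c+3$ with $c=\frac{3}{2}m+4kn+6$, plus the $1$-edges $\{y_1,y_3\}$ and $\{y_1,y_4\}$. The hub edges carry labels $4$ and $c$. All remaining labels (edge selection gadgets and adjacency validators) lie strictly between $4$ and $c$. The argument therefore reduces to inspecting very early and very late time steps.

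We treat the edge groups in the following order.

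\textbf{Edges with label $1$.} These are $\{v,y_1\}$ for $v\in H\cup W$, and $\{y_1,y_3\}$, $\{y_1,y_4\}$ at label $1$. A strict path using a label-$1$ edge must begin with it. Conversely, we check which vertices $a$ can reach $y_1$ at all. Every other edge at $y_1$ has label $2$ (to $y_2$) or $c+3$ (to $y_4$). So the only way to reach $y_1$ other than by a direct label-$1$ edge is to arrive at $y_2$ before time $2$ or at $y_4$ before time $c+3$. Arriving at $y_2$ before time $2$ is impossible, since its cheapest incoming edge is $\{y_1,y_2\}$ itself. Arriving at $y_4$ before time $c+3$ means entering through $\{y_3,y_4\}$ at $c+2$ or $\{y_1,y_4\}$ at $1$. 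We then check that $v$ itself cannot use these routes to reach $y_1$ without its direct edge, because every route from $v$ into $y_3$ or $y_4$ has label at least $c+1$. The $1$-edges from $y_3$ and $y_4$ to $y_1$ are handled the same way, by tracking which vertex can reach $y_1$, or reach $y_3$, $y_4$ from $y_1$, at the required time.

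\textbf{Edges with labels $2$ and $3$.} Vertex $y_1$ reaches $H\cup W$ only through $y_2$, and vertices of $H\cup W$ must be reached from, for example, $y_1$ early enough.

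\textbf{Hub edges.} Consider a hub edge $\{v,x\}$ with label $4$, where $v\in\mathcal{C}_{i,j}$, and the partner gadget $\mathcal{C}_{i',j'}$ shares no color with it. For $v$ to reach a vertex of $\mathcal{C}_{i',j'}$, the path must leave $\mathcal{C}_{i,j}$. We argue that an edge selection gadget can be left only through adjacency validators, which lead only to gadgets sharing a color, or through hub edges. Validators lead to $W$, whose connector edges have labels $1,3$ (already too early) or $c+1$ and above (too late to enter another gadget except via hub edges of label $c$ and higher). Hub edges of label $c$ land on hub vertices whose later edges are to $y_3,y_4$ only. Hence the only way for $v$ to reach a vertex $v'$ of $\mathcal{C}_{i',j'}$ is $v\to x\to v'$. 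This uses the edge $\{v,x\}$ at label $4$ and the edge $\{x,v'\}$ at label $c$, so both hub edges are necessary. To pin this down we need the fact that any path leaving $v$ before time $4$ must use a hub edge, since all non-hub edges at $v$ have labels above $4$.

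\textbf{Edges at $y_3,y_4$ with labels $c+1$, $c+2$, $c+3$.} These are symmetric to the early ones. Vertices of $H\cup W$ are reachable late only from $y_3$ and $y_4$, and must reach $y_4$ at the end.

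The main obstacle is the hub-edge case. It requires a clean invariant ruling out detours through validators and other gadgets, which amounts to a label-interval argument: every path from $v$ to $\mathcal{C}_{i',j'}$ through $W$ or $y$-vertices would need to enter a gadget at a label at most $4$, or leave one at a label at least $c$, and that is impossible except via the hub.
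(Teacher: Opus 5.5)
Your decomposition matches the paper's: the hub edges are forced by the pair $(v,v')$, and each edge at $y_1,\dots,y_4$ by some ordered pair that would be disconnected without it. The gap is that your witnesses for the star edges between $H\cup W$ and $y_1,\dots,y_4$ are not witnesses, and for part of the statement no witness exists.

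\textbf{The star-edge witnesses fail.} In the label-$1$ case you dismiss entry into $y_4$ through $\{y_3,y_4\}$ because routes from $v$ into $y_3$ carry label at least $c+1$. That is no obstruction: $v\to y_3\to y_4\to y_1$ at times $c+1,c+2,c+3$ is a strict path. Your label inventory omits $(\{y_1,y_4\},c+3)$ and $(\{y_2,y_4\},c+3)$. The other pairs fail the same way:
\begin{itemize}
\item $y_1$ reaches every $u\in H\cup W$ via $y_1\to y_3\to y_4\to u$ (times $1,c+2,c+3$);
\item $v$ reaches $y_3$ via $v\to y_1\to y_2\to u'\to y_3$ (times $1,2,3,c+1$);
\item $y_4$ reaches $v$ via $y_4\to y_1\to y_2\to v$ (times $1,2,3$).
\end{itemize}
The paper's proof uses exactly these pairs (``otherwise $v$ cannot reach $y_1$'', ``otherwise $y_2$ cannot reach $v$'', and so on) and has the same defect.

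For $(\{u,y_1\},1)$ and $(\{y_4,u\},c+3)$ the defect is repairable with gadget vertices as witnesses, provided $k$ is large enough. No gadget vertex has a label below $4$, and no gadget can be entered after time $c$. Hence $u$ reaches a gadget it is not attached to only along $u\to y_1\to y_2\to x\to z$ (times $1,2,3,4$). Likewise, a vertex of such a gadget reaches $u$ only along $\cdots\to y_3\to y_4\to u$.

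\textbf{The label-$3$ edges are not forced at all.} Take $u\in W$ and delete $(\{y_2,u\},3)$. Then:
\begin{itemize}
\item Every vertex still reaches $y_3$ by time $c+1$: $H\cup W$ directly, gadget vertices through a hub at time $4$, $y_1$ directly, and $y_2,y_4$ through $y_2\to u'\to y_3$ with $u'\neq u$.
\item Hence every vertex reaches all of $H\cup W\cup\{y_1,y_2\}$ via $y_3\to y_4$ at times $c+2,c+3$.
\item Every vertex of $H\cup W\cup\{y_1,\dots,y_4\}$ reaches each gadget vertex $z$ along (a suffix of) $\cdot\to y_1\to y_2\to x\to z$, where $x$ is any hub attached to $z$ at label $4$.
\item Paths between gadget vertices never use labels outside $[4,c]$, so they are unaffected.
\end{itemize}
So the graph stays strictly temporally connected. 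The same argument, with $u$ reaching $y_3$ via $u\to y_1\to y_2\to u'\to y_3$, handles $(\{u,y_3\},c+1)$. For $k\ge 5$ it also covers hub vertices $u$. Your ``early enough'' is not a requirement of temporal connectivity. As written, the statement is false. What is true (for $k$ large) is that the hub edges, the six time edges among $y_1,\dots,y_4$, and the stars at labels $1$ and $c+3$ are forced.

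\textbf{The hub-edge argument has the right conclusion but a false invariant.} Validator paths enter and leave gadgets at labels strictly between $4$ and $c$. Moreover, ``validators lead only to gadgets sharing a color'' does not exclude the two-hop detour $\mathcal{C}_{i,j}\to\mathcal{C}_{i,i'}\to\mathcal{C}_{i',j'}$, since $\mathcal{C}_{i,i'}$ shares a color with both endpoints.

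The missing fact, which the paper gets from its timing argument, is that validator hops do not compose. A validator path enters $\mathcal{C}_{i,j'}$ at $v^{(i,j')}_{\ell',i}$ at time $\frac32m+2n+2h'+5$, where $v_{h'}$ is the color-$i$ endpoint of $e_{\ell'}$. Every validator edge at that vertex carries $\frac32m+2h'+4$ or $\frac32m+2n+2h'+5$, so it is already in the past. No gadget edge has a label between the low and high ranges. The path can therefore only continue on high labels, after which no validator label remains, or on a hub edge of label $c$.

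Finally, take $c=3m+4kn+6$ from the construction, not $\frac32m+4kn+6$ from the figure. Only with this value do all gadget labels lie below $c$, which is what strands a path that reaches $\mathcal{C}_{i',j'}$ at time $c$ and forces the edge $\{x,v'\}$.
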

\begin{proof}
The connector gadget is illustrated in \cref{fig:connector}. Consider two edge selection gadgets $\mathcal{C}_{i,j}$ and $\mathcal{C}_{i',j'}$ such that $|\{i,j,i',j'\}|=4$, that is, the two gadgets do not share a color.
Let $v$ be a vertex of $\mathcal{C}_{i,j}$ and let $v'$ be a vertex of $\mathcal{C}_{i',j'}$, and consider the time edges $(\{v,x_{\mathcal{C}_{i,j},\mathcal{C}_{i',j'}}\},4)$ and $(\{v',x_{\mathcal{C}_{i,j},\mathcal{C}_{i',j'}}\},\frac{3}{2}m+4kn+6)$. We claim that both edges need to be present since otherwise, $v$ cannot reach $v'$. To this end, assume that at least one of those two edges does not exist and consider the vertices that $v$ can reach.
All other vertices reachable by $v$ are reached at a time strictly larger than 4. It follows that $v$ can reach vertices in the connector gadget the earliest at time $\frac{3}{2}m+4kn+6$. However, all time edges incident with $v'$ have label at most $\frac{3}{2}m+4kn+6$. Hence, $v$ cannot reach $v'$ by first reaching a vertex in the connector gadget and then reaching $v'$ from that vertex.
We can conclude that if $v$ can reach $v'$ it needs to be via a temporal path that does not visit any vertices from the connector gadget. 

Recall that since $|\{i,j,i',j'\}|=4$, the edge selection gadgets $\mathcal{C}_{i,j}$ and $\mathcal{C}_{i',j'}$ are not connected via a validation connection. Let $\mathcal{C}_{i'',j''}$ be an edge selection gadget with $|\{i,j,i'',j''\}|<4$ and assume that $i=i''$.  Then there is a validation connection between $\mathcal{C}_{i,j}$ and $\mathcal{C}_{i,j''}$. Note that by construction, each temporal path from a vertex of $\mathcal{C}_{i,j}$ to a vertex of $\mathcal{C}_{i,j''}$ that visits only vertices from the two edge selection gadgets and the connector vertices needs to visit $w_i^{(j,j'')}$. For an illustration see \cref{fig:validation}. This temporal path arrives at $w_i^{(j,j'')}$ the latest at time $\frac{3}{2}m+2n+5$ and arrives at a vertex from $\mathcal{C}_{i,j''}$ the earliest at time $\frac{3}{2}m+2n+7$. It follows that from $\mathcal{C}_{i,j''}$, the temporal path cannot visit another edge selection gadget $\mathcal{C}_{i''',j'''}$ with $|\{i,j'',i''',j'''\}|<4$ via only connector vertices. We can conclude that $v$ cannot reach $v'$.

It remains to show that all of the time edges incident with $\{y_1,y_2,y_3,y_4\}$ have to be contained in every strict temporal spanner.
\begin{itemize}
    \item Consider $y_1$. 
    
    Let $v\in H\cup W$. Then the time edge $(\{y_1,v\},1)$ needs to be contained in every strict temporal spanner since otherwise $v$ cannot reach $y_1$. 
    
    The time edges $(\{y_1,y_3\},1)$ and $(\{y_1,y_2\},2)$ need to be contained in every strict temporal spanner since otherwise $y_3$ cannot reach the vertices of the edge selection gadgets. 
    
    The time edge $(\{y_1,y_4\},1)$ and $(\{y_1,y_2\},2)$ need to be contained in every strict temporal spanner since otherwise $y_4$ cannot reach the vertices of the edge selection gadgets.

    We discuss the time edge $(\{y_1,y_4\},3m+4kn+9)$ when we consider $y_4$.
    \item Consider $y_2$.

    Let $v\in H\cup W$. Then the time edge $(\{y_2,v\},3)$ needs to be contained in every strict temporal spanner since otherwise, $y_2$ cannot reach $v$.

    We discuss the time edge $(\{y_2,y_4\},3m+4kn+9)$ when we consider $y_4$.

    \item Consider $y_3$.

    Let $v\in H\cup W$. Then the time edge $(\{y_3,v\},3m+4kn+7)$ needs to be contained in every strict temporal spanner since otherwise $v$ cannot reach $y_3$.

    We discuss the time edge $(\{y_3,y_4\},3m+4kn+8)$ when we consider $y_4$.

    \item Consider $y_4$.

    Let $v\in H\cup W$. Then the time edge $(\{y_4,v\},3m+4kn+9)$ needs to be contained in every strict temporal spanner since otherwise, $y_4$ cannot reach $v$.
    
    The time edge $(\{y_3,y_4\},3m+4kn+8)$ and $(\{y_1,y_4\},3m+4kn+9)$ need to be contained in every strict temporal spanner since otherwise the vertices of the edge selection gadgets cannot reach $y_1$.

    The time edge $(\{y_3,y_4\},3m+4kn+8)$ and $(\{y_2,y_4\},3m+4kn+9)$ need to be contained in every strict temporal spanner since otherwise the vertices of the edge selection gadgets cannot reach $y_2$.
\end{itemize}
We can conclude that all time edges of the connector gadget need to be contained in every strict temporal spanner of $\mathcal{G}$. 
\end{proof}

Now we are ready to prove the second part of the correctness.

\begin{lemma}\label{lem:corr2}
    If $\mathcal{G}$ admits a strict temporal spanner with $6|E|-2\binom{k}{2}+8k\binom{k-1}{2}+x$ time edges, then $G$ contains a clique of size $k$.
\end{lemma}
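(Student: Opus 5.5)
We argue by counting lower bounds, in the same way as the happy reduction. Fix a strict temporal spanner $\mathcal{G}'$ with $6|E|-2\binom{k}{2}+8k\binom{k-1}{2}+x$ time edges. By \cref{lem:connectorgadget}, $\mathcal{G}'$ contains all $x$ time edges of the connector gadget. Next, we restrict $\mathcal{G}'$ to each edge selection gadget $\mathcal{C}_{i,j}$. We will argue that the restriction must itself strictly connect all vertices of $\mathcal{C}_{i,j}$ to each other. The reason is that any path leaving the gadget goes through a hub vertex, a connector vertex or an adjacency vertex, and the labels there ($4$, $\frac32 m+4kn+6$, or the middle range $[\frac32m+4,\frac32m+4n+5]$) do not allow returning to the same gadget in time. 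By \cref{lem:edgeselection}, this gives at least $6|E_{i,j}|-3$ time edges inside $\mathcal{C}_{i,j}$.

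Second, we lower-bound the edges used in the adjacency validators. Fix a color $i$ and $j<j'$ (and the analogous orderings). Every vertex of $\mathcal{C}_{i,j}$ must reach every vertex of $\mathcal{C}_{i,j'}$, and vice versa. By the argument in the proof of \cref{lem:connectorgadget}, such paths cannot use the connector gadget. Hence they must pass through $w_i^{(j,j')}$ (resp.\ $w_i^{(j',j)}$) along some four-edge route via $w_{i,h,1},w_{i,h,2}$. This forces at least $8$ validator edges per triple $(i,\{j,j'\})$, so at least $8k\binom{k-1}{2}$ in total.

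The budget then leaves exactly one extra time edge per gadget: $\sum(6|E_{i,j}|-3)=6|E|-3\binom k2$, and we have $\binom k2$ to spare. The main obstacle is showing that these extra edges are forced and encode a selected edge. We will show that a spanner of exactly $6|E_{i,j}|-3$ edges inside $\mathcal{C}_{i,j}$ is insufficient once external connectivity is required. Concretely, vertices of $\mathcal{C}_{i,j}$ must reach $v_{\ell,i}^{(i,j)}$ using only low labels, before the validator window. Likewise, the vertices must be reachable from $v_{\ell,i}^{(i,j)}$ using only high labels after it. So the gadget restricted to low labels must give all-to-one reachability into $v_{\ell,i}^{(i,j)}$, and the high labels must give one-to-all reachability from the entry point $v_{\ell',i}$ of the validator path. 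On a cycle this needs $n_{ij}-1$ low and $n_{ij}-1$ high edges, where $n_{ij}=3|E_{i,j}|$ is the number of vertices of the gadget. That is $6|E_{i,j}|-2$ edges, with exactly one edge of the cycle missing in each of the low and the high part.

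We then aim to show that the only way to meet the bound tightly is for the missing low and high edges to coincide at the bold edge $\{v_{\ell,i},v_{\ell,j}\}$ of some $e_\ell$. Moreover, all validator paths touching $\mathcal{C}_{i,j}$ (for both colors $i$ and $j$) must attach at $v_{\ell,i}$ and $v_{\ell,j}$ of that same $\ell$. We would try to prove this by noting that the labels $\frac32m+2h+4$ and $\frac32m+2n+2h+5$ identify $h$, and that a single four-edge route per triple must serve both directions of reachability.

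Given this, we define $e_{i,j}:=e_\ell$ as the selected edge of each gadget. For gadgets $\mathcal{C}_{i,j}$ and $\mathcal{C}_{i,j'}$, the unique validator route used goes through vertices $w_{i,h,\cdot}$ for a single $h$. It attaches to $v_{\ell,i}^{(i,j)}$ and $v_{\ell',i}^{(i,j')}$, and such edges exist only if $e_\ell\cap e_{\ell'}=\{v_h\}$. Therefore the selected edges agree on their color-$i$ endpoint for all pairs, and the $k$ endpoints form a clique. The delicate step is the exactness argument in the previous paragraph. There we expect a careful case analysis of where the single missing low and high edges of each cycle can lie relative to the attachment vertices.
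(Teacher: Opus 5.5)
Your counting is the same as the paper's. It covers all $x$ connector edges, at least $6|E_{i,j}|-3$ edges inside each gadget, and $8$ validator edges per triple. Then, since every gadget needs both an exit and an entry, it forces $n_{ij}-1$ low plus $n_{ij}-1$ high time edges per gadget, which exhausts the slack.

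The gap is the step you defer, and the structure you plan to prove there is false. Every vertex of $\mathcal{C}_{i,j}$ must reach an exit using low labels inside the gadget, and there are exactly $n_{ij}-1$ low time edges. So these lie one per edge on the path $P$ obtained by deleting a single cycle edge. A vertex is reachable from all others by low labels exactly when the labels strictly increase along $P$ toward it from both ends. If $a\neq b$ are two such targets, the edges of $P$ between them would have to increase toward $a$ and toward $b$ at the same time. That is possible only if there is just one such edge, so $a,b$ are adjacent on $P$ and $\{a,b\}$ carries the largest low label used in the gadget.

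In particular, the two endpoints of the missing edge are the ends of $P$, which lie $n_{ij}-1\ge 5$ steps apart, so they can never both be exits. For $k\ge 3$ each $\mathcal{C}_{i,j}$ has exits of both colors $i$ and $j$. Hence its exits are never the endpoints of the gap, and taking $e_{i,j}$ to be the gap edge selects the wrong edge. In the spanner of \cref{lem:corr1}, $e_\ell$ carries the largest low label and the smallest high label, and the missing edge is the antipodal one.

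The idea you are missing is exactly this adjacency fact. The paper uses it when it argues that each exit vertex is incident with the time edge carrying the largest low label. Combine it with the shape of the cycle: distinct $v_{\ell,i},v_{\dot\ell,i}$ are never adjacent, and $v_{\ell,i},v_{\ell'',j}$ are adjacent only if $\ell=\ell''$. It follows that all color-$i$ exits of $\mathcal{C}_{i,j}$ coincide. It also follows that the color-$i$ and color-$j$ exits are the two ends of one bold edge $e_\ell$, and this $e_\ell$ is the selected edge.

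The paper gets one vertex per color from the first consequence, passing through a third gadget $\mathcal{C}_{i,j'''}$ and using the symmetric high-label argument for entries. It gets the clique property from the second.

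Your proposed substitute, that a single four-edge route per triple serves both directions, does not hold. Each direction has its own center, $w_i^{(j,j')}$ or $w_i^{(j',j)}$, as your own count of $8$ per triple reflects.

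Also be careful with the claim that the labels identify $h$. They do so on each half of a route separately, but the two halves meet at the shared center. Since $\frac32m+2h+5<\frac32m+2n+2h'+4$ for all $h,h'\in[n]$, a route can enter with $h$ and leave with $h'\neq h$. The paper's proof also takes a single $h$ per route for granted. This point therefore needs its own justification, or modified validator labels, in any write-up.
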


\begin{proof}
Assume we are given a strict temporal spanner $\mathcal{G}'$ for $\mathcal{G}$ that contains $6|E|-2\binom{k}{2}+8k\binom{k-1}{2}+x$ time edges. Before we show how to construct a clique of size $k$ for $G$, we prove some additional properties of $\mathcal{G}'$.

By \cref{lem:connectorgadget} we know that $\mathcal{G}'$ contains all time edges of the connector gadget. These are $x$ time edges. Hence, $\mathcal{G'}$ contains $6|E|-2\binom{k}{2}+8k\binom{k-1}{2}$ further time edges. 
Next, consider an edge selection gadget $\mathcal{C}_{i,j}$. By construction, we have that the vertices in $\mathcal{C}_{i,j}$ cannot reach each other through temporal paths that visit vertices outside of the gadget. By \cref{lem:edgeselection} we have that at least $6|E_{i,j}|-3$ time edges are necessary to make sure that all vertices in $\mathcal{C}_{i,j}$ can reach each other. Summed up over all edge selection gadget, these are $6|E|-3\binom{k}{2}$ time edges. This means that there are $\binom{k}{2}+8k\binom{k-1}{2}$ additional time edges in $\mathcal{G}'$.

There are $k\binom{k-1}{2}$ combinations of edge selection gadgets $\mathcal{C}_{i,j}$ and $\mathcal{C}_{i,j'}$ that share a color. By construction, the vertices of $\mathcal{C}_{i,j}$ cannot reach the vertices of $\mathcal{C}_{i,j'}$ (and vice versa) via the connector gadget. Furthermore, we argue that they cannot reach each other by temporal paths that visit vertices from a third edge selection gadget. 
 Let $\mathcal{C}_{i'',j''}$ be an edge selection gadget. 
 %Consider the case that $|\{i,j,i'',j''\}|=4$. Then vertices in $\mathcal{C}_{i,j}$ can reach vertices in $\mathcal{C}_{i'',j''}$ via the connector gadget. However, the temporal path arrives at some vertex in $\mathcal{C}_{i'',j''}$ at time $\frac{3}{2}m+4kn+6$ and cannot continue since all time edges incident with vertices of edge selection gadgets have label at most $\frac{3}{2}m+4kn+6$. We can conclude that vertices from  $\mathcal{C}_{i,j}$ cannot reach the vertices of $\mathcal{C}_{i,j'}$ (and vice versa)
 Assume that $i=i''$ and $j\neq j''$.  Then there is a validation connection between $\mathcal{C}_{i,j}$ and $\mathcal{C}_{i,j''}$. Note that by construction, each temporal path from a vertex of $\mathcal{C}_{i,j}$ to a vertex of $\mathcal{C}_{i,j''}$ that visits only vertices from the two edge selection gadgets and the connector vertices needs to visit $w_i^{(j,j'')}$. For an illustration see \cref{fig:validation}. This temporal path arrives at $w_i^{(j,j'')}$ the latest at time $\frac{3}{2}m+2n+5$ and arrives at a vertex from $\mathcal{C}_{i,j''}$ the earliest at time $\frac{3}{2}m+2n+7$. It follows that from $\mathcal{C}_{i,j''}$, the temporal path cannot visit another edge selection gadget $\mathcal{C}_{i''',j'''}$ with $|\{i,j'',i''',j'''\}|<4$ (in particular $\mathcal{C}_{i,j'}$) via only connector vertices.
 
%This is shown formally in the proof of \cref{lem:connectorgadget}.
We can conclude that for vertices from $\mathcal{C}_{i,j}$ to reach the vertices of $\mathcal{C}_{i,j'}$, there needs to be a temporal path via validation connection in the strict temporal spanner. More specifically, $\mathcal{G}'$ needs to contain time edges $(\{v_{\ell,i}^{(i,j)},w^{(j,j')}_{i,h,1}\},\frac{3}{2}m+2h+4)$, $(\{w^{(j,j')}_{i,h,1},w^{(j,j')}_{i}\},\frac{3}{2}m+2h+5)$, $(\{w^{(j,j')}_{i},w^{(j,j')}_{i,h,2}\},\frac{3}{2}m+2n+2h+4)$, $(\{w^{(j,j')}_{i,h,2},v_{\ell',i}^{(i,j')}\},\frac{3}{2}m+2n+2h+5)$ for some $\ell,\ell',h$. In total, this amounts to $8k\binom{k-1}{2}$ additional time edges in~$\mathcal{G}'$. There are $\binom{k}{2}$ remaining time edges in~$\mathcal{G}'$. 

In the following, we show that $\mathcal{G}'$ contains one additional time edge per edge selection gadget. To this end, we take a closer look at the above-discussed case, that vertices from $\mathcal{C}_{i,j}$ reach the vertices of $\mathcal{C}_{i,j'}$ through the validation connection. Assume that $\mathcal{G}'$ contains time edges $(\{v_{\ell,i}^{(i,j)},w^{(j,j')}_{i,h,1}\},\frac{3}{2}m+2h+4)$, $(\{w^{(j,j')}_{i,h,1},w^{(j,j')}_{i}\},\frac{3}{2}m+2h+5)$, $(\{w^{(j,j')}_{i},w^{(j,j')}_{i,h,2}\},\frac{3}{2}m+2n+2h+4)$, $(\{w^{(j,j')}_{i,h,2},v_{\ell',i}^{(i,j')}\},\frac{3}{2}m+2n+2h+5)$. Then every vertex in $\mathcal{C}_{i,j}$ needs to reach $v_{\ell,i}^{(i,j)}$ before time $\frac{3}{2}m+2h+4$, that is, via a temporal path only using the low labels of $\mathcal{C}_{i,j}$. It follows that out of the $6|E_{i,j}|-3$ time edges used for $\mathcal{C}_{i,j}$, at least $3|E_{i,j}|-1$ need to have low labels. By an analogous argument, we get that out of the $6|E_{i,j'}|-3$ time edges used for $\mathcal{C}_{i,j'}$, at least $3|E_{i,j'}|-1$ need to have high labels. We can conclude that for each edge selection gadget $\mathcal{C}_{i,j}$, we need one additional time edge in the spanner, that is, $6|E_{i,j}|-2$ time edges in total. This uses up the remaining $\binom{k}{2}$ time edges.

Now we are ready to construct a clique $C$ of size $k$ for $G$. Again, consider two edge selection gadgets $\mathcal{C}_{i,j}$ and $\mathcal{C}_{i,j'}$ that share a color and let the time edges $(\{v_{\ell,i}^{(i,j)},w^{(j,j')}_{i,h,1}\},\frac{3}{2}m+2h+4)$, $(\{w^{(j,j')}_{i,h,1},w^{(j,j')}_{i}\},\frac{3}{2}m+2h+5)$, $(\{w^{(j,j')}_{i},w^{(j,j')}_{i,h,2}\},\frac{3}{2}m+2n+2h+4)$, $(\{w^{(j,j')}_{i,h,2},v_{\ell',i}^{(i,j')}\},\frac{3}{2}m+2n+2h+5)$ be in $\mathcal{G}'$. By the arguments above, we know that $\mathcal{G}'$ needs to contain these time edges for some $\ell,\ell',h$. Then we add vertex $v_h\in V_i$ to the clique $C$.

Note that $|C|\ge k$, since we add at least one vertex for each color. Assume for contradiction that for some color $i$, two vertices are added to $C$. Let those vertices be $v_h$ and $v_{h'}$. Then there are edge selection gadgets $\mathcal{C}_{i,j}$ and $\mathcal{C}_{i,j'}$ that are connected by time edges $(\{v_{\ell,i}^{(i,j)},w^{(j,j')}_{i,h,1}\},\frac{3}{2}m+2h+4)$, $(\{w^{(j,j')}_{i,h,1},w^{(j,j')}_{i}\},\frac{3}{2}m+2h+5)$, $(\{w^{(j,j')}_{i},w^{(j,j')}_{i,h,2}\},\frac{3}{2}m+2n+2h+4)$, $(\{w^{(j,j')}_{i,h,2},v_{\ell',i}^{(i,j')}\},\frac{3}{2}m+2n+2h+5)$ in $\mathcal{G}'$. And there are edge selection gadgets $\mathcal{C}_{i,j''}$ and $\mathcal{C}_{i,j'''}$ that are connected by time edges $(\{v_{\ell'',i}^{(i,j'')},w^{(j,j''')}_{i,h',1}\},\frac{3}{2}m+2h'+4)$, $(\{w^{(j'',j''')}_{i,h',1},w^{(j'',j''')}_{i}\},\frac{3}{2}m+2h'+5)$, $(\{w^{(j'',j''')}_{i},w^{(j'',j''')}_{i,h',2}\},\frac{3}{2}m+2n+2h'+4)$, $(\{w^{(j'',j''')}_{i,h',2},v_{\ell''',i}^{(i,j''')}\},\frac{3}{2}m+2n+2h'+5)$ in $\mathcal{G}'$. Assume that $j\neq j'''\neq j'$ (the other cases are analogous). Consider the connection of $\mathcal{C}_{i,j}$ and $\mathcal{C}_{i,j'''}$ in $\mathcal{G}'$. Assume this is formed by time edges $(\{v_{\dot{\ell},i}^{(i,j)},w^{(j,j''')}_{i,h'',1}\},\frac{3}{2}m+2h''+4)$, $(\{w^{(j,j''')}_{i,h'',1},w^{(j,j''')}_{i}\},\frac{3}{2}m+2h''+5)$, $(\{w^{(j,j''')}_{i},w^{(j,j''')}_{i,h'',2}\},\frac{3}{2}m+2n+2h''+4)$, $(\{w^{(j,j''')}_{i,h'',2},v_{\ddot{\ell},i}^{(i,j''')}\},\frac{3}{2}m+2n+2h''+5)$. 

Assume for contradiction that $v_{\ell,i}^{(i,j)}\neq v_{\dot{\ell},i}^{(i,j)}$. Then all vertices in $\mathcal{C}_{i,j}$ need to reach both $v_{\ell,i}^{(i,j)}$ and $ v_{\dot{\ell},i}^{(i,j)}$ via low labels. However, recall that there are only $\frac{3}{2}|E_{i,j}|-1$ different low labels and $\mathcal{C}_{i,j}$ is a cycle of size $3|E_{i,j}|$ (see \cref{fig:edgeselection,fig:edgeselection2}). 
We can make the following observations. There is no temporal path in $\mathcal{C}_{i,j}$ that only uses low labels and visits more than $\frac{3}{2}|E_{i,j}|$ vertices. Hence, in $\mathcal{G}'$ there is one temporal path of length $\frac{3}{2}|E_{i,j}|-1$ ending in $v_{\ell,i}^{(i,j)}$ and only using low labels, that ensures $\frac{3}{2}|E_{i,j}|$ of the vertices of $\mathcal{C}_{i,j}$ can reach $v_{\ell,i}^{(i,j)}$.
It follows that one time edge incident with $v_{\ell,i}^{(i,j)}$ in $\mathcal{G}'$ that has the largest low label. Furthermore, there needs to be a temporal path in $\mathcal{G}'$ of length $\frac{3}{2}|E_{i,j}|-2$ ending in $v_{\ell,i}^{(i,j)}$ and only using low labels, that ensures the remaining $\frac{3}{2}|E_{i,j}|-1$ vertices of $\mathcal{C}_{i,j}$ can reach $v_{\ell,i}^{(i,j)}$. This implies that each underlying edge of $\mathcal{C}_{i,j}$ is contained at most once as a time edge with a low label in $\mathcal{G}'$. Intuitively, we have shown that all strict temporal spanners for $\mathcal{C}_{i,j}$ have the form illustrated in \cref{fig:edgeselection2}. 

By the same argument, we have that one time edge incident with $v_{\dot{\ell},i}^{(i,j)}$ in $\mathcal{G}'$ that has the largest low label. Since $v_{\ell,i}^{(i,j)}$ and $v_{\dot{\ell},i}^{(i,j)}$ are not neighbors, there are two time edges of $\mathcal{C}_{i,j}$ in $\mathcal{G}'$ that do not share a common endpoint and have the largest low label. Together with the observations above, this contradicts that all vertices in $\mathcal{C}_{i,j}$ can reach $v_{\ell,i}^{(i,j)}$ in $\mathcal{G}'$ via temporal paths that only use low labels. We can conclude that $v_{\ell,i}^{(i,j)}= v_{\dot{\ell},i}^{(i,j)}$. This implies that $v_h=v_{h''}$. By an analogous argument, we can show that $v_{\ell''',i}^{(i,j''')}= v_{\ddot{\ell},i}^{(i,j''')}$, which implies that $v_{h'}=v_{h''}$. Overall, we obtain that $v_h=v_{h'}$ and hence that the clique $C$ contains exactly one vertex of each color.

Finally, assume for contradiction that $C$ contains vertices $v_h\in V_i$ and $v_{h'}\in V_j$ that are not neighbors. Consider the edge selection gadget $\mathcal{C}_{i,j}$. Since $v_h\in V_i\cap C$ we have that $\mathcal{C}_{i,j}$ and $\mathcal{C}_{i,j'}$ that are connected by time edges $(\{v_{\ell,i}^{(i,j)},w^{(j,j')}_{i,h,1}\},\frac{3}{2}m+2h+4)$, $(\{w^{(j,j')}_{i,h,1},w^{(j,j')}_{i}\},\frac{3}{2}m+2h+5)$, $(\{w^{(j,j')}_{i},w^{(j,j')}_{i,h,2}\},\frac{3}{2}m+2n+2h+4)$, $(\{w^{(j,j')}_{i,h,2},v_{\ell',i}^{(i,j')}\},\frac{3}{2}m+2n+2h+5)$ in $\mathcal{G}'$ for some $\ell,\ell'$. In particular, we have that $v_h$ is an endpoint of edge $e_\ell$.
Analogously, since $v_{h'}\in V_j\cap C$ we have that $\mathcal{C}_{i,j}$ and $\mathcal{C}_{i',j}$ that are connected by time edges $(\{v_{\ell'',j}^{(i,j)},w^{(i,i')}_{j,h',1}\},\frac{3}{2}m+2h'+4)$, $(\{w^{(i,i')}_{j,h',1},w^{(i,i')}_{j}\},\frac{3}{2}m+2h'+5)$, $(\{w^{(i,i')}_{j},w^{(i,i')}_{j,h',2}\},\frac{3}{2}m+2n+2h'+4)$, $(\{w^{(i,i')}_{j,h',2},v_{\ell''',j}^{(i',j)}\},\frac{3}{2}m+2n+2h'+5)$ in $\mathcal{G}'$ for some $\ell'',\ell'''$. This means that $v_{h'}$ is an endpoint of edge $e_{\ell''}$.

By the same arguments as made earlier, we have that all vertices in $\mathcal{C}_{i,j}$ need to reach $v_{\ell,i}^{(i,j)}$ in $\mathcal{G'}$ via temporal paths that only use low labels. And we have that all vertices in $\mathcal{C}_{i,j}$ need to reach $v_{\ell'',j}^{(i,j)}$ in $\mathcal{G'}$ via temporal paths that only use low labels. This implies that one time edge incident with $v_{\ell,i}^{(i,j)}$ in $\mathcal{G}'$ has the largest low label and one time edge incident with $v_{\ell'',j}^{(i,j)}$ in $\mathcal{G}'$ has the largest low label. We know that if $v_{\ell,i}^{(i,j)}$ and $v_{\ell'',j}^{(i,j)}$ are not neighbors, this leads to a contradiction. Hence, we have that $v_{\ell,i}^{(i,j)}$ and $v_{\ell'',j}^{(i,j)}$ are neighbors. This implies that $e_\ell=e_{\ell''}=\{v_h,v_{h'}\}$, a contradiction to the assumption that $v_h$ and $v_{h'}$ are not neighbors. We can conclude that $C$ is indeed a clique of size $k$ in $G$.
\end{proof}

Now we have all the pieces to prove \cref{thm:w1hardness}.

\begin{proof}[Proof of \cref{thm:w1hardness}]
Given an instance of \textsc{Multicolored Clique}, the temporal graph $\mathcal{G}$ can clearly be computed in polynomial time. By \cref{lem:connected} we have that $\mathcal{G}$ is strictly temporally connected and by \cref{lem:fvn} we have that the feedback vertex number of $\mathcal{G}$ is bounded in the number of colors of the \textsc{Multicolored Clique} instance. By \cref{lem:corr1,lem:corr2} we have that the reduction is correct. Since \textsc{Multicolored Clique} is known to be W[1]-hard when parameterized by the number of colors~\cite{fellows2009multipleinterval}, the result follows.
\end{proof}

\subparagraph{Making the Reduction Proper.} Now we explain how to modify the above-described reduction such that it produces a proper temporal graph. In many cases, we can apply a similar ``trick'' as in the reduction presented in \cref{sec:hardness}, that is, take all time edges with the same label, say $t$, order them in an arbitrary but fixed way, spread all labels out sufficiently far, and then place all edges with label $t$ ``between'' the edges with label $t-1$ and $t+1$ according to their ordinal positions in the ordering. (A more formal description of this procedure is given in \cref{sec:hardness}.) In the following, we refer to this procedure as \emph{pertubating} the labels. Note that pertubating the labels never ``destroys'' strict temporal connections, but may introduce new ones, which may lead to problems in the reduction described in this section. In the following, we give a description for each gadget of the reduction on how to make it proper. We omit some technical details and present the main ideas.
\begin{itemize}
    \item \emph{Edge Selection Gadgets:} The edge selection gadgets are essentially cycles with a length which is divisible by three, and every third edge in the cycle corresponds to an edge of the graph of the \textsc{Multicolored Clique} instance (see \cref{fig:edgeselection}). Each edge in the cycle is labelled with the same set of labels, which is composed of low labels and high labels. 

    We modify this gadget as follows. We introduce one new edge subdivision for each edge of the graph of the \textsc{Multicolored Clique} instance. Now, in particular, all edge selection gadgets are cylces of even length. Now, we multiply all labels (in the whole constructed instance) by two. Finally, we increase all labels of every other edge of the cycles (the offset can be chosen arbitrarily) by one. Now every second edge has either only even or only odd labels. It follows that the edges with the same label form a matching, and hence the edge selection gadgets have a proper labeling. Since the size of the edge selection gadgets increases, several parameters in the reduction need to be adapted. We omit the details here.

    \item \emph{Adjacency Validators:} We pertubate the labels of the edges in the adjacency validators (see \cref{fig:validation}). It is straightforward to verify that this does not affect the correctness of the reduction. We omit further details.

    \item \emph{Connection Gadget:} We pertubate the labels of all edges in the connector gadget except the edges between hub vertices ($H$) and the edge selection gadgets (see \cref{fig:connector}). It is straightforward to verify that this does not affect the correctness of the reduction. We omit further details.

    We cannot do the same for the edges between hub vertices and the edge selection gadgets, since it is crucial for the correctness of the reduction that there are no temporal paths between vertices of the same edge selection gadget via vertices outside of the gadget. 
    Note that each edge between a hub vertex and a vertex from an edge selection gadget either has label 4 or label $c=\frac{3}{2}m+4kn+6$. We multiply all labels of all edges (in the whole graph) by two. Now, each edge between a hub vertex and a vertex from an edge selection gadget either has label 8 or label $2c$.
    We subdivide each edge between a hub vertex and a vertex from an edge selection gadget. If the edge had label 8 before the subdivision, we label the new edge between the edge selection gadget and the newly introduced vertex with 8, and the edge between the newly introduced vertex and the hub vertex with $8+1$. Symmetrically, if the edge had label $2c$ before the subdivision, we label the new edge between the edge selection gadget and the newly introduced vertex with $2c+1$, and the edge between the newly introduced vertex and the hub vertex with $2c$. This subdivision ensures that the connectivity properties between edge selection gadgets that are necessary for the correctness are preserved. 
    Now we can permute all labels of edges from hub vertices to newly introduces vertices. Finally, we add all newly introduced vertices to the set $W$ defined in the description of the connector gadget and add edges accordingly.

    The correctness of the reduction can be verified in an analogous way to the current proof. The subdivision and newly added edges do not (asymptotically) increase the feedback vertex number. Furthermore, adding the newly introduced vertices to $W$ makes sure that they can reach all other vertices and all other vertices can reach them.
\end{itemize}

\section{Conclusion}
In this work, we unify and strengthen several previous NP-hardness results for \textsc{Minimum Temporal Spanner}. Furthermore, we initiate the study of the parameterized complexity of the problem, and thereby providing the first non-trivial algorithm for \textsc{Minimum Temporal Spanner} on happy temporal graphs. Our work spawns several natural future research directions. In particular, we leave the following questions open.
\begin{enumerate}
    \item Can the XP-algorithm for \textsc{Minimum Temporal Spanner} parameterized by the vertex cover number be generalized for the non-happy setting?
    \item Can the XP-algorithm for \textsc{Minimum Temporal Spanner} parameterized by the vertex cover number be improved to an FPT-algorithm?
    \item Can we obtain an XP-algorithm for \textsc{Minimum Temporal Spanner} for a smaller parameter, such as the feedback vertex number?
    \item Can the W[1]-hardness for \textsc{Minimum Temporal Spanner} parameterized by the feedback vertex number be adapted to show a similar result in the happy setting?
\end{enumerate}

\bibliographystyle{abbrvnat}
\bibliography{bib}	

\end{document}